\documentclass[a4paper,UKenglish,cleveref, autoref, thm-restate]{lipics-v2021}

\hfuzz=100pt
\hbadness=10000

\usepackage{colortbl}
\definecolor{green}{rgb}{0.1,0.1,0.1}

\graphicspath{ {./Figs/} }

\newcommand{\poly}{\textrm{\textup{poly}}}
\newcommand{\supp}{\textrm{\textup{supp}}}
\usepackage{mathtools}

\crefname{ineq}{inequality}{inequalities}
\creflabelformat{ineq}{#2{\upshape(#1)}#3}

\usepackage[linesnumbered,lined,boxed,ruled,vlined]{algorithm2e}

\SetKwInOut{Parameters}{Parameters}
\SetKwComment{Comment}{$\triangleright$\ }{}
\SetKwInput{KwInput}{Input}
\SetKwInput{KwOutput}{Output}
\SetAlFnt{\small}
\SetAlCapFnt{\small}
\SetAlCapNameFnt{\small}
\SetAlCapHSkip{0pt}
\IncMargin{-\parindent}

\hideLIPIcs
\nolinenumbers

\newcommand{\omv}{\textnormal{\textsf{OMv}}}
\newcommand{\oumv}{\textnormal{\textsf{OuMv}}}
\newcommand{\ub}[1]{\ensuremath{O(N^{#1 - \epsilon})}}
\newcommand{\eb}[1]{\ensuremath{m^{#1 - \epsilon}}}
\newcommand{\ebn}[1]{\ensuremath{N^{#1 - \epsilon}}}

\DeclareMathOperator{\dist}{dist}

\newcommand{\pl}[1]{\ensuremath{\left\lfloor \frac{1}{\zeta(#1)} \cdot \frac{N}{d^{#1}} \right\rfloor}}

\usepackage{todonotes}

\title{Fine-Grained Complexity Lower Bounds for Families of Dynamic Graphs}
\author{Monika Henzinger}{Department of Computer Science, University of Vienna, Vienna, Austria}{monika.henzinger@univie.ac.at}{}{}
\author{Ami Paz}{LISN, CNRS \& Paris-Saclay University, France}{ami.paz@lisn.fr}{}{}
\author{A. R. Sricharan}{Department of Computer Science, UniVie Doctoral School Computer Science DoCS, University of Vienna, Vienna, Austria}{sricharan.arunapuram@univie.ac.at}{}{}

\funding{\textit{Monika Henzinger and A. R. Sricharan}: \flag{}This project has received funding from the
European Research Council (ERC) under the European Union's Horizon 2020
research and innovation programme (Grant agreement No.\ 101019564
``The Design of Modern Fully Dynamic Data Structures (MoDynStruct)''
and from the
Austrian Science Fund (FWF) project ``Fast Algorithms for a Reactive Network
Layer (ReactNet)'', P~33775-N, with additional funding from the \textit{netidee SCIENCE
Stiftung}, 2020--2024}

\EventEditors{Shiri Chechik, Gonzalo Navarro, Eva Rotenberg, and Grzegorz Herman}
\EventNoEds{4}
\EventLongTitle{30th Annual European Symposium on Algorithms (ESA 2022)}
\EventShortTitle{ESA 2022}
\EventAcronym{ESA}
\EventYear{2022}
\EventDate{September 5--9, 2022}
\EventLocation{Berlin/Potsdam, Germany}
\EventLogo{}
\SeriesVolume{244}
\ArticleNo{34}
\Copyright{Monika Henzinger, Ami Paz, and A. R. Sricharan}

\date{}
\ccsdesc{Theory of computation~Dynamic graph algorithms}
\keywords{
Dynamic graph algorithms,
Expander graphs,
Power-law graphs
}
\authorrunning{M. Henzinger, A. Paz, and A. R. Sricharan}

\begin{document}

\maketitle

\begin{abstract}
A dynamic graph algorithm is a data structure that answers queries about a property of the current graph while supporting graph modifications such as edge insertions and deletions. Prior work has shown strong conditional lower bounds for general dynamic graphs, yet graph families that arise in practice often exhibit structural properties that the existing lower bound constructions do not possess.
We study three specific graph families that are ubiquitous, namely constant-degree graphs, power-law graphs, and expander graphs, and give the first conditional lower bounds for them. Our results show that even when restricting our attention to one of these graph classes, any algorithm for fundamental graph problems such as distance computation or approximation or maximum matching, cannot simultaneously achieve a sub-polynomial update time and query time.
For example, we show that the same lower bounds as for general graphs hold for \emph{maximum matching} and \emph{($s,t$)-distance} in constant-degree graphs, power-law graphs or expanders. Namely, in an $m$-edge graph, there exists no dynamic algorithms with both $ O(m^{1/2 - \epsilon})$ update time and $ O(m^{1 -\epsilon})$ query time, for any small $\epsilon > 0$. Note that for ($s,t$)-distance the trivial dynamic algorithm achieves an almost matching upper bound of constant update time and $O(m)$ query time.
We prove similar bounds for the other graph families and for other fundamental problems such as densest subgraph detection and perfect matching.
\end{abstract}
\newpage
\section{Introduction}
\label{sec:Introduction}
A dynamic graph algorithm is a data structure that stores a graph and supports update operations, usually consisting of edge insertions and deletions, as well as query operations that ask about a specific property of the graph.
The introduction of strong conditional lower bounds based on widely-believed complexity assumptions~\cite{AbboudD16,HenzingerKNS15} has had a fundamental influence on the field, pushing the design of new algorithms towards more specialized algorithms such as partially-dynamic or even offline-dynamic algorithms or towards approximate solutions. However, graphs arising in real-world applications often differ significantly from the very specifically crafted  graphs for which  the lower bound results are shown.
Frequently, real-world graphs have some special structure, such as having a power-law degree distribution, a constant degree,  or being planar. Expanders, on the other side, have recently
been used to design dynamic algorithms for \emph{general} graphs.
This naturally leads to the question of determining the complexity of dynamic graph algorithms for these graph classes, and this is exactly the question investigated in this paper.

While the complexity of dynamic graph algorithms for planar graphs has already been studied quite extensively~\cite{DBLP:conf/esa/Subramanian93,DBLP:journals/jcss/HenzingerKRS97,klein1998fully,DBLP:conf/stoc/ItalianoNSW11,abraham12fully,Abraham2016,AbboudD16,DBLP:conf/stoc/ItalianoKLS17,CharalampopoulosK20}, the question is still widely open for other families of graphs, including power-law graphs, constant-degree graphs, and expanders.
Certain problems become easier for these graph classes:
As an $N$-node\footnote{To avoid confusion with the parameter $n$ and matrix $M$ used in the online-matrix-vector multiplication conjecture, we use $N$ to denote the number of vertices and $m$ the number of edges in the dynamic graphs.} constant-degree graph has $O(N)$ edges, computing all-pairs shortest paths (APSP) takes only time $\tilde O(N^2)$, while the popular APSP conjecture postulates that for general graphs, there exists a small constant $c > 0$ such that any algorithm in the word RAM model with $O(\log N)$-bit words requires $N^{3-o(1)}$ expected time
to compute APSP.
Moreover, some problems become trivial in these graph classes, e.g., computing shortest paths with logarithmic additive error on expander graphs is trivial, due to their low diameter.

In this paper we will concentrate on graph problems that have real-world applications such as shortest-paths (which has applications in online navigation), matching (which has applications in reconfigurable datacenters), and densest subgraphs (which has applications in network analysis),
yet we believe that our general approach can be applied to further graph problems.
For these three problems,
the known conditional lower bounds construct graphs that are far from being in the classes we consider: They have maximum degree $\Omega(N)$ and
small cuts, and their degree distribution is unknown as it depends on the instance that is postulated to be hard.

\subparagraph{Constant-degree graphs.} Various dynamic graph problems that admit strong lower bounds in general graphs have very efficient algorithms on constant-degree graphs.
Let $\Delta$ be the maximum degree in the graph.
For \emph{local} problems, where the solution at a node $v$ can be computed by simply analyzing information stored at the  neighbors of $v$ such as maintaining a maximal matching,
a  maximal independent set, or a $(\Delta+1)$-vertex coloring,
there exist simple dynamic algorithms with $O(\Delta)$ update time and constant query time.
Additionally, for various problems that count or detect certain fixed subgraphs with $c$ nodes (such as  triangle counting for $c=3$)
there exists  dynamic algorithms with $O(\Delta^{c-1})$ update time and constant  query time, even though they have polynomial conditional lower bounds in general graphs (see Table~\ref{tab:easy_on_const_degree}).
These efficient algorithms for local problems rule out the possibility of any non-trivial lower bound in the constant-degree setting.

Furthermore, even for the non-local problem of maintaining a maximum matching
Gupta and Peng~\cite{GuptaPeng13} designed a $(1+\delta)$-approximation algorithm for any small $\delta > 0$
that runs in $O\left( \min \left\{ \frac{m}{|M(t)|}, |M(t)| \right\} \delta^{-2} \right) $ amortized time per update, where $M(t)$ denotes the maximum cardinality matching after the  $t$-th update operation.
As in a graph with maximum degree $\Delta$ it holds that $|M(t)| \ge m/(2\Delta)$, their algorithm achieves
an amortized update time of $O(\Delta\, \delta^{-2})$, which is
$O(\delta^{-2})$ in constant-degree graphs.
This raises the question of how efficiently other non-local dynamic graph problems such exact maximum matching, shortest paths, and densest subgraph  can be solved
in dynamic constant-degree graphs and whether it is possible to show  (conditional) lower bounds for them.

\begin{table}[t]
    \small
    \renewcommand{\arraystretch}{1.2}
    \centering
    \begin{tabular}{|c|c|c|c|c|c|c|c|}
    \hline
    \multirow{3}{*}{\textbf{Problems}} & \multicolumn{5}{c|}{Lower bounds} & \multicolumn{2}{c|}{Upper bounds} \\ \cline{2-8}
    & \multicolumn{2}{c|}{General graphs~\cite{HenzingerKNS15}} & \multicolumn{3}{c|}{Erd{\H o}s-R\'enyi avg-case~\cite{henzinger2022complexity}} & \multicolumn{2}{c|}{constant $\Delta$ (trivial)} \\ \cline{2-8}
    & $u(m,N)$ & $q(m,N)$ & $p(m,N)$ & $u(m,N)$ & $q(m,N)$ & $u(m,N)$ & $q(m,N)$ \\ \hline
    Triangle & \multirow{6}{*}{\eb{1/2}} & \multirow{6}{*}{\eb{1}} & \multirow{4}{*}{\ebn{3}} & \multirow{4}{*}{\eb{1/2}} & \multirow{4}{*}{\eb{1}} & \multirow{2}{*}{$\Delta$} & \multirow{2}{*}{$1$} \\[-1pt]
    counting & & & & & & & \\ \cline{1-1} \cline{7-8}
    $C_4$ subgraph & & & & & & \multirow{2}{*}{$\Delta^3$} & \multirow{2}{*}{$1$}\\[-1pt]
    counting & & & & & & & \\ \cline{1-1} \cline{4-8}
    $5$-length & & & \multirow{2}{*}{\ebn{2}} & \multirow{2}{*}{\ebn{\omega-2}} & \multirow{2}{*}{1} & \multirow{2}{*}{$\Delta^{4}$} & \multirow{2}{*}{$1$} \\[-1pt]
    $(s,t)$-path count & & & & & & & \\ \hline

    \end{tabular}
    \caption{Counting problems which admit polynomial conditional lower bounds on general graphs (amortized) and on Erd{\H o}s-R\'enyi graphs (average case), but have algorithms with constant update and query times in constant-degree graphs. For the lower bounds above, there is no dynamic algorithm with pre-processing time $p(m,N)$, update time $u(m,N)$, and query time $q(m,N)$ unless the \omv{} conjecture is false. When $p(m,N)$ is unspecified, $\poly(N)$ pre-processing time is allowed.}
    \label{tab:easy_on_const_degree}
\end{table}

\subparagraph{Expanders.}
Expander decompositions are increasingly becoming a central tool for designing  dynamic graph algorithms with improved running time bounds for various graph problems such as connectivity, minimum spanning tree, shortest paths, conductance, edge-connectivity, maximum flows, and minimum cuts~\cite{DBLP:journals/corr/abs-2004-07650,DBLP:conf/soda/GoranciRST21,DBLP:conf/soda/ChuzhoyS21,DBLP:conf/stoc/Chuzhoy21}. One of the central subproblems that these algorithms have to handle is to solve a graph problem on a dynamically changing expander.
To understand the limitations of this approach it is crucial to understand which problems can be solved efficiently on expanders, and which cannot.
We present novel lower bounds for dynamic problems on expanders, more specifically on constant-degree expanders.

Note that these results also have an interesting connection to the average-case hardness of dynamic graph algorithms.
Recently, lower bounds on the average-case hardness were shown for various subgraph counting problems  in dynamic {Erd{\H{o}}s-R{\'{e}}nyi} graphs (see Table~\ref{tab:easy_on_const_degree} for some of them)~\cite{henzinger2022complexity}.
As random graphs are usually expanders,
giving lower bounds for a problem on dynamic expanders gives an indication that this problem might also be hard in the average case and can motivate further work in this direction.

\subparagraph{Power-law graphs.} Graphs are called \emph{power-law graphs} if the fraction of nodes with degree $d$ is proportional to $1/d^c$ for some constant $c > 0$.
Static and dynamic power-law graphs arise surprisingly often in real-world networks, such as the Internet, the world-wide web, and citation graphs, as well as in physics, linguistics, and economics.
Even though the existence of large dynamic power-law graphs was already pointed out in~2004~\cite{graham2004large}, no efficient dynamic algorithms have been developed specifically for this class of graphs. This leads to the question of whether sub-polynomial time dynamic algorithms are even possible for power-law graphs or not.
In fact,  dynamic power-law graphs were not only never studied, they were not even defined---removing even a single edge from a power-law graph changes the degree distribution and thus violates the power-law distribution.
Hence, we first present several definitions of \emph{dynamic} power-law graphs, where some slackness in the degree-distribution is allowed. Then we prove lower bounds that hold for all of these dynamic power-law graph definitions.

\subsection{Our Results}
\label{subsec:Our_Results}
Throughout the paper
we use the standard assumption that queries output one value, such as the size, length or weight of the solution.
Note that this makes it only more challenging to prove lower bounds.
All our results are conditioned on the popular \omv{} conjecture~\cite{HenzingerKNS15}, defined in Section~\ref{sec:Preliminaries}, but to  simplify the terminology we usually drop the word ``conditional''.
Our results are also summarized in
\cref{tab:our_results}.
\begin{enumerate}
\item \emph{Main results.}
We study the hardness of dynamic algorithms for (i) constant-degree graphs, (ii) expanders, and (iii) power-law graphs, for the following graph problems:  Determining (a) the size of a maximum matching, (b)  the length of the $(s,t)$ shortest-path (i.e. \emph{$(s,t)$-distance}), and (c)  the density of the densest subgraph. Specifically, we show the following tradeoff between the update time $u$ and the query time $q$ in an $m$-edge graph for maximum matching and $(s,t)$-distance:
There is no dynamic algorithm which achieves both $u = O(m^{1/2 - \epsilon})$ and $q = O(m^{1 -\epsilon})$ for any small $\epsilon > 0$.
Note that these bounds match the bounds given for general graphs in~\cite{HenzingerKNS15} and that the lower bound for $(s,t)$-distance is almost tight as the simple algorithm that only records the edge change at update time and computes the solution from scratch at query time achieves
$u=O(1)$ and $q = O(m)$.
For densest subgraph we show that
there is no dynamic algorithm which achieves both $u = O(N^{1/4 - \epsilon})$ and $q = O(N^{1/2 -\epsilon})$ for any small $\epsilon > 0$, which is weaker than the lower bound on general graphs (of $u = \ub{1/2}$ and $q=\ub{1}$).

The only relevant prior work are conditional lower bounds for planar graphs~\cite{AbboudD16}, which have constant degree: In unweighted graphs they show for all-pairs-shortest paths a weaker tradeoff between update time $u$ and query time $q$ than we do, namely they prove
$\max(u^2 \cdot q, u \cdot q^2) = \Omega(m^{1 - o(1)})$. In \emph{weighted} graphs they show for ($s,t$)-distance a tradeoff of
$\max(u, q) = \Omega(m^{1/2 - o(1)})$.
Note that our result is stronger as it shows that in \emph{unweighted} graphs no algorithm with $u = \Omega(m^{49/100})$
and $q = \Omega(m^{99/100})$
is possible.

\item \emph{Degree--lower bound trade-off.}
While the constant-degree lower bounds are equal to the lower bounds for general graphs in terms of $m$, they are naturally quadratically lower in terms of the number of nodes $N$.
To understand the behaviour of the bounds also with respect to $N$,
we extend our constant-degree lower bounds for maximum matching, perfect matching, and $(s,t)$-distance to graphs with maximum degree $O(N^t)$, for any $0 \leq t \le 1$.
We show the following result:
There is no dynamic algorithm which achieves both $u = O(N^{(1+t)/2 - \epsilon})$ and $q = O(N^{1 + t -\epsilon})$ in a graph with maximum
degree $O(N^t)$ for any $\epsilon > 0$.
These results hold even in bipartite graphs.
Note that for $t=1$ these results match exactly the bounds for general graphs in~\cite{HenzingerKNS15},
and for $t=0$, they match the aforementioned results for constant-degree graphs.

\item \emph{Approximation results.}
In constant-degree graphs we extend the lower bound to the problem of approximating the  $(s,t)$-distance within a factor of $3-\delta$, for any small constant~$\delta$. This naturally extends the $(3-\delta)$-approximation lower bounds on general graphs to the constant-degree case.
In planar graphs, the $(s, t)$-distance lower bound holds only for exact answers.

Note that a similar extension to approximation algorithms is not possible for maximum cardinality matching and for densest subgraph: (a) For maximum matching, for any small $\delta > 0$ the above-mentioned $(1+\delta)$-approximation algorithm~\cite{GuptaPeng13} achieves an
amortized update time of $O(\delta^{-2})$, which is constant for
a constant $\delta$, thereby precluding any non-trivial lower bounds for approximate maximum matching in the constant-degree setting. Stated differently, \emph{our work shows an interesting dichotomy for dynamic matching matching in constant-degree graphs}: For the exact setting there is no dynamic algorithm which achieves both $u = O(m^{1/2 - \epsilon})$ and $q = O(m^{1 -\epsilon})$ for any small $\epsilon > 0$, while a $(1+\delta)$-approximation can be achieved in constant time, for any small $\delta >0$.
(b) \emph{The same dichotomy arises for densest subgraph}: For any small $\delta>0$ there exists a $(1-\delta)$-approximation algorithm with polylogarithmic time per operation~\cite{sawlani20near}, while we show a polynomial lower bound for the exact value.

\item \emph{Partially dynamic algorithms.}
We extend the constant-degree reductions for maximum matching and $(s,t)$-distance
also to the insertions-only and the deletions-only setting, achieving the same lower bound as in the fully dynamic setting.

\item\emph{Perfect matching.}
A special case of maximum cardinality matching is determining
whether a perfect matching exists in a bipartite graph.
For constant-degree graphs and expander graphs we show
the following lower bound: There is no dynamic algorithm which achieves both $u = O(m^{1/2 - \epsilon})$ and $q = O(m^{1 -\epsilon})$ for any small $\epsilon > 0$.
This can also be extended to the varying-degree setting.

\end{enumerate}

To summarize, our paper opens up the research field of dynamic graph algorithms for more specific, practical graph classes, in contrast with previous work that concentrated on general or planar graphs. We believe that our techniques can be extended to further classes of dynamic graphs
or even in other domains of theoretical computer science, such as distributed graph algorithms or streaming algorithms.
One further interesting implication of our work is presenting the limitations of dynamic graph algorithms on expanders, thus complementing recent algorithmic results that use expander decompositions in dynamic graphs.

\begin{table}[t]
    \newcommand{\partialline}{\cline{2-3}}
    \small
    \renewcommand{\arraystretch}{1.1}
    \centering
    \begin{tabular}{|c|c|c|c|c|}
    \hline
    \textbf{Problem} & \textbf{Class} & \textbf{Section} & $u(m,N)$ & $q(m,N)$ \\ \hline
    \multirow{5}{*}{Maximum Matching} & $\Delta \le 3$ & \Cref{ssec:match_const_degree} & \multirow{5}{*}{\eb{1/2}} & \multirow{5}{*}{\eb{1}} \\\partialline
                  & constant degree \& expansion & \Cref{ssec:match_expand_proof} & & \\\partialline
                  & power-law graphs & \Cref{ssec:match_powerlaw} & & \\\partialline
                  & $\Delta \le 3$, partially dynamic & \Cref{ssec:match_partial} & & \\\cline{2-5}
                  & $\Delta \le N^t$ & \Cref{ssec:match_vary_degree} & \ebn{(1+t)/2} & \ebn{1+t} \\
    \hline
\multirow{6}{*}{$(s,t)$-distance} & $\Delta \le 3$ & \Cref{ssec:st_const_degree} & \multirow{6}{*}{\eb{1/2}} & \multirow{6}{*}{\eb{1}} \\\partialline
                                  & $(3-\delta)$-approx, $\Delta \le 3$ & \Cref{ssec:st_const_apx} & & \\\partialline
                                  & constant degree \& expansion & \Cref{ssec:st_expand} & & \\\partialline
                                  & power-law graphs & \Cref{ssec:st_powerlaw} & & \\\partialline
                                  & $\Delta \le 3$, partially dynamic & \Cref{ssec:st_partial} & & \\\cline{2-5}
                  & $\Delta \le N^t$ & \Cref{ssec:st_vary_degree} & \ebn{(1+t)/2} & \ebn{1+t} \\
    \hline
\multirow{3}{*}{Densest Subgraph} & $\Delta \le 5$ & \Cref{ssec:dense_const_degree} & \multirow{3}{*}{\ebn{1/4}} & \multirow{3}{*}{\ebn{1/2}} \\\partialline
    & constant degree \& expansion & \Cref{ssec:dense_expand} & & \\ \partialline
    & power-law graphs & \Cref{ssec:dense_powerlaw} & & \\
    \hline
    \end{tabular}
    \caption{Our results for graphs on $N$ nodes with $m$ edges. For every $u$ and $q$ stated above, there is no algorithm for the corresponding problem with amortized $O(u(m,N))$ update time and $O(q(m,N))$ query time simultaneously unless the \omv{} conjecture is false. The first three rows hold also for perfect matching. All the lower bounds in the table except for densest subgraph match the general \omv{} lower bounds\footnotemark.}
    \label{tab:our_results}
\end{table}

\subsection{Our Techniques}
\label{subsec:Techniques}
We prove lower bounds by reductions from the online matrix vector (\omv) conjecture~\cite{HenzingerKNS15}.
In these reductions, the input of an online problem, which is an $n\times n$ matrix $M$ and a sequence of $n$ pairs $(u,v)$ of $n$-vectors, is translated into a dynamic graph.
The reduction is built so that there exists a pair $(u,v)$ satisfying $uMv=1$ if and only if the dynamic graph has some desired property at some point of time.
While we follow the general framework of \omv{} lower bounds, the details are delicate,
as the dynamic graphs we construct should fall into specific graph classes at all times, while still maintaining the graph property under consideration.
We give a high-level overview of our reductions below.

One way to turn known \omv{}-to-dynamic graphs reductions into reductions that produce bounded-degree graphs is by replacing high-degree nodes by bounded-degree trees.
This technique has a rather clear and straightforward effect on the distances in the graph, so it is applicable when considering distance-related problems.
This, however, is far from being the case when considering other problems, such as maximum matchings.
Here, replacing a high-degree node with a gadget could adversely affect the desired matching size, since the gadget might create several augmenting paths that would not have existed when it was a single high-degree node.
To overcome this, we limit the possible maximum matching sizes, by designing a reduction graph with bounded-degree gadgets composed of paths, where the maximum matching is always either a perfect matching, or a near-perfect matching,
i.e., the matching size is either $N/2$ or $N/2-1$.
This reduction thus involves a large matching and a small gap between the $uMv=0$ and $uMv=1$ cases, and hence cannot be extended to achieve a lower bound for the approximation of the maximum matching size.
While this might seem as a limitation of our construction, recall that this is actually not the case: As described above, for any small $\delta > 0$ there is a constant time $(1+\delta)$-approximation dynamic algorithm for the problem, and, thus, such a lower bound cannot exist.

An even more delicate reduction we present is for proving a lower bound on the densest-subgraph problem.
A straightforward reduction would change $O(n)$ graph-edges for every bit of the input, which will allow us to make sure that the density of the densest subgraph changes by a significant amount when $uMv = 0$ versus when $uMv = 1$.
However, this would involve $O(n^2)$ updates for each $(u,v)$ input pair, and the reduction would fail to yield any non-trivial lower bound.
Thus, we are forced to change very few edges for each input bit, which renders an almost negligible effect on the density, making it difficult to control the exact density of the densest subgraph.
Our reduction balances these two factors, using a construction where each gadget is a sufficiently dense regular graph, while having each bit of the input translate into the existence or nonexistence of merely two edges inside specific gadgets.
As in the case of matchings, our lower bounds cannot be extended to approximations, as for any $\delta > 0$ there exists a fast algorithm with polylogarithmic update time for computing $(1-\delta)$-approximations to the densest subgraph.

We then extend these reductions from bounded-degree graphs to constant-degree \emph{constant-expansion} graphs.
First, the standard lower bound reductions contain sparse cuts if the inputs $M,u$ or $v$ are sparse, making a standard reduction graph far from being an expander.
Thus, we have to augment the graph with many more edges to make sure that it has no sparse cuts regardless of $M,u$ and $v$.
We do this augmentation ``inside a layer'' to prevent the additional edges from creating undesired short paths between $s$ and $t$, or spurious augmenting paths in the case of matchings.
Sparse cuts also exist in parts of the graph that do not depend on $M,u$ or $v$, and to handle these, we add edges of a constant-degree expander between a well-chosen set of nodes, thus guaranteeing the expansion without changing the required graph property.
Finally, in the case of distance-related problems, we note that expander graphs can have at most logarithmic diameter, but the substitution of nodes by trees described above increases the diameter to be at least logarithmic, leaving only a very small slack for our construction.

When studying densest subgraphs on expanders, adding edges in order to avoid sparse cuts might change the location and structure of the densest subgraph in an undesired way.
In order to guarantee the expansion in this case, we add a copy of all the graph nodes, build a constant-degree expander on the copies of the nodes, and then connect each node to its copy by a matching.

In dynamic power-law graphs where the node degrees may depend on the inputs $u,M,v$ and change over time, we have to guarantee that the degree changes incurred by the processing of different inputs do not cause a violation of the power-law distribution of degrees.
As before, all the changes must also be done without changing the graph property under consideration, and without performing too many update operations.
We address this problem by inserting or deleting edges in an online fashion in other parts of the reduction graph, to compensate for the changes incurred by processing the input vector pairs.

\subparagraph{Organisation}
\label{ssec:Organization}
\Cref{sec:Preliminaries} has notation and definitions.
\Cref{sec:Matching_LBs} presents the dynamic maximum matching lower bounds,
\Cref{sec:full_st_LBs} presents the dynamic $(s, t)$-distance lower bounds, and
\Cref{sec:full_Densest_Subgraph_LBs} presents the dynamic densest subgraph lower bounds.
The lower bounds for the partially dynamic setting are deferred to the appendix.

\section{Preliminaries}
\label{sec:Preliminaries}

Throughout the paper, we consider vectors and matrices that are boolean, and so a vector-matrix-vector multiplication outputs a single bit.
Henzinger \textit{et al.}~\cite{HenzingerKNS15} define the \emph{Online Matrix Vector} (\omv{}) and the \emph{Online Vector Matrix Vector} (\oumv{}) multiplication problems.

\begin{definition}[Online Matrix Vector Multiplication]
Let $M$ be a boolean $n \times n$ matrix. Preprocessing the matrix is allowed. Then, $n$ vectors $v^1, v^2, \ldots, v^n$ arrive one at a time, and the task is to output the product $Mv^i$ before the next vector is revealed.
\end{definition}

\begin{definition}[Online Vector Matrix Vector Multiplication]
Let $M$ be a boolean $n \times n$ matrix. Preprocessing the matrix is allowed. Then, $n$ vector pairs $(u^1, v^1), (u^2, v^2), \ldots, (u^n, v^n)$ arrive one at a time, and the task is to output the bit $u^iMv^i$ before the next vector pair is revealed.
\end{definition}
In their paper, they show that the \oumv{} problem can be reduced to the \omv{} problem, and conjecture that there is no truly subcubic time algorithm for \omv{}.

\begin{restatable}[\omv{}]{conjecture}{OMv}
\label{con:omv}
There is no algorithm for the \omv{} (and hence the \oumv{}) problem running in time $O(n^{3-\epsilon})$ for any constant $\epsilon >0$.
\end{restatable}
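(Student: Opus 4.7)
Since the final statement is a conjecture, not a theorem, my proposal cannot be a proof sketch in the usual sense: no unconditional $n^{3-o(1)}$ lower bound is within reach of current techniques in the word-RAM model. I would adopt the hardness of \omv{} as a working hypothesis, justified by the absence of any sub-cubic algorithm despite substantial effort and by the large web of conditional lower bounds that would all collapse under a refutation. The only genuinely provable content in the statement is the parenthetical transfer of the lower bound from \omv{} to \oumv{}, and the plan is to sketch a reduction establishing exactly that.

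The starting observation is the \emph{easy} direction, which is the one explicitly invoked in the excerpt: an \omv{} solver running in total time $T(n)$ immediately yields an \oumv{} solver in time $T(n) + O(n^2)$, by computing each product $M v^i$ with the \omv{} solver and then taking an inner product with $u^i$. This shows that \omv{} is at least as hard as \oumv{}, but by itself does not justify the word ``hence'': one still needs an algorithm for \oumv{} to imply one for \omv{}.

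For this reverse direction, my plan is to simulate a single \omv{} query $v^i$ by issuing the $n$ \oumv{} queries $(e_1, v^i), \ldots, (e_n, v^i)$ against the standard basis vectors $e_j$, recovering the output vector $M v^i$ coordinate by coordinate. Naive accounting inflates the number of queries by a factor of $n$ and only yields a running time of $O(n^{4-\epsilon})$, which is insufficient. To recover a genuinely sub-cubic \omv{} algorithm, one embeds an \omv{} instance of smaller dimension $n' = n^{1-\delta}$ into an \oumv{} instance of dimension $n$, using the additional slack to absorb the blow-up from the basis-vector queries, so that the total running time expressed in terms of $n'$ becomes $O((n')^{3-\epsilon'})$ for some $\epsilon'>0$. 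The main obstacle in such a plan is choosing the embedding parameter $\delta$ so that the trade-off closes tightly and no polylogarithmic factors erode the savings; this is standard fine-grained bookkeeping and is already carried out in Henzinger et al.~\cite{HenzingerKNS15}. Taking the contrapositive of this reduction then delivers the claimed lower bound for \oumv{} from the conjectured lower bound for \omv{}, which is all the ``hence'' requires.
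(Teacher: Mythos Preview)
You are right that the statement is a conjecture and cannot be proved; the paper handles it the same way, simply stating it and citing~\cite{HenzingerKNS15} for the relationship between \omv{} and \oumv{}.

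Where you go beyond the paper is in sketching the reduction behind the parenthetical ``hence'', and here there is a genuine gap. Embedding an \omv{} instance of dimension $n'=n^{1-\delta}$ into an \oumv{} instance of dimension $n$ and recovering each product coordinate-by-coordinate via basis-vector queries needs $(n')^2$ \oumv{} queries in total. Fitting these into a single \oumv{} run of dimension $n$ forces $(n')^2\le n$, hence $\delta\ge 1/2$; on the other hand, writing the \oumv{} time $n^{3-\epsilon}$ as $(n')^{(3-\epsilon)/(1-\delta)}$ and demanding it be sub-cubic in $n'$ forces $\delta<\epsilon/3$. These constraints are simultaneously satisfiable only when $\epsilon>3/2$, and even spreading the queries over several \oumv{} runs relaxes the threshold only to $\epsilon>1$. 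For the small-$\epsilon$ regime the conjecture is really about, no admissible $\delta$ exists, so this is not ``standard fine-grained bookkeeping''. The underlying obstruction is information-theoretic: one \oumv{} instance reveals only $n$ output bits, whereas \omv{} demands $n^2$, and padding the dimension cannot close that gap. The actual reduction in~\cite{HenzingerKNS15} does not follow the outline you describe.
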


We work with the \oumv{} problem for all the reductions in our paper. We denote the length of our input vectors $u^i, v^i$ by $n$, and thus the matrix $M$ is of dimension $n \times n$.
We use upper indices to indicate the vector's location in the stream, but usually focus on one pair $(u,v)$ omitting these indices.
We use lower indices for a location in the vector or matrix, e.g., $u_i$ and $M_{ij}$.
We use $N$ to denote the number of nodes in our reduction graph.

\begin{definition}[Expansion]
The expansion parameter of a graph $G=(V,E)$ is defined as
\[h = \min \left\{ \frac{|E(S,\overline{S})|}{|S|}\; \middle|\; \emptyset \neq S \subseteq V,\; |S| \le |V|/2 \right\}
\]
\end{definition}
where $E(S,\overline{S})$ is the number of edges from $S$ to $V\setminus S$. We call a graph with expansion $h$ a $h$-\emph{expander}. Works on dynamic algorithms use a different definition of expansion parameter $h'$, called \emph{volume expansion}.
However, when considering constant-degree graphs with constant expansion (as we do in this paper), both parameters are within a $\Delta$ factor of each other, so we only consider the expansion parameter $h$ in our proofs.

We study \emph{power-law} graphs as introduced in~\cite{Aiello2001}, and only consider the setting where $\beta > 2$. In the following definition, if the number of nodes $N$ in the graph is fixed, then we get that $\alpha$ is roughly $\ln N$.

\begin{definition}[Power Law]
A graph is said to follow an $(\alpha, \beta)$-power law distribution if the number $N_d$ of nodes with degree $d$ is inversely proportional to $d^{\beta}$ for some constant $\beta > 0$. That is,
\[
N_d = \left\lfloor \frac{e^{\alpha}}{d^{\beta}} \right\rfloor \approx \pl{\beta}
,\]
where $\zeta(\beta) = \sum_{i=1}^{\infty} 1/i ^\beta$ is the Riemann Zeta function.
\end{definition}

Since dynamic graphs allow edge insertions and deletions, it is impossible to maintain an exact degree distribution at all times. Hence, we introduce the notion of approximate power-law distributions to afford some slack for dynamic changes. One natural relaxation is to allow $\beta$ to vary within an interval.
\begin{definition}[$\beta$-Varying Power Law]
A graph is said to follow an $(\alpha, \beta_1, \beta_2)$-varying power law distribution if the number $N_d$ of nodes with degree $d$ satisfies
\[
\min \left\{ \pl{\beta_1}, \pl{\beta_2} \right\} \le
N_d \le
\max \left\{ \pl{\beta_1}, \pl{\beta_2} \right\}
,\]
\end{definition}

This relaxation of an exact power law, while being natural, is a global relaxation rather than a local one. Thus we also define two locally approximate definitions below that allow similar slack for all degrees.
\begin{definition}[Additively Approximate Power Law]
A graph is said to follow an $(\alpha, \beta, c)$-additively approximate power law distribution if the number $N_d$ of nodes of degree $d$ for a realisable degree $d$ satisfies \[
\pl\beta - c \le
N_d \le \pl\beta + c
\]
where we say that $d$ is a realisable degree if there is a node of degree $d$ in an $(\alpha, \beta)$-power law graph.
\end{definition}

\begin{definition}[Multiplicatively Approximate Power Law]
A graph is said to follow an $(\alpha, \beta, \epsilon)$-multiplicatively approximate power law distribution if the number $N_d$ of nodes of degree $d$ satisfies
\[
 \frac{1}{(1+\epsilon)} \cdot \pl\beta \le
 N_d \le (1+\epsilon) \cdot \pl\beta
\]
\end{definition}

Our lower bounds contain at most four nodes that are one degree away from an exact power-law distribution, and thus hold in all the models discussed above with any reasonable parameter regime. We note a couple of properties of power-law graphs that we use in our lower bounds.

The maximum realizable degree in a power law graph is $\left\lceil e^{\alpha/\beta} \right\rceil$, since \[
N_d \ge 1 \iff \left\lfloor \frac{e^{\alpha}}{d^{\beta}} \right\rfloor \ge 1 \iff d \le \left\lceil e^{\alpha/\beta} \right\rceil
.\] In terms of $N$, the maximum degree in a power-law graph is $\Delta = \left\lceil \left( N/\zeta(\beta) \right)^{1/\beta} \right\rceil < \sqrt{N} $.

We work in the setting where $\beta > 2$. Note that in this setting, the number of edges in the graph is given by \[
|E| = \frac{1}{2} \cdot \frac{\zeta(\beta-1)}{\zeta(\beta)} \cdot N
,\] which is linear in the number of nodes for a fixed $\beta$.

\section{Lower Bounds for Dynamic Maximum Matching}
\label{sec:Matching_LBs}

In this section, we present our lower bound results for the maximum cardinality matching problem. The previous matching lower bounds on general graphs~\cite{HenzingerKNS15, dahlgaardhardness2016} use reduction graphs that contain nodes with degree $\Omega(N)$. Towards showing a lower bound on expanders, we first sparsify the original reduction.

In \Cref{ssec:match_const_degree}, we give a lower bound for maximum matching on graphs with maximum degree $3$. In \Cref{ssec:match_vary_degree}, we show that the distinction between the unbounded and constant-degree reductions is not discrete, by giving a lower bound reduction parameterized on the maximum degree allowed in the graph. In \Cref{ssec:match_expand_proof}, we give a reduction graph that has constant expansion. Finally, we show our lower bounds for power-law graphs in \Cref{ssec:match_powerlaw}.

\subsection{Constant-Degree Graph}
\label{ssec:match_const_degree}

\begin{minipage}{.55\textwidth}
We first perform a simple reduction that shows that maintaining maximum matchings is hard even on graphs where the maximum degree is $3$.
We use the following gadget composed of paths to maintain matching properties in our reduction graph during sparsification---see \cref{fig:matching_gadgets}.
Our gadget construction starts by replacing each node of a dense reduction by a path; we refer to each path as a `subgadget'.
Connecting every node of this new subgadget with nodes outside the subgadget might create unwanted matchings of larger sizes, so instead we carefully choose a subset of the path nodes to connect outside the subgadget.
\end{minipage}
\hfill
\begin{minipage}{0.42\textwidth}
    \centering
    \captionsetup{type=figure}
    \includegraphics[scale=0.12,trim={24cm 2cm 0cm 0cm}, clip]{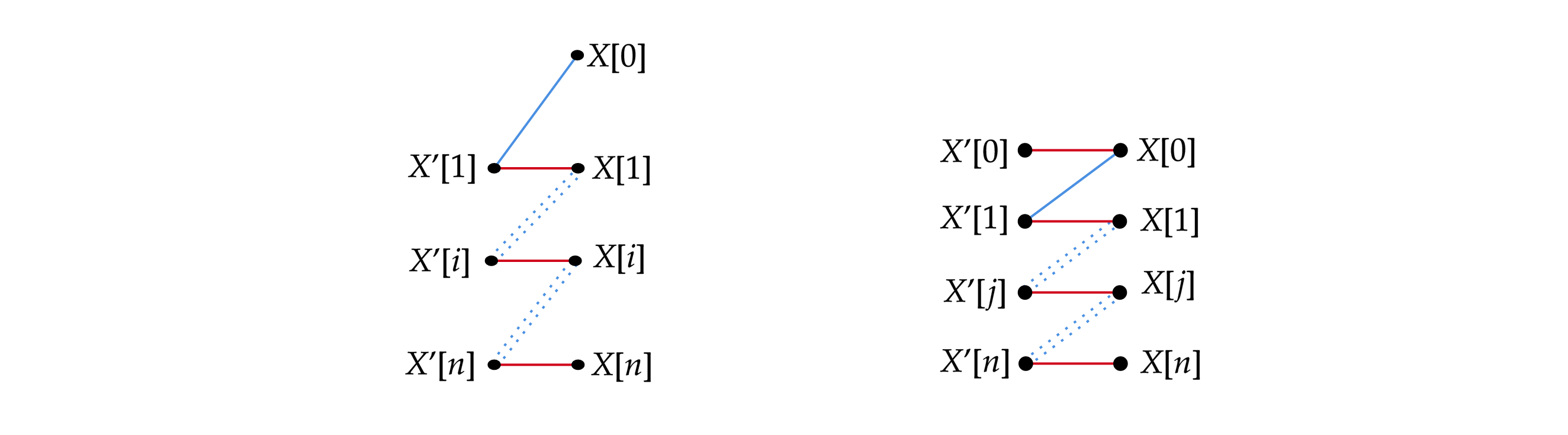}
    \captionof{figure}{Odd and even sized paths used in the maximum matching lower bounds. The canonical matchings are marked in red.}
    \label{fig:matching_gadgets}
\end{minipage}
\\\\

\cref{fig:matching_gadgets} shows odd and even paths (``odd'' and ``even'' describe the number of nodes) with a ``canonical'' matching for each of them marked in red. Next, we detail the connections outside the subgadgets.

Consider an odd path on $2n+1$ vertices, and a bipartition of the vertices into $(X', X)$ with $|X'| \le |X|$. Indexing the vertices as $X[0]$ and $ X'[i], X[i]$ for $1 \le i \le n$, our canonical matching matches $X[i]$ with $X'[i]$, and leaves $X[0]$ unmatched. We connect only the vertices $\{ X[i] \mid 1 \le i \le n \}$ outside the subgadget, while vertices in $X'$ and $X[0]$ only have edges inside the subgadget.
For an even path on $2n+2$ vertices indexed as above, our canonical matching is perfect, and matches $X[i]$ to $X'[i]$. Only vertex $X'[0]$ and all the vertices in $X$ are connected outside the subgadget, and all vertices $X'[i]$, $1 \le i \le n$, only have edges within the subgadget.

\begin{definition}[Reduction gadget]
A \emph{reduction gadget} with $x$ subgadgets of size $y$ is a bipartite graph composed of $x$ subgraphs, each of which is a path on $y$ nodes.
\end{definition}

\paragraph*{Reduction Graph}
\label{sssec:match_static}

The reduction graph is composed of two odd-sized reduction gadgets, and two even-sized reduction gadgets as follows.
\begin{itemize}
	\item A reduction gadget with one subgadget of size $2n+1$, on a set $L_1 \cup L_2$. The nodes are labelled $L_1[i]$ for $1 \le i \le n$, and $L_2[i]$ for $0 \le i \le n$, and the path is from $L_2[0]$ to $L_2[n]$.

	\item A reduction gadget with $n$ subgadgets of size $2n+2$ each, on a set $L_3 \cup L_4$. The subgadgets are labelled $LG[i]$ for $1 \le i \le n$, and the nodes of subgadget $LG[i]$ are labelled $L_3[i,j]$ or $L_4[i,j]$ for $0 \le j \le n$ depending on whether the node is in $L_3$ or $L_4$. The path in each subgadget goes from $L_3[i,0]$ to $L_4[i,n]$.

    \item A copy of the above structure, with node sets marked $R_1,R_2,R_3,R_4$ instead of $L_1,L_2,L_3,L_4$, respectively.

    \item For the matrix $M$, add the edge $(L_4[i,j],R_4[j,i])$ if $M_{ij}=1$.

    \item Given an input vector $u$, for each $i\in[n]$, add the edge $(L_2[i], L_3[i])$ if $u_i=1$.

    \item Given an input vector $v$, for each $j\in[n]$, add the edge $(R_2[j], R_3[j])$ if $v_j=1$.
\end{itemize}

The total number of nodes in the reduction graph is $N = 4n^2 + 8n + 2 = \Theta(n^2)$.
Note that our reduction graph is bipartite, and thus our lower bounds in this section hold for maintaining an exact maximum matching even on bipartite graphs.

\paragraph*{Matchings in the Graph}
\label{sssec:match_matchings}

We start by defining a base matching $B$ on the graph, which is made up of the canonical matchings on each of the gadgets. On the left side, $B$ matches $L_3[i,j]$ to $L_4[i,j]$, and $L_1[i]$ to $L_2[i]$ for all $i, j$. The matching on the right side is similar. Note that this matching always exists regardless of the input, and only $L_2[0]$ and $R_2[0]$ are unmatched in the entire graph. Thus $| B | = \frac{N}{2} - 1$.
We claim that this graph has a perfect matching if and only if $uMv = 1$. Let $C$ denote the maximum cardinality matching.

\begin{lemma}
\label{lem:constMatchSizeLem}
If $uMv = 1$, then $| C | = \frac{N}{2}$, and otherwise $| C | = \frac{N}{2} - 1$.
\end{lemma}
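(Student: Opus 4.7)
The plan is to reduce the claim to an augmenting-path statement and then analyse reachability by alternating paths in $B$ rooted at $L_2[0]$. Since the reduction graph is bipartite with each side of size $N/2$, and $B$ leaves exactly $L_2[0]$ and $R_2[0]$ unmatched, one has $N/2 - 1 \le |C| \le N/2$, so a perfect matching exists iff $B$ admits an augmenting path, necessarily between $L_2[0]$ and $R_2[0]$. It therefore suffices to prove that such an augmenting path exists iff $uMv = 1$.

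For the direction $uMv = 1 \Rightarrow |C| = N/2$, I will pick any $i, j$ with $u_i = M_{ij} = v_j = 1$ and write down the explicit $L_2[0]$-to-$R_2[0]$ alternating path: walk forward along the left odd path to $L_2[i]$, take the $u$-edge to $L_3[i,0]$, traverse $LG[i]$ up to $L_4[i,j]$, cross the $M$-edge to $R_4[j,i]$, traverse $RG[j]$ back down to $R_3[j,0]$, take the $v$-edge to $R_2[j]$, and walk backward along the right odd path to $R_2[0]$. A direct check confirms that consecutive edges alternate non-matching and matching with respect to $B$, so augmenting along this path yields a matching of size $N/2$.

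For the converse I will analyse alternating-tree reachability from $L_2[0]$, partitioning reachable vertices by even and odd depth (disjoint by bipartiteness). A short induction along the left odd path shows every $L_1[k]$ and $L_2[k]$ is reached, so $L_3[i,0]$ becomes odd-depth reachable iff $u_i = 1$, and a further induction inside $LG[i]$ shows $L_4[i,j]$ is even-depth reachable iff $u_i = 1$. The core step is a downward induction on the second index inside $RG[j]$: it yields that $R_4[j,k]$ is odd-depth reachable iff some $k' \ge k$ satisfies $u_{k'} M_{k'j} = 1$, because $R_4[j,k]$ is reached either via the $M$-edge from $L_4[k,j]$ (needing $u_k M_{kj} = 1$) or via the non-matching path edge from $R_3[j,k+1]$ (needing $R_4[j,k+1]$ reached). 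Specialising to $k=0$ gives $R_3[j,0]$ even-depth reachable iff $(uM)_j = 1$, and a mirror induction along the right odd path then shows $R_2[0]$ odd-depth reachable iff some $j$ has $v_j = 1$ and $R_3[j,0]$ reached, i.e.\ iff $uMv = 1$.

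The main obstacle is controlling the reachability analysis inside $RG[j]$, because a priori an alternating walk might bounce between several $LG$'s and $RG$'s through multiple $M$-edges before reaching $R_2[0]$. The resolution is that every $R_3[j,k]$ is incident only to path edges of $RG[j]$ (no $u$-, $v$-, or $M$-edges), so once the walk enters $R_4[j,k]$ via a non-matching $M$-edge, the forced matching step to $R_3[j,k]$ and the unique non-matching path edge from $R_3[j,k]$ to $R_4[j,k-1]$ commit the walk to a deterministic downward traversal of $RG[j]$. This makes the inductive reachability statement inside $RG[j]$ clean and enables the analysis above.
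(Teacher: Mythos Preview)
Your proposal is correct. The forward direction is essentially identical to the paper's: both write down the same explicit augmenting path through $L_2[i]$, $LG[i]$, the $M$-edge, $RG[j]$, and $R_2[j]$.

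For the converse you take a genuinely different route. The paper fixes an augmenting path $P$, observes that $(\bigcup_i L_i,\bigcup_j R_j)$ is an $(L_2[0],R_2[0])$-cut so $P$ must use some $M$-edge $(L_4[i,j],R_4[j,i])$ (yielding $M_{ij}=1$), and then argues locally about how $P$ can enter $LG[i]$ and exit $RG[j]$: since all matching edges of a subgadget are internal, the boundary edges of $P$ at $LG[i]$ must be unmatched, and since the only unmatched boundary edge on the $L_3$ side is $(L_3[i,0],L_2[i])$, one gets $u_i=1$; a symmetric argument at $RG[j]$ gives $v_j=1$. Your approach instead runs the full alternating BFS from $L_2[0]$ and characterises the even/odd reachable sets by layered inductions, culminating in ``$R_2[0]$ is odd-reachable iff $uMv=1$''. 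The paper's argument is shorter and more conceptual (one cut and two entry/exit observations), while yours is more systematic and sidesteps any worry about $P$ visiting a subgadget several times or crossing between $L$ and $R$ multiple times: alternating reachability is insensitive to which particular path realises it, so the ``bouncing'' issue you flag simply does not arise in your framework. Both arguments ultimately exploit the same structural fact, namely that every $R_3[j,k]$ (and every $L_3[i,k]$ with $k\ge1$) has no unmatched edge leaving its subgadget, which forces the alternating walk inside $RG[j]$ to proceed monotonically toward $R_3[j,0]$.
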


\begin{proof}
Since $B$ is always a matching of size $\frac{N}{2} - 1$ regardless of the input, the claim is equivalent to showing that $uMv = 1$ if and only if there is an augmenting path with respect to the matching $B$.

($\implies$) Suppose that $uMv = 1$, with $u_i = M_{ij} = v_j = 1$.
Consider the path $P$ composed of the following subpaths, of which all except $P_4$ start with an unmatched edge and end with a matched edge,
while $P_4$ both starts and ends with an unmatched edge.
\begin{itemize}
    \item $P_1 = L_2[0], L_1[1], L_2[1], \ldots, L_2[i]$
    \item $P_2 = L_2[i], L_3[i,0], L_4[i,0], \ldots, L_4[i,j]$
    \item $P_3 = L_4[i,j], R_4[j, i], R_3[j, i], \ldots, R_3[j, 0]$
    \item $P_4 = R_3[j,0], R_2[j], R_1[j], \ldots, R_2[0]$
\end{itemize}
Thus, $P$ is an augmenting path to the base matching $B$, which gives us that the maximum matching $C$ has to have size $> \frac{N}{2} - 1$, implying that the maximum matching $C$ is a perfect matching.

($\impliedby$) Suppose now that there exists an augmenting path $P$ to the base matching $B$ that starts at $s = L_2[0]$ and ends at $t = R_2[0]$.
\begin{itemize}
    \item Since $(\cup_i L_i, \cup_j R_j)$ is an $(s, t)$-cut, there is at least one crossing edge, say $(L_4[i,j], R_4[j,i])$, in $P$. Thus $M_{ij}$ = 1.

    \item Since $P$ leaves the subgadget $LG[i]$ using $(L_4[i,j], R_4[j,i])$, it should have entered the subgadget at some previous instance. Since $(L_4[i,j], R_4[j,i])$ is an unmatched edge and all the matching edges in $LG[i]$ are within the subgadget, $P$ should have entered the subgadget using an unmatched edge. As all the matching edges in $LG[i]$ are between $L_3$ and $L_4$, $P$ cannot both enter and exit the subgadget through $L_4$. Thus $P$ enters $LG[i]$ through $L_3$. However, the only possible unmatched edge from $L_3$ leaving the subgadget is the edge $(L_3[i,0], L_2[i])$. Thus $P$ uses the edge $(L_3[i,0], L_2[i])$ to enter the subgadget $LG[i]$, and so $u_i = 1$.

    \item The path $P$ now enters the subgadget $RG[j]$ through the unmatched edge $(L_4[i, j], R_4[j,i])$. As before, all the matched edges in $RG[j]$ are between $R_4$ and $R_3$, and so $P$ has to exit the subgadget using an unmatched edge from $R_3$. However, the only possible unmatched edge from $R_3$ leaving the subgadget is the edge $(R_3[j,0],R_2[j])$. Thus the edge $(R_3[j, 0], R_2[j])$ is used by $P$, giving us that $v_j = 1$.
\end{itemize}
This gives us that $uMv = 1$ as required.
\end{proof}

\paragraph*{Complexity of the Reduction}
\label{sssec:match_complexity}
We are now ready to prove the theorem.

\begin{restatable}{theorem}{matchConst}
	\label{thm:matchConst}
	For any constant $\epsilon>0$,
	there is no dynamic algorithm maintaining a maximum matching or determining the existence of a perfect matching,
	on all $N$-node graphs with maximum degree $\Delta \le 3$,
	with amortized \ub{1/2} update time and \ub{1} query time,
	unless the \omv{} conjecture is false.
\end{restatable}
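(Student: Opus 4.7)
The plan is to reduce from \oumv{} using the graph construction given just above, together with the matching-size characterization of \cref{lem:constMatchSizeLem}. In the preprocessing phase I would build all the path gadgets on the $4n^2+8n+2$ vertices and insert every matrix edge $(L_4[i,j],R_4[j,i])$ for which $M_{ij}=1$. The resulting static graph is bipartite, has $\Theta(n^2)$ edges, and has maximum degree $3$: each of $L_2[i]$, $L_3[i,0]$, $L_4[i,j]$ and their $R$-side analogues has in-path degree at most $2$, and will acquire at most one further incident edge later. Feeding this graph into the hypothesized dynamic matching data structure takes $\poly(N)$ preprocessing time, which is permitted under the \oumv{} conjecture.

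For each input pair $(u^k,v^k)$, I would first delete the at most $2n$ edges of the form $(L_2[i],L_3[i,0])$ and $(R_2[j],R_3[j,0])$ that were inserted for $(u^{k-1},v^{k-1})$, and then insert the at most $2n$ edges prescribed by $u^k$ and $v^k$. Since each such endpoint has in-graph degree at most $2$ both before and after the edits, the maximum degree never exceeds $3$ throughout the reduction. I would then issue a single query for the maximum matching size $|C|$ and output $1$ iff $|C|=N/2$; by \cref{lem:constMatchSizeLem} this equals $u^kMv^k$. Exactly the same procedure works if the data structure only reports perfect-matching existence, since \cref{lem:constMatchSizeLem} shows that a perfect matching exists iff $u^kMv^k=1$.

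The remaining step is the parameter-balancing calculation. Across the full \oumv{} instance the reduction performs $O(n^2)$ updates and $n$ queries in the online phase, and $m=\Theta(n^2)=\Theta(N)$ at all times. An algorithm with amortized update time $u(m,N)=O(m^{1/2-\epsilon})$ and query time $q(m,N)=O(m^{1-\epsilon})$ would then solve the \oumv{} instance in online time
\[
O\bigl(n^2\cdot m^{1/2-\epsilon}+n\cdot m^{1-\epsilon}\bigr)=O\bigl(n^{3-2\epsilon}\bigr),
\]
contradicting the \oumv{} (and hence \omv{}) conjecture. I do not foresee any genuine obstacle here: correctness is already encapsulated in \cref{lem:constMatchSizeLem} and the degree bound is enforced by the gadget design. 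The only care needed is the bookkeeping of the delete-then-insert pattern, so that the set of $u$- and $v$-dependent edges present in the graph at query time is exactly the one prescribed by the current pair, which is precisely the hypothesis of \cref{lem:constMatchSizeLem}.
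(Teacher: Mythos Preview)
Your proposal is correct and follows essentially the same argument as the paper: build the gadget graph, for each vector pair perform $O(n)$ edge updates and one query, invoke \cref{lem:constMatchSizeLem} for correctness, and do the $O(n^{3-2\epsilon})$ parameter calculation against \oumv{}. Your write-up is in fact a bit more explicit than the paper's (you check the degree bound is preserved throughout and note that perfect-matching detection suffices), but the approach is identical.
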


\begin{proof}
Consider the reduction graph above. It consists of $N = 4n^2 + 8n + 2 = \Theta(n^2)$ nodes. Every time we get a new $(u,v)$ input vector pair, we delete all the edges between $L_2 \times L_3$ and $R_2 \times R_3$ and insert edges according to the new input vectors. This takes $O(n)$ updates in total. After that, we query once for the size of the maximum matching in this new graph, and return $1$ if and only if $|C| = \frac{N}{2}$.

Thus for each pair of input vectors, we perform $O(n)$ updates and $O(1)$ query. In total, checking $n$ pairs takes us $O(n^2)$ updates and $O(n)$ query. If there were an algorithm for maximum matching on constant-degree graphs with update time \ub{1/2} (i.e., $O(n^{1-2\epsilon})$) and query time \ub{1} (i.e., $O(n^{2-2\epsilon})$), then we can decide if $uMv=1$ for all $n$ pairs in $O(n^{3-2\epsilon})$ time, contradicting the \omv{} conjecture.
\end{proof}

\subsection{Varying Degree Graph}
\label{ssec:match_vary_degree}

We present a reduction that gives a lower bound parameterized on the maximum degree of the graph.
Note that setting $t=1$ and $t=0$ in the following theorem give us the unbounded degree lower bound of~\cite{HenzingerKNS15} and \Cref{thm:matchConst} respectively.

\begin{restatable}{theorem}{matchVary}
	\label{thm:matchVary}
	For any $0\leq t\leq 1$ and any constant $\epsilon>0$,
	there is no dynamic algorithm maintaining a maximum matching  or determining the existence of a perfect matching,
	on all $N$-node graphs with maximum degree $\Delta = O(N^t)$,
	with amortized \ub{\frac{1+t}{2}} update time and \ub{1+t} query time,
	unless the \omv{} conjecture is false.
\end{restatable}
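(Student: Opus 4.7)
The plan is to interpolate between the constant-degree reduction of \Cref{ssec:match_const_degree} (the $t=0$ endpoint) and the unbounded-degree matching reduction of~\cite{HenzingerKNS15} (the $t=1$ endpoint) by keeping the same gadget skeleton but shrinking the selector and matrix paths and letting each of their nodes represent a \emph{block} of $\Delta$ indices instead of a single index. Given an \oumv{} instance with $n \times n$ matrix $M$, I would fix $\Delta$ as a parameter, set $p = \lceil n/\Delta \rceil$, and build the following bipartite graph: a left selector odd path on $2p+1$ vertices with nodes $L_1[k]$ for $k\in[p]$ and $L_2[k]$ for $0 \le k \le p$, where $L_2[k]$ now represents the whole row block $\{(k-1)\Delta+1,\dots,k\Delta\}$; for each row $i\in[n]$ a left even matrix path $LG[i]$ on $2p+2$ vertices $L_3[i,j], L_4[i,j]$ for $0\le j\le p$, where $L_4[i,j]$ represents the column block $\{(j-1)\Delta+1,\dots,j\Delta\}$; symmetric mirrored structures $R_1,R_2,R_3,R_4$ on the right; and the input-driven edges $(L_2[\lceil i/\Delta\rceil], L_3[i,0])$ if $u_i=1$, $(R_2[\lceil j/\Delta\rceil], R_3[j,0])$ if $v_j=1$, and $(L_4[i,\lceil \ell/\Delta\rceil], R_4[\ell,\lceil i/\Delta\rceil])$ if $M_{i\ell}=1$. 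A routine degree count then gives maximum degree $\Delta + O(1)$ and total size $N = \Theta(n^2/\Delta)$; taking $\Delta = \Theta(N^t)$ self-consistently forces $N = \Theta(n^{2/(1+t)})$ and $\Delta = \Theta(n^{2t/(1+t)})$.

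Next I would rerun the matching analysis of \Cref{lem:constMatchSizeLem} almost verbatim. The canonical matching on every gadget leaves only $L_2[0]$ and $R_2[0]$ unmatched, so the base matching $B$ has size $N/2-1$ regardless of the input, and the graph admits a perfect matching if and only if there is a $B$-augmenting path from $L_2[0]$ to $R_2[0]$. The forward direction ($uMv=1$ implies such a path) is an explicit construction: walk along the left selector from $L_2[0]$ to $L_2[\lceil i/\Delta\rceil]$, cross the $u_i$-edge into $LG[i]$, traverse the even path up to $L_4[i,\lceil \ell/\Delta\rceil]$, cross the $M$-edge into $RG[\ell]$, walk back to $R_3[\ell,0]$, and return via the $v_\ell$-edge through the right selector to $R_2[0]$. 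The reverse direction uses the same cut argument as before: $(\cup_i L_i, \cup_j R_j)$ is still an $(s,t)$-cut, any augmenting path must cross exactly one $M$-edge $(L_4[i,k_M], R_4[\ell,k_c])$ and so $M_{i\ell}=1$, and the only way for an alternating path to enter $LG[i]$ is through its single external $L_3$-port $L_3[i,0]$, which exists only if $u_i=1$; symmetrically on the right.

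Finally, each input pair $(u,v)$ triggers $O(n)$ edge updates to refresh the $u$- and $v$-edges, followed by a single matching-size query, so processing all $n$ pairs costs $O(n^2)$ updates and $O(n)$ queries; a putative algorithm with $O(N^{(1+t)/2-\epsilon})$ update time and $O(N^{1+t-\epsilon})$ query time then runs in total time $O(n^{3 - 2\epsilon/(1+t)})$ on the reduction graph, contradicting \Cref{con:omv}. The step that requires the most care is verifying that the block grouping does not open any spurious augmenting paths: the interior nodes $L_3[i,j]$ and $R_3[\ell,j]$ for $j\ge 1$ still have no external edges by construction, so the parity and cut arguments of the constant-degree proof transfer unchanged even though each $L_4$ (resp.\ $R_4$) node now carries up to $\Delta$ cross-edges rather than a single one.
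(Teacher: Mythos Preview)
Your proposal is correct and follows essentially the same approach as the paper: shrink the matrix-gadget paths so that each $L_4$ node represents a block of $\Theta(n^{2t/(1+t)})$ column indices, yielding $N=\Theta(n^{2/(1+t)})$ nodes and maximum degree $O(N^t)$, then reuse the augmenting-path analysis of \Cref{lem:constMatchSizeLem} verbatim (interior $L_3$/$R_3$ nodes still have no external edges, so the parity and cut arguments transfer). The one cosmetic difference is that you also shrink the selector path $L_1\cup L_2$ to length $\Theta(p)$ and let each $L_2[k]$ carry up to $\Delta$ edges into $L_3$, whereas the paper keeps the selector at its full length $2n+1$ with degree-$3$ $L_2$ nodes and concentrates all the high degree in $L_4$; since the node count is dominated by $L_3\cup L_4$ in both variants, this changes nothing in the final bound or in the correctness argument.
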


Given $0 \le t \le 1$, we construct a reduction graph that has maximum degree $O(N^t)$. $L_1, L_2, R_1, R_2$ and the edges for $u$ and $v$ are the same as in the previous construction. We now detail the changes for $L_3, L_4, R_3, R_4$ and the edges that depend on $M$.
\begin{itemize}
	\item $LG$ is now an even reduction gadget with $n$ subgadgets of size $2 n^{\frac{1-t}{1+t}} + 2$ each.
	The path in each subgadget goes from $L_3[i,0]$ to $L_4[i,n^{\frac{1-t}{1+t}}]$, and similarly for $R_3$ and $R_4$.

    \item
    For the matrix $M$, if $M_{ij}=1$, let $i' = \left\lceil i \cdot n^{-2t/(t+1)} \right\rceil$ and $j' = \left\lceil j \cdot n^{-2t/(t+1)} \right\rceil $
    and add the edge $(L_4[i,j'],R_4[i',j])$ to the graph.
\end{itemize}

Note that the augmenting paths in this reduction graph are the same as in the constant-degree reduction graph by a similar proof as in~\Cref{lem:matchSizeLem}. By construction, each node in $L_4$ is connected to at most $n^{2t/(t+1)}$ nodes in $R_4$, and each node in $R_4$ is connected to at most $n^{2t/(t+1)}$ nodes in $L_4$.
The proof of the theorem is similar to the proof of \cref{thm:matchConst}.

\begin{proof}
The number of nodes in the reduction graph described in \cref{ssec:match_vary_degree} is dominated by the number of nodes in $L_4$ and $R_4$. Thus the total number of nodes in the reduction graph is $N = \Theta(n^{2/(t+1)})$ nodes. Each node in $L_4$ and $R_4$ has at most $n^{2t/(t+1)}$ edges of $M$ incident on it by a similar argument as in the proof of \Cref{thm:stVary}. Thus the maximum degree in the graph is $O(n^{2t/(t+1)}) = O(N^t)$ as required. The rest is similar to \Cref{thm:matchConst}.

Every time we get a new $(u,v)$ input vector pair, we delete all the edges between $L_2 \times L_3$ and $R_2 \times R_3$ and insert edges according to the new input.
Thus, for each pair of input vectors, we perform $O(n)$ updates and $O(1)$ queries. In total, checking $n$ pairs takes us $O(n^2)$ updates and $O(n)$ queries. If there were an algorithm for maximum matching on graphs with maximum degree bounded by $N^t$ with update time \ub{\frac{1+t}{2}} (i.e., $O(n^{1-2\epsilon})$) and query time \ub{1+t} (i.e., $O(n^{2-2\epsilon})$), then we can decide if $uMv=1$ for all $n$ pairs in $O(n^{3-2\epsilon})$ time, contradicting the \omv{} conjecture.
\end{proof}

\subsection{Expander Graph}
\label{ssec:match_expand_proof}

The previous matching lower bounds on general graphs~\cite{HenzingerKNS15, dahlgaardhardness2016} use reduction graphs that contain nodes with degree $\Omega(N)$. In this section, we construct a constant-degree reduction graph with constant expansion.
\paragraph*{Reduction gadgets}
While the reduction gadgets in \Cref{ssec:match_const_degree} suffice for sparsification, we need additional constructions in order to guarantee constant expansion. In particular, it turns out that adding edges inside a subgadget does not suffice for constant expansion, and we are forced to add edges between subgadgets. Our construction adds edges on the same side of the bipartition across subgadgets, and our proof implicitly shows that if the newly added edges take part in any augmenting path, then there also exists an augmenting path in the subgraph devoid of any newly inserted edge.

The reduction graph consists of a left subgraph ($L$) and a right subgraph ($R$), connected together by edges corresponding to the matrix $M$.
Note that for constant expansion, we need the number of edges of $M$ to be a constant fraction of the sizes of $L$ and $R$. While it would be possible to construct a reduction graph with $|L|$ and $|R|$ that depend on the input matrix $M$, we instead choose to augment the input matrix and vectors as it simplifies notation.
We thus augment the input beforehand to ensure that there are $\Omega(n^2)$ edges crossing from $L$ to $R$. To this end, we work with the vectors $\hat{u} = (u\,\, 0)$ and $\hat{v} = (v\,\, 0)$ of dimension $2n$, and the matrix $\hat{M} = \begin{psmallmatrix} M & 1 \\ 1 & 1 \end{psmallmatrix} $ of dimension $2n \times 2n$. It is easy to see that $uMv = 1 \iff \hat{u}\hat{M}\hat{v} = 1$.

\begin{definition}[Reinforced gadget]
A \emph{reinforced gadget} with $x$ subgadgets of size $y$ consists of $x$ subgraphs, each of which is a path on $y$ nodes. The nodes are bipartitioned into sets $(X', X)$ with the larger side of the partition labelled as $X$ in each subgadget. Thus $|X'| \le |X|$. It is then augmented with the following edge-set:
Consider a degree-$d$ expander graph on $x \cdot \left\lceil \frac{y}{2} \right\rceil $ nodes, choose an arbitrary bijection between the expander nodes and $X$, and add the expander edges to these nodes accordingly.
The resulting graph is the reinforced gadget.
\end{definition}

Note that reinforced gadgets are not bipartite. Thus, while the constant-degree lower bounds hold for bipartite matching, the expander result is for maximum matching on general graphs. In what follows, we drop the hats from $\hat{u}, \hat{M}$ and $\hat{v}$ for simplicity, but continue our analysis with their dimensions as $2n, 2n \times 2n, 2n$ respectively.

\paragraph*{Reduction Graph}

We use the following reduction graph, composed of two odd-sized reinforced gadgets and two even-sized reinforced gadgets.
\begin{itemize}
	\item A reinforced gadget with one subgadget of size $4n+1$, on a set $L_1 \cup L_2$. The nodes are labelled $L_1[i]$ for $1 \le i \le 2n$, and $L_2[i]$ for $0 \le i \le 2n$. The path is from $L_2[0]$ to $L_2[2n]$, and the expander is on $L_2$.

	\item A reinforced gadget with $2n$ subgadgets of size $4n+2$, on a set $L_3 \cup L_4$. The subgadgets are labelled $LG[i]$ for $1 \le i \le 2n$, and the nodes of subgadget $LG[i]$ are labelled $L_3[i,j]$ or $L_4[i,j]$ for $0 \le j \le 2n$ depending on whether the node is in $L_3$ or $L_4$. The path in each subgadget goes from $L_3[i,0]$ to $L_4[i,2n]$, and the expander is on $L_4$.

    \item A copy of the above structure, with node sets marked $R_i$ instead of $L_i$, respectively.

    \item For the matrix $M$, add the edge $(L_4[i,j],R_4[j,i])$ if $M_{ij}=1$.

    \item Given an input vector $u$, for each $i\in[2n]$, add the edge $(L_2[i], L_3[i,0])$ if $u_i=1$, and $(L_2[i], L_4[i,0])$ otherwise.

    \item Given an input vector $v$, for each $j\in[2n]$, add the edge $(R_2[j], R_3[j,0])$ if $v_j=1$, and add the edge $(R_2[j], R_4[j,0])$ otherwise.
\end{itemize}

The total number of nodes in the reduction graph is $N = 16n^2 + 16n + 2 = \Theta(n^2)$.

\begin{figure}
    \centering
    \includegraphics[scale=0.15]{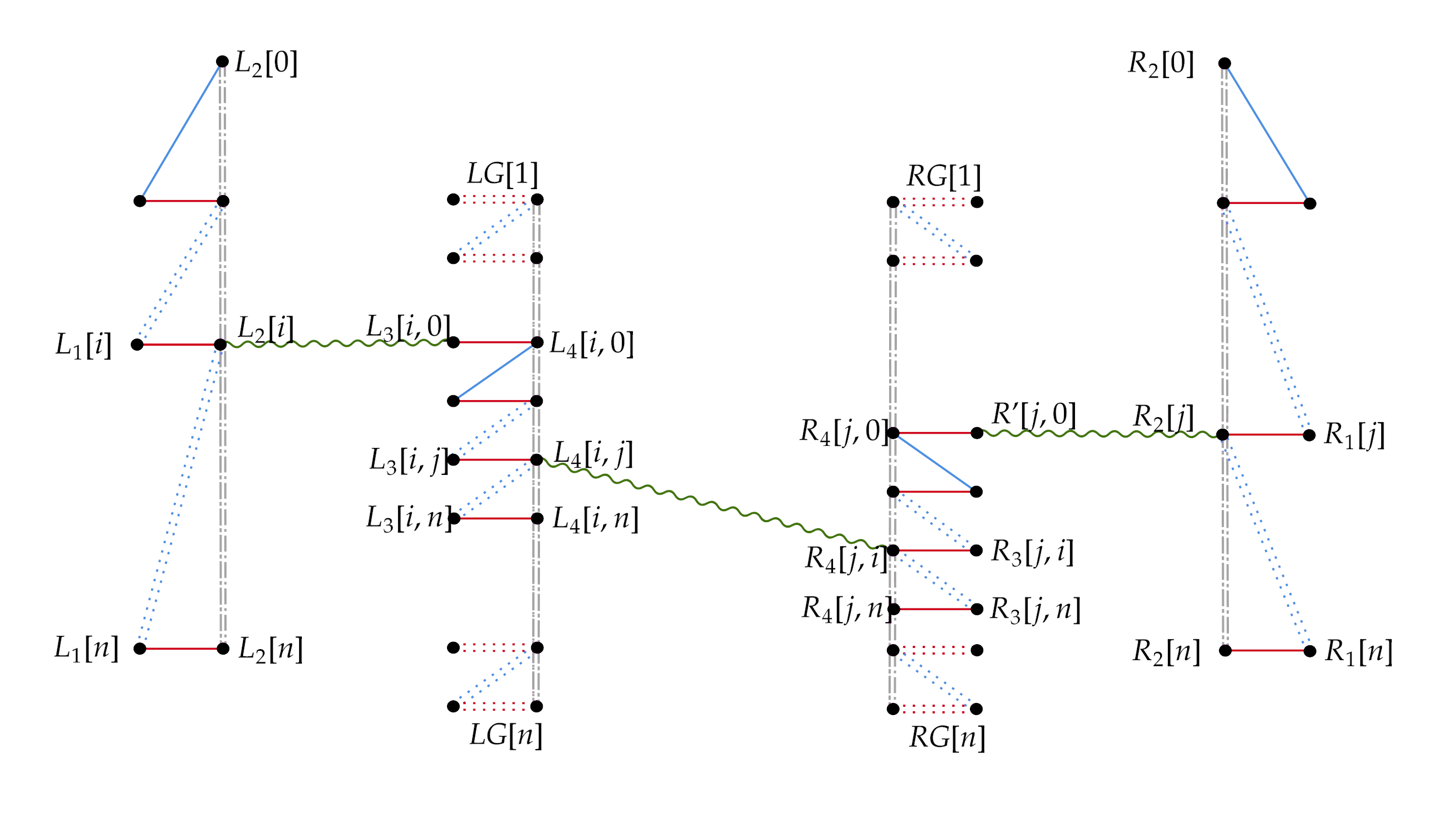}
    \caption{The expander reduction graph for maximum matching. The red lines denote the canonical matching, the blue lines denote the paths in each subgadget, the grey lines denote the expander edges, and the green lines denote the input-dependent edges.}
    \label{fig:matching-reduction-gadget}
\end{figure}

\paragraph*{Matchings in the Graph}

Our base matching $B$ is similar to the constant-degree case, and is made up of the canonical matchings on each of the gadgets. On the left side, $B$ matches $L_3[i,j]$ to $L_4[i,j]$, and $L_1[i]$ to $L_2[i]$ for all $i, j$. The matching on the right side is similar. Note that this matching always exists regardless of the input, and only $L_2[0]$ and $R_2[0]$ are unmatched in the entire graph. Thus $| B | = \frac{N}{2} - 1$.
We claim that this graph has a perfect matching if and only if $uMv = 1$. Let $C$ denote the maximum cardinality matching.

\begin{lemma}
\label{lem:matchSizeLem}
If $uMv = 1$, then $| C | = \frac{N}{2}$, and otherwise $| C | = \frac{N}{2} - 1$.
\end{lemma}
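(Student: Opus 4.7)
The base matching $B$ described above has $N/2 - 1$ edges, with $L_2[0]$ and $R_2[0]$ the only unmatched vertices. Hence $|C| \ge N/2 - 1$ always, and $|C| = N/2$ holds iff there is an augmenting path $P$ with respect to $B$ from $L_2[0]$ to $R_2[0]$. For the forward direction, given $u_i = M_{ij} = v_j = 1$, take $P$ as the concatenation of four alternating subpaths: $P_1$ walks from $L_2[0]$ to $L_2[i]$ along the $L$-odd reinforced gadget; $P_2$ uses the $u$-edge $(L_2[i], L_3[i,0])$ (available since $u_i = 1$) and continues along the path in $LG[i]$ to $L_4[i,j]$; $P_3$ crosses via the $M$-edge $(L_4[i,j], R_4[j,i])$ and proceeds backward through $RG[j]$ to $R_3[j,0]$; $P_4$ uses the $v$-edge $(R_3[j,0], R_2[j])$ and walks along the $R$-odd gadget to $R_2[0]$. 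Alternation is straightforward to verify, giving $|C| = N/2$.

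For the reverse direction, assume an augmenting path $P$ from $L_2[0]$ to $R_2[0]$ exists. Since the $(L,R)$-cut contains only $M$-edges, $P$ uses some $M$-edge $(L_4[i^*, j^*], R_4[j^*, i^*])$, so $M_{i^* j^*} = 1$. The crux is to establish $u_{i^*} = v_{j^*} = 1$, which will follow from the following \emph{subgadget-avoidance lemma} (and its symmetric version for $RG$ and $v$): if $u_i = 0$, then $P$ visits no vertex of $LG[i]$. Granting the lemma, since $P$ visits $L_4[i^*, j^*] \in LG[i^*]$, the contrapositive forces $u_{i^*} = 1$, and symmetrically $v_{j^*} = 1$, yielding $uMv = 1$.

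To prove the avoidance lemma, fix $i$ with $u_i = 0$. The node $L_3[i, 0]$ has degree one in the entire graph: its sole incident edge is the $B$-matched path edge to $L_4[i, 0]$, because no $u$-edge lands at $L_3[i,0]$ when $u_i = 0$ and the expander on $L_4$ leaves $L_3$ untouched. Being a degree-one non-endpoint, $L_3[i, 0] \notin P$; consequently $(L_3[i, 0], L_4[i, 0]) \notin P$, and since every internal vertex of an augmenting path is incident to its unique $B$-matching edge, also $L_4[i, 0] \notin P$. Now every external edge of $LG[i]$ attaches at $L_4$ (the $u$-edge, the $M$-edges, and the expander edges), so any entry of $P$ into $LG[i]$ occurs at some $L_4[i, k_0]$ with $k_0 \ge 1$ via an unmatched external edge. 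Alternation then forces the internal trajectory deterministically: from any $L_4[i, k']$ reached via an unmatched edge, the only matched edge at $L_4[i, k']$ is the internal path edge to $L_3[i, k']$; from any $L_3[i, k']$ with $k' \ge 1$, the only unmatched edge (since $L_3$-nodes carry neither expander nor external edges apart from the $u$-edge at $L_3[i, 0]$) is the internal path edge to $L_4[i, k' - 1]$. Iterating, $P$ is funneled through $L_4[i, k_0-1], L_3[i, k_0-1], \ldots, L_4[i, 1], L_3[i, 1]$, where the sole available unmatched edge leads to the forbidden $L_4[i, 0]$, a contradiction.

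The main obstacle is precisely this deterministic-funneling argument: because the expander edges create many potential entry points $L_4[i, k_0]$ (not only at $L_4[i, 0]$ via the $u$-edge as in the constant-degree case), the constant-degree-style analysis does not directly apply. One must show that for every entry point $k_0 \ge 1$, the alternation and $B$-matching structure inside $LG[i]$ still drive $P$ toward the forbidden $L_4[i, 0]$, irrespective of how the expander-edge bijection is chosen.
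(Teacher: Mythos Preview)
Your proof is correct and follows essentially the same deterministic-funneling argument as the paper, which simply asserts that the proof of \Cref{lem:constMatchSizeLem} carries over without modification. Your avoidance-lemma packaging and explicit verification that the new expander edges (which lie only on $L_4$, where alternation forces the unique matched edge to $L_3$) cannot divert the walk inside $LG[i]$ is in fact a more careful justification than the paper's one-line claim.
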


The proof of \Cref{lem:constMatchSizeLem} without any modifications shows \Cref{lem:matchSizeLem} even with the additional edges present.

\paragraph*{Complexity of the Reduction}

On the arrival of a new vector pair, we first add all the edges corresponding to the new input (if they do not already exist), and then remove the previous vector pair's edges, as opposed to the usual convention of first deleting the previous edges and inserting the new ones. This ensures that the graph remains an expander at all steps.
Assuming that we have proved the expansion properties required, since number of edges in a constant-degree graph is $O(N)$, we get the following theorem for expanders.

\begin{restatable}{theorem}{matchExpand}
	\label{thm:matchExpand}
	For any constant $\epsilon>0$,
	there is no dynamic algorithm maintaining a maximum matching  or determining the existence of a perfect matching,
	on all $N$-node graphs with constant degree and constant expansion,
	with amortized \ub{1/2} update time and \ub{1} query time,
	unless the \omv{} conjecture is false.
\end{restatable}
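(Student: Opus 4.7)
The plan is to verify three properties of the reduction graph and then reuse the complexity argument of Theorem~\ref{thm:matchConst}. First, the matching characterization is exactly Lemma~\ref{lem:matchSizeLem}, whose proof coincides verbatim with that of Lemma~\ref{lem:constMatchSizeLem}, since the reinforcing expander edges never alter the bipartition-based augmenting-path analysis: any augmenting path between $L_2[0]$ and $R_2[0]$ must still cross $L$--$R$ via some $\hat M$-edge, enter the relevant $L$-subgadget through $L_3[i,0]$, and exit the relevant $R$-subgadget through $R_3[j,0]$. Second, the maximum degree is bounded by a fixed constant: each node in $L_4 \cup R_4$ carries at most one $\hat M$-edge because the labelling $(i,j)\mapsto (j,i)$ is injective, each node in $L_2, L_4, R_2, R_4$ gains a fixed $d$ from its reinforcing expander, each path contributes at most two edges per node, and each $L_2[i]$ (and symmetrically each $R_2[j]$) carries exactly one input edge. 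The augmentation from $u,v,M$ to $\hat u, \hat v, \hat M$ merely doubles the dimensions without affecting asymptotic degree, so the graph has $N = \Theta(n^2)$ nodes, $m = \Theta(n^2)$ edges, and constant maximum degree.

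The main obstacle is verifying constant expansion. My approach is a cut-based argument: fix $S \subseteq V$ with $|S| \le N/2$, write $s_X = |S \cap X|$ for each of the eight classes $X \in \{L_1, L_2, L_3, L_4, R_1, R_2, R_3, R_4\}$, and lower-bound $|E(S,\bar S)|$ by contributions from each class. For each of $L_2, L_4, R_2, R_4$, the reinforcing $d$-regular expander provides at least $h \cdot \min(s_X, |X| - s_X)$ internal cut edges. For the complementary classes $L_1, L_3, R_1, R_3$, each node is paired with a unique partner in an adjacent expanding class via a canonical-matching path edge, so each such node either shares a side with its partner (allowing its mass to be absorbed into the expanding-class bookkeeping) or contributes a crossing path-edge. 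Hence $|S|$ is within a factor of two of $\sum_{X \in \{L_2, L_4, R_2, R_4\}} s_X$ plus a correction controlled by cut path-edges, and the ratio $|E(S,\bar S)|/|S|$ is bounded below by a constant multiple of $h/(1+d)$. The $\hat M$ augmentation plays a crucial role: without the all-ones blocks added to form $\hat M$, the $L$--$R$ cut could be nearly empty for sparse inputs, but the all-ones block forces $\Omega(n^2)$ balanced crossing edges between $L_4$ and $R_4$, ensuring that the global cut between left and right halves is never sparse regardless of $M, u, v$.

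Finally, the graph must remain a constant expander throughout the update sequence, not only at its endpoints. This is handled by the insertion-before-deletion policy specified in the reduction: new input-dependent edges are inserted before the old ones are removed, so at every intermediate time the edge set is a superset of the edges of a valid static expander reduction, and adding edges can only increase $|E(S,\bar S)|$ while leaving $|S|$ unchanged. With constant degree and constant expansion established at every time step, the complexity proof of Theorem~\ref{thm:matchConst} transfers with no change: processing all $n$ input pairs requires $O(n^2)$ edge updates and $n$ matching-size queries on a graph with $m = \Theta(n^2)$, so a dynamic algorithm with amortized $O(m^{1/2 - \epsilon})$ update time and $O(m^{1 - \epsilon})$ query time on constant-degree constant-expansion graphs would solve \oumv{} in time $O(n^{3 - 2\epsilon})$, contradicting \cref{con:omv}.
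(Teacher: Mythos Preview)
Your overall strategy matches the paper exactly: the matching characterization (Lemma~\ref{lem:matchSizeLem}), the constant-degree check, the insertion-before-deletion policy to preserve expansion at intermediate steps, and the final complexity count all appear in the paper essentially as you describe them. The paper likewise defers the entire difficulty to verifying constant expansion (its Lemma~\ref{lem:matchExpandExpansion}).

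However, your expansion sketch has a real gap. The argument you propose---bound $|S|$ by twice the mass in the four reinforced layers $L_2,L_4,R_2,R_4$ plus a path-edge correction, and credit each such layer with $h\cdot\min(s_X,|X|-s_X)$ from its embedded expander---breaks down precisely when some $s_X$ is close to $|X|$, since then the expander contributes nothing while $s_X$ remains large. You invoke the $\hat M$ edges for this, but those handle only the case $s_{L_4}$ large and $s_{R_4}$ small (or vice versa). When \emph{both} $s_{L_4}$ and $s_{R_4}$ are near-full, expansion must instead come from the canonical matching into $L_3$, which is forced to be small by the constraint $|S|\le N/2$; and when $s_{L_2}$ is near-full, expansion comes from the input-dependent $u$-edge out of every $L_2[i]$ into $L_3\cup L_4$, an ingredient you mention in the degree bound but never use in the expansion argument. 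The paper handles all of this via an explicit case analysis (Lemmas~\ref{lem:matchExpandBothLarge}--\ref{lem:matchExpandAllSmall} and the gadget-level Lemma~\ref{lem:matchExpandGadget}), splitting first on whether $|S\cap L_4|$ and $|S\cap R_4|$ are large, medium, or small, and then, in the ``both small'' regime, further on $|S\cap L_3|$, $|S\cap L_2|$, and $|S\cap L_1|$. Your three ingredients (layer expanders, canonical-matching path edges, $\hat M$ edges) are three of the four the paper uses; the fourth is the guaranteed $u$/$v$ edge out of every node of $L_2$ and $R_2$, and the glue is a case split rather than a single inequality.
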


\begin{proof}
Consider the reduction graph above. It consists of $N = 16n^2 + 16n + 2 = \Theta(n^2)$ nodes. Every time we get a new $(u,v)$ input vector pair, we update $L_2 \times L_3$ and $R_2 \times R_3$ as detailed above. This takes $O(n)$ updates in total. After that, we query once for the size of the maximum matching in this new graph, and return $1$ if and only if $|C| = \frac{N}{2}$.

Thus for each pair of input vectors, we perform $O(n)$ updates and $O(1)$ query. In total, checking $n$ vector pairs takes us $O(n^2)$ updates and $O(n)$ query. If there were an algorithm for maximum matching on constant-degree graphs with update time \ub{1/2} (i.e., $O(n^{1-2\epsilon})$) and query time \ub{1} (i.e., $O(n^{2-2\epsilon})$), then we can decide if $uMv=1$ for all $n$ pairs in $O(n^{3-2\epsilon})$ time, contradicting the \omv{} conjecture.
\end{proof}

Let us now show that the graph described is indeed an $h_1$-expander graph, for some constant $h_1>0$.

\begin{restatable}{lemma}{matchExpandExpansion}
\label{lem:matchExpandExpansion}
The reduction graph has constant expansion.
\end{restatable}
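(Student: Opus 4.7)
The plan is to prove $|E(S,\bar S)|\ge h_1|S|$ for every $S$ with $1\le|S|\le N/2$, where $h_1>0$ depends only on the degree $d$ and expansion $\gamma$ of the base expander used in the reinforced gadgets. Three edge types will contribute to the cut: the expander edges inside each of $L_2,L_4,R_2,R_4$; the path edges inside each subgadget; and the matrix edges between $L_4$ and $R_4$. The crucial quantitative input is that the padded matrix $\hat M$ has at least $3n^2=\Omega(|L_4|)$ ones regardless of $M$, and that each node of $L_4$ has $M$-degree at most one, so the $M$-edges form an injective partial map $\phi\colon L_4\to R_4$ of size $\Omega(A)$, where $A=|L_4|=2n(2n+1)$.

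Group the vertices into four regions $C_1=L_1\cup L_2$, $C_2=L_3\cup L_4$, $C_3=R_1\cup R_2$, $C_4=R_3\cup R_4$, and set $T_i=S\cap C_i$. I first establish an \emph{internal expansion} bound of constant $h'=\gamma/2$ on each region: for every $T\subseteq C_i$, the edges inside $C_i$ between $T$ and $C_i\setminus T$ number at least $h'\min(|T|,|C_i|-|T|)$. For $C_2$ (and symmetrically $C_4$) write $a_i=|T\cap L_3[i]|$ and $b_i=|T\cap L_4[i]|$. Because each subgadget is a bipartite path on $L_3[i]\cup L_4[i]$ admitting a perfect matching, for any $X\subseteq L_3[i]$ we have $|N(X)\cap L_4[i]|\ge|X|$, which yields at least $|a_i-b_i|$ cut path-edges per subgadget; summing over $i$ gives at least $|a-b|$ path crossings, and the expander on $L_4$ adds a further $\gamma\min(b,A-b)$. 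A short case analysis on the sign of $b-A/2$ and of $|T|-A$ confirms that these two contributions jointly dominate $(\gamma/2)\min(|T|,2A-|T|)$. The argument for $C_1,C_3$ is analogous and simpler since $|C_1|=|C_3|=O(n)$.

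Summing the internal bounds yields $|E(S,\bar S)|\ge h'\bigl(|S|-\tfrac12\sum_i d_i\bigr)$ where $d_i=\max\bigl(0,\,2|T_i|-|C_i|\bigr)$ is the deficit of region $C_i$. If all $d_i=0$ we are done. Otherwise the deficits $d_1,d_3=O(n)$ are negligible compared with $|S|$, so the substantive sub-cases are $d_2>0$ or $d_4>0$. By injectivity of $\phi$, the number of uncut $M$-edges is at most $\min(s_4,t_4)+\min(A-s_4,A-t_4)$, where $s_4=|S\cap L_4|$ and $t_4=|S\cap R_4|$; so whenever exactly one of $d_2,d_4$ is $\Omega(A)$ we immediately get $\Omega(A)$ cut $M$-edges. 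The main obstacle is the sub-case where both $d_2$ and $d_4$ are simultaneously large, for which the uncut bound is too weak; here the hypothesis $|S|\le N/2=2A+O(n)$ forces $(|C_2|-|T_2|)+(|C_4|-|T_4|)=\Omega(A)$, and applying the internal expansion of $C_2$ and $C_4$ to their small complements $C_i\setminus T_i$ supplies the required $\Omega(|S|)$ cut edges inside those regions. Taking $h_1$ to be the minimum of the explicit constants produced across all cases completes the proof.
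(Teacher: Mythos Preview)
Your high-level strategy---prove that each reinforced gadget $C_i$ has constant internal expansion and then control the ``deficit'' using the matrix edges---is essentially the same route the paper takes (its Lemma on reinforced-gadget expansion plus an explicit case analysis on $|S\cap L_4|,|S\cap R_4|$). However, the execution has two real gaps.

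\textbf{Missing edge type.} You list only three contributing edge types: expander edges inside $L_2,L_4,R_2,R_4$, path edges, and $M$-edges. You never use the $u$-edges from $L_2$ to $L_3\cup L_4$ (nor the $v$-edges). Without them $C_1=L_1\cup L_2$ is completely disconnected from the rest of the graph under your three types, so for $S=C_1$ your argument gives $|E(S,\bar S)|=0$. Your dismissal ``$d_1,d_3=O(n)$ are negligible compared with $|S|$'' is exactly what fails here: when $|S|=\Theta(n)$ the deficit $d_1$ can equal $|S|$. The paper handles this case explicitly by counting the $u$-edges (its Lemma using ``the edges of $u$''). You need a fourth edge type and a separate sub-case for $|S|=O(n)$ concentrated in $C_1$ or $C_3$.

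\textbf{Two incorrect intermediate claims.} First, the summed internal bound should be $h'(|S|-\sum_i d_i)$, not $h'(|S|-\tfrac12\sum_i d_i)$: with $d_i=\max(0,2|T_i|-|C_i|)$ one has $\min(|T_i|,|C_i|-|T_i|)=|T_i|-d_i$, not $|T_i|-d_i/2$. Second, the assertion ``whenever exactly one of $d_2,d_4$ is $\Omega(A)$ we immediately get $\Omega(A)$ cut $M$-edges'' is false. Take $T\cap L_3=L_3$, $|T\cap L_4|=A/2$, $T\cap R_3=\emptyset$, and $T\cap R_4$ equal to the $\phi$-image of $T\cap L_4$; then $d_2=A$, $d_4=0$, yet your uncut bound $\min(s_4,t_4)+\min(A-s_4,A-t_4)=A>3n^2$ gives nothing. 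This particular configuration is in fact saved by internal expansion alone (both $p_2$ and $p_4$ equal $A/2$), so the case split you actually need is on $p_2+p_4$, not on which $d_i$ is large: if $p_2+p_4\ge cA$ you are done by internal expansion; otherwise one of $|T_2|,|T_4|$ is near $2A$ and the other near $0$, forcing $|s_4-t_4|=\Omega(A)$, and only then do the $M$-edges contribute. The paper's proof avoids both pitfalls by doing the case analysis directly on $|S\cap L_4|$ and $|S\cap R_4|$ with explicit numerical thresholds.
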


Throughout the rest of this section, let $S$ be an arbitrary subset of vertices in the reduction graph, with $|S| < N/2 = 8n^2 + 8n + 1$. To simplify our proofs, we consider the reduction graphs with $n > 90$, since then $8n^2 + 8n + 1 < 8.1n^2$. We use ``$S$ \emph{expands}'' as a shorthand for $|E(S, \bar{S})| \ge c \cdot |S|$ for some constant $c > 0$.

Note that the node sets in the graph have the following sizes:
$|L_1|=2n$; $|L_2|=2n+1$; $|L_3|=|L_4|=4n^2+2n$, and $|R_i| = |L_i|$.
We divide the expansion proof into two sections based on whether its intersection with the middle layers ($L_4$ or $R_4$) is large or small.

We first deal with the case when the intersection is large.

\begin{restatable}{lemma}{matchExpandIntersectLarge}
\label{lem:matchExpandIntersectLarge}
If $| S \cap L_4 | \ge 0.1n^2 $ or $|S \cap R_4 | \ge 0.1 n^2$, then $S$ expands.
\end{restatable}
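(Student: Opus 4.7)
The plan is as follows. By the $L_4/R_4$ symmetry of the construction, I may assume without loss of generality that $|S \cap L_4| \ge 0.1n^2$. Let $h_e > 0$ denote the constant expansion of the internal expander installed on $L_4$ (and on $R_4$) by the reinforced gadget construction, and recall $|L_4| = |R_4| = 4n^2 + 2n$ and, under the standing assumption $n > 90$, $|S| < N/2 < 8.1n^2$. My primary tool will be the two internal expanders on $L_4$ and $R_4$; when they alone do not suffice, I will fall back on the matching/path edges between $L_3$ and $L_4$ and on the matrix edges between $L_4$ and $R_4$.

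In the generic regime the argument is direct. If $|S \cap L_4| \le |L_4|/2$, I apply the expansion of the $L_4$-expander to $T = S \cap L_4$ (whose size lies between $0.1n^2$ and $|L_4|/2$) to obtain $\ge h_e|T| \ge 0.1\,h_e\,n^2$ edges in $E(S,\bar S)$. Symmetrically, if $|S \cap L_4| > |L_4|/2$ but $|L_4 \setminus S| \ge 0.1n^2$, expansion applied to $L_4 \setminus S$ gives the same bound. Combined with $|S| \le 8.1n^2$, each yields $|E(S,\bar S)|/|S| \ge h_e/81$, a universal positive constant.

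The subtle regime is $|L_4 \setminus S| < 0.1n^2$, which forces $|S \cap L_4| > 3.9n^2$; here I split further on $R_4$. If either $0.1n^2 \le |S \cap R_4| \le |R_4|/2$, or $|R_4 \setminus S| \ge 0.1n^2$ together with $|S \cap R_4| > |R_4|/2$, then the $R_4$-expander closes the case exactly as above. Two residual sub-cases remain. If $|S \cap R_4| < 0.1n^2$, I exploit the matrix edges: the reinforcement $\hat M = \begin{psmallmatrix} M & 1 \\ 1 & 1 \end{psmallmatrix}$ guarantees at least $3n^2$ matrix edges overall, of which at most $|L_4 \setminus S| < 0.1n^2$ can be incident to $L_4 \setminus S$ (each $L_4$-vertex has at most one matrix edge), so $\ge 2.9n^2$ emanate from $S \cap L_4$; at most $|S \cap R_4| < 0.1n^2$ of these can terminate inside $S \cap R_4$ (again one matrix edge per $R_4$-vertex), leaving $\ge 2.8n^2$ cut edges. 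If instead $|R_4 \setminus S| < 0.1n^2$, then symmetrically $|S \cap R_4| > 3.9n^2$, and coupled with $|S| < 8.1n^2$ this forces $|S \cap L_3| < 0.3n^2$; since every $L_4$-vertex contributes its matching edge into $L_3$ while every $L_3$-vertex has at most two neighbours in $L_4$, the cut contains $\ge |S \cap L_4| - 2|S \cap L_3| \ge 3.9n^2 - 0.6n^2 = 3.3n^2$ matching edges from $S \cap L_4$ to $L_3 \setminus S$.

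In every case the cut carries $\Omega(n^2)$ edges while $|S| = O(n^2)$, so $|E(S, \bar S)|/|S|$ is bounded below by a universal constant, which is what the lemma claims. The main obstacle will be the last sub-case, where $S$ nearly engulfs $L_4 \cup R_4$ and both internal expanders are silent; there the whole argument hinges on the global budget constraint $|S| < N/2$ forcing $|S \cap L_3|$ down to $O(n^2)$ with a tiny constant, so that the $L_3$-to-$L_4$ matching edges are compelled to cross the cut.
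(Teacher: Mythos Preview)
Your proof is correct and follows essentially the same decomposition as the paper: a ``medium'' case handled by the internal expander on $L_4$ (or $R_4$), and when $S$ nearly engulfs $L_4$, a further split on $|S\cap R_4|$ into (i) medium on $R_4$ (expander), (ii) small on $R_4$ (matrix edges from the padded $\hat M$), and (iii) large on $R_4$ (the $L_3$--$L_4$ path/matching edges, using the global budget $|S|<8.1n^2$ to force $|S\cap L_3|$ small). Your thresholds differ cosmetically from the paper's ($|L_4|-0.1n^2$ versus $3.9n^2$, and $|S\cap L_3|<0.3n^2$ versus the paper's $0.4n^2$), and in the final sub-case you phrase the count via ``matching edges'' while actually bounding with the factor~$2$ coming from all $L_3$--$L_4$ path adjacencies; the paper instead bounds $|E(S\cap L_4,S\cap L_3)|$ by $3|S\cap L_3|$ using $\deg_{L_3}\le 3$. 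These are inessential variations of the same argument.
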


We encapsulate the proof ideas for this lemma in~\Cref{tab:matchLarge}, and split the proof into three lemmas for convenience.
First, we show that if its intersection with both the middle layers is very large, then $S$ expands. Our proof uses the canonical matching on $L_3 \cup L_4$.

\begin{table}[t]
    \renewcommand{\arraystretch}{1.2}
    \centering
    \begin{tabular}{|c|c|c|}
    \hline
    \textbf{Sizes} & \textbf{Proof ideas} & \textbf{Proof} \\ \hline
    $S_{L4} > 3.9 n^2$ $\wedge$ $S_{R4} > 3.9 n^2$ & Use the perfect matching on $L_3 \cup L_4$ & \Cref{lem:matchExpandBothLarge} \\ \hline
    $0.1 n^2 \le S_{L4} \le 3.9 n^2$ & Use the expander on $L_4$ & \Cref{lem:matchExpandIntersectMed} \\ \hline
    $S_{L4} > 3.9 n^2$ $\wedge$ $S_{R4} < 0.1 n^2$ & Use the edges of matrix $M$ & \Cref{lem:matchExpandOneLargeOneSmall} \\ \hline
    \end{tabular}
    \caption{Ideas for the proof of expansion in the matching reduction graph when either $S \cap L_4$ or $S \cap R_4$ is large. we use $S_{L4}$ as shorthand for $|S \cap L_4|$. The cases when $0.1 n^2 \le S_{R4} \le 3.9 n^2$ and $S_{R4} > 3.9 n^2$ $\wedge$ $S_{L4} < 0.1 n^2$ are symmetric to the ones presented in the table.}
    \label{tab:matchLarge}
\end{table}

\begin{restatable}{lemma}{matchExpandBothLarge}
\label{lem:matchExpandBothLarge}
If $|S \cap L_4| > 3.9 n^2$ and $|S \cap R_4| > 3.9 n^2$, then $S$ expands.
\end{restatable}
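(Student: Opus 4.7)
\textbf{Proof plan for \Cref{lem:matchExpandBothLarge}.} The plan is to exploit the canonical perfect matching on $L_3\cup L_4$ to exhibit $\Theta(n^2)$ edges crossing the cut $(S,\bar S)$. The essential observation is that when both middle layers have large intersection with $S$, the total size budget $|S|<8.1n^2$ forces the set $L_3\cap S$ to be very small, so most canonical matching edges incident to $L_4\cap S$ must have their $L_3$ endpoint in $\bar S$.

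First I would bound the remainder of $S$ outside the middle layers. From $|S\cap L_4|+|S\cap R_4|>7.8n^2$ and $|S|<8.1n^2$, I obtain
\[
\bigl|S\setminus(L_4\cup R_4)\bigr| \;<\; 8.1n^2 - 7.8n^2 \;=\; 0.3n^2,
\]
and in particular $|S\cap L_3|<0.3n^2$. Next, recall that the construction places a canonical matching inside the reduction gadget $L_3\cup L_4$ that pairs each $L_3[i,j]$ with $L_4[i,j]$; this is a perfect matching on $L_3\cup L_4$ because $|L_3|=|L_4|=4n^2+2n$. Let $k$ denote the number of nodes $x\in S\cap L_4$ whose canonical partner lies in $\bar S\cap L_3$. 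Since at most $|S\cap L_3|$ nodes of $S\cap L_4$ can have their partner inside $S$,
\[
k \;\ge\; |S\cap L_4| - |S\cap L_3| \;>\; 3.9n^2 - 0.3n^2 \;=\; 3.6n^2.
\]
Each such pair contributes a distinct edge to $E(S,\bar S)$, so $|E(S,\bar S)|\ge 3.6n^2$.

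Finally, dividing by $|S|<8.1n^2$ gives $|E(S,\bar S)|/|S| > 3.6/8.1 > 4/9$, a positive constant independent of $n$, so $S$ expands. I do not expect a genuine obstacle here: the argument is a short counting step once the size budget is unpacked, and it does not rely on the expander edges placed inside $L_4$ or on the $M$-edges, which are instead needed for the remaining cases in \Cref{tab:matchLarge}. The only subtle point to double-check is that the canonical matching edges counted above are indeed edges of the reduction graph (they are, by definition of the even reinforced gadget on $L_3\cup L_4$) and that they are distinct from the expander edges on $L_4$ and the $M$-edges, which it suffices to note but not to use.
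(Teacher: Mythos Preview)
Your proposal is correct and follows essentially the same approach as the paper: bound $|S\cap L_3|$ by the size budget, then use the canonical matching between $L_3$ and $L_4$ to extract $\Theta(n^2)$ cut edges. Your version is in fact marginally tighter (you get $3.6n^2$ cut edges versus the paper's $2.7n^2$) because you count only matching edges into $S\cap L_3$ rather than all edges, but this is a cosmetic difference.
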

\begin{proof}
The two conditions together imply that $|S \cap L_3 | \le 0.4 n^2$. Thus
\begin{align*}
|E(S, \bar{S})|
&\ge |E(S\cap L_4, \bar{S} \cap L_3)| \\
&= |E(S\cap L_4, L_3)| - |E(S \cap L_4, S \cap L_3)| \\
&\ge 3.9 n^2 - |E(S \cap L_4, S \cap L_3)| \tag*{(by the canonical matching)}\\
&\ge 3.9 n^2 - 1.2n^2 \tag*{(since $\deg(v) \le 3$ for all $v \in L_3$)}\\
&\ge (2.7/8.1) \cdot |S| \tag*{(by upper bound on $|S|$)}
\end{align*}
which proves our claim.
\end{proof}

Next, we show that if the intersection with one of the middle layers is of medium size, then $S$ expands. We prove this by using the fact that there is an $h_0$-expander on the middle layers.

\begin{restatable}{lemma}{matchExpandIntersectMed}
\label{lem:matchExpandIntersectMed}
If $0.1n^2 \le | S \cap L_4 | \le 3.9n^2$, then $S$ expands.
\end{restatable}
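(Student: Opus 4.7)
The plan is to use the degree-$d$ expander that the reinforced-gadget construction plants on $L_4$ (recall $|L_4|=4n^2+2n$). First, from the hypothesis $0.1n^2\le |S\cap L_4|\le 3.9n^2$, I would deduce that \emph{both} $T:=S\cap L_4$ and its complement within the layer, $L_4\setminus T$, have size at least $0.1n^2$: the lower bound on $|T|$ is given, and
\[
|L_4\setminus T|\ge (4n^2+2n)-3.9n^2\ge 0.1n^2.
\]
Consequently $\min(|T|,|L_4\setminus T|)\ge 0.1\,n^2$.

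Next I would invoke the expansion of the planted expander on $L_4$. Whichever of $T$ or $L_4\setminus T$ has size at most $|L_4|/2$, the $h_0$-expansion guarantees at least $h_0\cdot\min(|T|,|L_4\setminus T|)\ge 0.1\,h_0\,n^2$ expander edges crossing the internal cut $(T,L_4\setminus T)$. Every such edge has exactly one endpoint in $S\cap L_4\subseteq S$ and the other in $L_4\setminus T\subseteq \bar S$, so it is counted in $|E(S,\bar S)|$; all other edges of the reduction graph can only help.

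Finally I would combine this lower bound with the upper bound $|S|\le 8.1\,n^2$ (valid for $n>90$, per the preamble of the expansion proof) to obtain
\[
|E(S,\bar S)|\ \ge\ \frac{0.1\,h_0}{8.1}\cdot |S|,
\]
a constant fraction of $|S|$, as required. The main (and only) subtlety is handling the two directions of the internal $L_4$-cut symmetrically, but this is immediate: the $h_0$-expansion bound depends only on the size of the smaller side, and I have a uniform lower bound of $0.1n^2$ on both sides. Notably, no information about $S\cap L_3$, $S\cap R_\ast$, or the input-dependent edges is needed---all the required expansion already appears among the planted expander edges inside $L_4$. The symmetric case $0.1n^2\le|S\cap R_4|\le 3.9n^2$ follows by swapping the roles of $L$ and $R$.
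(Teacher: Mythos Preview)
Your proof is correct and follows essentially the same approach as the paper: both use the $h_0$-expander on $L_4$ to lower-bound $|E(S\cap L_4,\bar S\cap L_4)|$ by $h_0$ times the smaller side (which is at least $0.1n^2$ under the hypothesis), and then divide by the global bound $|S|\le 8.1n^2$ to get a constant expansion factor. The only cosmetic difference is that the paper defines $T$ as the smaller of $S\cap L_4$ and $\bar S\cap L_4$, whereas you fix $T=S\cap L_4$ and argue symmetrically.
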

\begin{proof}
Let $T$ be the smaller of the two sets $S \cap L_4$ and $\bar{S} \cap L_4$, then $|T| \le |L_4|/2$.
Thus
\begin{align*}
    |E(S, \bar{S})|
    &\ge |E(S \cap L_4, \bar{S} \cap L_4)| \\
    &= |E(T, L_4 \setminus T)| \tag*{(by definition of $T$)} \\
    &\ge h_0 \cdot |T| \tag*{(by expansion)}\\
    &\ge h_0 \cdot 0.1 \cdot n^2 \tag*{(by constraint on $|S \cap L_4|$)}\\
    &\ge h_0 \cdot (0.1/8.1) \cdot |S| \tag*{(by upper bound on $|S|$)}
\end{align*}
which proves our claim.
\end{proof}

The above observation also holds when the size of $S \cap R_4$ is in the same range, using the expander on $R_4$. Finally, we show that if one of the intersections is large and the other is small, then $S$ expands. We use the edges of the matrix $M$ to do this.

\begin{restatable}{lemma}{matchExpandOneLargeOneSmall}
\label{lem:matchExpandOneLargeOneSmall}
If $|S \cap L_4| > 3.9 n^2$ and $|S \cap R_4| < 0.1n^2$, then $S$ expands.
\end{restatable}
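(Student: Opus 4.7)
The key observation is that the augmented matrix $\hat M = \begin{psmallmatrix} M & \mathbf{1} \\ \mathbf{1} & \mathbf{1}\end{psmallmatrix}$ guarantees at least $3n^2$ entries equal to $1$ (the three all-ones blocks of size $n\times n$), independently of the input $M$, and hence the graph always contains at least $3n^2$ edges of the form $(L_4[i,j],R_4[j,i])$ crossing from $L_4$ to $R_4$. The plan is to show that so many of these $M$-edges must leave $S$ that the expansion inequality follows from the $M$-edges alone, without having to invoke the path, matching, or expander edges on the sides.

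\textbf{Key steps.} First, since $|L_4|=4n^2+2n$ and $|S\cap L_4|>3.9n^2$, the complement $L_4\setminus S$ has size strictly less than $0.1n^2+2n$. Second, by construction each node $L_4[i,j]$ is incident to at most one $M$-edge (the potential edge to $R_4[j,i]$), and symmetrically each node of $R_4$ is incident to at most one $M$-edge. Consequently, the number of $M$-edges with an endpoint in $L_4\setminus S$ is at most $|L_4\setminus S|<0.1n^2+2n$, so the number of $M$-edges with an endpoint in $S\cap L_4$ is at least $3n^2-0.1n^2-2n$. Third, among these $M$-edges, the number that go into $S\cap R_4$ is at most $|S\cap R_4|<0.1n^2$, again because each $R_4$ node has at most one $M$-edge. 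Therefore the number of $M$-edges from $S\cap L_4$ to $R_4\setminus S$ is at least $3n^2-0.2n^2-2n\ge 2.7n^2$ for all sufficiently large $n$, which gives
\[
|E(S,\bar S)|\ \ge\ 2.7n^2\ \ge\ \frac{2.7}{8.1}\,|S|\ =\ \frac{1}{3}\,|S|,
\]
using the standing bound $|S|\le 8.1n^2$.

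\textbf{Main obstacle.} The delicate point is not the arithmetic but justifying the two ``at most one $M$-edge per endpoint'' bounds; this is where the reduction's careful indexing pays off, because $L_4[i,j]\mapsto R_4[j,i]$ is an injection in both coordinates, so a single entry $\hat M_{ij}$ contributes exactly one edge and uses up the $M$-degree of both its endpoints. Once this is in place, the slack provided by the augmented blocks of $\hat M$ (giving $3n^2$ guaranteed $M$-edges, compared to the $0.1n^2$ budget available on the $R_4$ side of $S$) is more than enough to absorb the lower-order $2n$ terms and to yield a concrete constant expansion factor. The analogous statement with the roles of $L_4$ and $R_4$ swapped follows by the same argument, completing this final case of \Cref{lem:matchExpandIntersectLarge}.
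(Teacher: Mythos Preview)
Your proof is correct and follows essentially the same route as the paper's: both arguments count only the $M$-edges between $L_4$ and $R_4$, use that there are at least $3n^2$ such edges, subtract off at most $|\bar S\cap L_4|$ edges lost on the $L_4$ side and at most $|S\cap R_4|<0.1n^2$ edges absorbed on the $R_4$ side, and conclude $|E(S,\bar S)|\ge (2.7/8.1)|S|$. If anything, you are slightly more explicit than the paper in two places: you spell out that $|L_4\setminus S|<0.1n^2+2n$ (the paper silently absorbs the $2n$ into its $0.2n^2$ bound, valid since $n>90$), and you justify the ``at most one $M$-edge per endpoint'' bound via the injection $L_4[i,j]\mapsto R_4[j,i]$, which the paper leaves implicit.
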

\begin{proof}
Recall that our reduction graph has $\ge 3n^2$ edges crossing from $L_4$ to $R_4$. Thus
\begin{align*}
|E(S, \bar{S})|
&\ge |E(S\cap L_4, \bar{S} \cap R_4)| \\
&= |E(S\cap L_4, R_4)| - |E(S\cap L_4, S \cap R_4)| \\
&\ge |E(S\cap L_4, R_4)| - 0.1n^2 \tag*{(by constraint on $|S \cap R_4|$)}\\
&= |E(L_4, R_4)| - |E(\bar{S} \cap L_4, R_4)| - 0.1n^2 \\
&\ge 3n^2 - |E(\bar{S} \cap L_4, R_4)| - 0.1n^2 \tag*{(by construction)}\\
&\ge 3n^2 - 0.2n^2 - 0.1n^2 \tag*{(by constraint on $|S \cap L_4|$)}\\
&\ge (2.7/8.1) \cdot |S| \tag*{(by upper bound on $|S|$)}
\end{align*}
which proves our claim.
\end{proof}

A symmetric argument works by swapping $L_4$ and $R_4$ in the above proof. Lemmas~\ref{lem:matchExpandBothLarge}, \ref{lem:matchExpandIntersectMed}, and \ref{lem:matchExpandOneLargeOneSmall} together prove \Cref{lem:matchExpandIntersectLarge}. We are left with the case when the intersection of $S$ with both the middle layers is small.

\begin{restatable}{lemma}{matchExpandIntersectSmall}
\label{lem:matchExpandIntersectSmall}
If $| S \cap L_4 | < 0.1n^2 $ and $|S \cap R_4 | < 0.1 n^2$, then $S$ expands.
\end{restatable}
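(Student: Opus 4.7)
The plan is to decompose $S$ across the layers and lower-bound $|E(S, \bar S)|$ by summing the crossing edges contributed by each of the four vertex-disjoint reinforced gadgets on $L_1 \cup L_2$, $L_3 \cup L_4$, $R_1 \cup R_2$, $R_3 \cup R_4$, together with the input-dependent $u$- and $v$-edges. Write $S_{L3} = |S \cap L_3|$, $S_{L4} = |S \cap L_4|$, $S_{R3} = |S \cap R_3|$, $S_{R4} = |S \cap R_4|$, and $S_{\mathrm{out}} = |S \cap (L_1 \cup L_2 \cup R_1 \cup R_2)|$, so that $|S| = S_{L3} + S_{L4} + S_{R3} + S_{R4} + S_{\mathrm{out}}$ with $S_{L4}, S_{R4} < 0.1\,n^2$ by hypothesis and $S_{\mathrm{out}} \le 8n+2$ trivially. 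Since the four gadgets have pairwise disjoint vertex sets, their internal edge sets are disjoint, and the individual crossing counts add.

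For the middle gadget $L_3 \cup L_4$, two disjoint sources of crossings appear. First, every vertex of $S \cap L_3$ has one or two neighbors in $L_4$ via the path and canonical-matching edges, so at least $S_{L3} - 2 S_{L4}$ of them leave $S$ (each $L_4$-vertex in $S$ can absorb at most two of these edges). Second, since $S_{L4} < |L_4|/2$, the expander on $L_4$ contributes at least $h_0 S_{L4}$ additional crossings inside $L_4$. A short case split on whether $S_{L4} \le S_{L3}/3$ or $S_{L4} > S_{L3}/3$ shows that the sum is at least $\alpha_1 (S_{L3} + S_{L4})$ for an absolute constant $\alpha_1 > 0$, and the symmetric argument on $R_3 \cup R_4$ gives at least $\alpha_1 (S_{R3} + S_{R4})$.

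For each outer gadget, e.g.\ $L_1 \cup L_2$, the plan is to show it is itself a constant expander by handling three regimes. When $|S \cap L_2| \le |L_2|/2$, the $L_2$-expander together with the canonical-matching and path edges between $L_1$ and $L_2$ yields $\Omega(|S \cap (L_1 \cup L_2)|)$ crossings. When $|S \cap L_2| > |L_2|/2$ but $S \not\supseteq L_1 \cup L_2$, the complementary expander bound on $\bar S \cap L_2$ combined with matching crossings suffices. In the remaining extreme regime $S \supseteq L_1 \cup L_2$, the $2n$ input-dependent $u$-edges from $L_2$ into $L_3 \cup L_4$ take over: if fewer than $n$ of these cross, then at least $n$ of the $u$-endpoints lie in $S$, forcing $S_{L3} + S_{L4} \ge n$, so that the middle bound from the previous paragraph already covers the outer part of $|S|$ (which is $O(n)$ in this regime). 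An analogous bound holds for $R_1 \cup R_2$. Combining all contributions gives $|E(S, \bar S)| = \Omega(|S|)$.

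The main obstacle is the middle-gadget boundary case $S_{L3} \approx 2 S_{L4}$, in which the path-and-matching contribution $S_{L3} - 2 S_{L4}$ vanishes and the bound relies almost entirely on the $L_4$-expander; the case-split threshold and the constant $\alpha_1$ must be tuned carefully so that the two sources combine into a positive uniform lower bound. A parallel subtlety arises in the outer analysis when $S$ contains essentially all of $L_1 \cup L_2$, where the argument crucially leverages the hypothesis $S_{L4} < 0.1\,n^2$ to ensure that enough $u$-edge endpoints lie outside $S$, and one must bookkeep carefully to avoid double-counting crossings shared between the outer and middle accounting.
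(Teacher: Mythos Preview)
Your overall decomposition into the four reinforced gadgets plus the input-dependent edges is the same architecture the paper uses, and your middle-gadget bound $|E(S,\bar S)| \ge \alpha_1(S_{L3}+S_{L4})$ via the case split on $S_{L4}\lessgtr S_{L3}/3$ is correct (the paper uses the slightly sharper canonical-matching bound $S_{L3}-S_{L4}$ instead of your $S_{L3}-2S_{L4}$, but this is immaterial).

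The genuine gap is in your regime~2 for the outer gadget $L_1\cup L_2$. You claim that when $|S\cap L_2|>|L_2|/2$ but $S\not\supseteq L_1\cup L_2$, ``the complementary expander bound on $\bar S\cap L_2$ combined with matching crossings suffices.'' This fails. Take $S\cap L_2=L_2$, $S\cap L_1=L_1\setminus\{v\}$ for a single $v\in L_1$, and $S\cap(L_3\cup L_4)=\emptyset$. Then $\bar S\cap L_2=\emptyset$ so the complementary expander bound is zero, and the only internal crossings in $L_1\cup L_2$ are the at most two path edges incident to $v$; yet $|S\cap(L_1\cup L_2)|=4n$. In this example $S$ \emph{does} expand, but only because all $2n$ $u$-edges cross into $\bar S\cap(L_3\cup L_4)$ --- a mechanism you reserve for regime~3. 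Your regime~3 argument actually only needs $|S\cap L_2|$ large (so that enough $u$-edges emanate from $S$), not $S\supseteq L_1\cup L_2$; if you enlarge regime~3 to cover, say, $|S\cap L_2|\ge 0.7|L_2|$ and shrink regime~2 accordingly, the proof goes through.

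For comparison, the paper organizes the case analysis differently: it first disposes of the cases $S_{L3}>0.2n$ or $S_{L4}>0.1n$ (your middle-gadget bound already handles these, since then $\alpha_1(S_{L3}+S_{L4})$ dominates $|S\cap(L_1\cup L_2)|\le 4n+1$), and only afterwards treats the outer layers, knowing that $|S\cap L|\le 5.3n$ so that exhibiting $\Omega(n)$ crossings suffices. In that reduced setting the paper uses the $u$-edges whenever $|S\cap L_2|>0.7n$, the $L_1$--$L_2$ matching when $|S\cap L_1|$ is large but $|S\cap L_2|$ is not, and the reinforced-gadget expansion lemma when both are small. Your remark that the hypothesis $S_{L4}<0.1n^2$ is needed ``to ensure that enough $u$-edge endpoints lie outside $S$'' is not quite right: the hypothesis is used only to make the $L_4$-expander bound available; the $u$-edge argument works via the dichotomy you already stated (either many cross, or $S_{L3}+S_{L4}$ is forced large).
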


In this case, we show that the intersection of $S$ with the left side ($L = \cup_i L_i$) and the right side ($R = \cup_i R_i$) of the graph expand within their respective sides. This then proves expansion of $S$, since
\begin{align*}
|E(S, \bar{S})|
&\ge |E(S\cap L. \bar{S} \cap L)| + |E(S\cap R. \bar{S} \cap R)| \\
&\ge h_1 \cdot |S \cap L| + h_1 \cdot |S \cap R| \tag*{(by expansion within each side)} \\
&\ge h_1 \cdot |S|
\end{align*}

We now concentrate on proving the expansion of $S \cap L$ in $L$, since the right side expansion follows by similar arguments.

\begin{table}[t]
    \renewcommand{\arraystretch}{1.2}
    \centering
    \begin{tabular}{|c|c|c|}
    \hline
    \textbf{Sizes} & \textbf{Proof ideas} & \textbf{Proof} \\ \hline
    $S_{L3} > 2S_{L_4}$ $\wedge$ $S_{L3} > 0.2 n$ & Use the perfect matching on $L_3 \cup L_4$ & \Cref{lem:matchExpandThreeLarge} \\ \hline
    $S_{L3} \le 2S_{L_4}$ $\wedge$ $S_{L4} > 0.1 n$ & Use the expander on $L_4$ & \Cref{lem:matchExpandThreeSmall} \\ \hline
    \end{tabular}
    \caption{Ideas for the first part of the proof of expansion in the left side of the matching reduction graph when $S \cap L_4$ is small ($< 0.1 n^2$). Here we deal with the case when either $S_{L4} > 0.1n$ or $S_{L3} > 0.2n$. }
    \label{tab:matchSmall}
\end{table}

\begin{restatable}{lemma}{matchExpandThreeLarge}
\label{lem:matchExpandThreeLarge}
If $|S \cap L_3| > 2 |S \cap L_4|$ and $|S \cap L_3| > 0.2 n$, then $S$ expands.
\end{restatable}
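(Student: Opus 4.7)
The plan is to exploit the canonical matching between $L_3$ and $L_4$, which is part of the base matching $B$ and pairs each $L_3[i,j]$ with $L_4[i,j]$. Among the $|S \cap L_3|$ matching edges whose $L_3$-endpoint lies in $S$, at most $|S \cap L_4|$ can have their $L_4$-endpoint also in $S$, so the remaining matching edges cross the cut inside $L$. This yields
\[
|E(S \cap L, \bar{S} \cap L)| \;\ge\; |S \cap L_3| - |S \cap L_4| \;>\; \tfrac{1}{2}\,|S \cap L_3|,
\]
where the strict inequality uses the hypothesis $|S \cap L_3| > 2|S \cap L_4|$.

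Next I would upper-bound $|S \cap L|$. Since $|L_1|+|L_2| = 4n+1$ and $|S \cap L_4| < |S \cap L_3|/2$,
\[
|S \cap L| \;\le\; (4n+1) + |S \cap L_3| + |S \cap L_4| \;\le\; (4n+1) + \tfrac{3}{2}\,|S \cap L_3|.
\]
Invoking the second hypothesis $|S \cap L_3| > 0.2n$, for sufficiently large $n$ one has $4n+1 \le 21\,|S \cap L_3|$, giving $|S \cap L| \le 22.5\,|S \cap L_3|$. Combining these bounds produces
\[
|E(S \cap L, \bar{S} \cap L)| \;\ge\; \tfrac{1}{2}\,|S \cap L_3| \;\ge\; \tfrac{1}{45}\,|S \cap L|,
\]
which is exactly the constant-factor expansion of $S \cap L$ inside $L$ that this stage of the overall case analysis aims to establish. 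Combined with the symmetric bound for $S \cap R$ inside $R$ (obtained by applying the same argument to the $R_3$--$R_4$ canonical matching in the corresponding right-side sub-case), summing the two contributions to the cut gives $|E(S,\bar S)| \ge h_1 \cdot |S|$ for a constant $h_1 > 0$.

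The main obstacle here is bookkeeping rather than any deep issue. The matching step itself is immediate from the bijection that $B$ induces between $L_3$ and $L_4$; what requires care is matching the constants to the complementary sub-case (handled in \Cref{lem:matchExpandThreeSmall} via the $L_4$-expander) and checking that the additional reinforced-gadget edges, which are not counted in the matching argument, cannot invalidate the lower bound because they only add to $|E(S \cap L, \bar S \cap L)|$ rather than subtracting from it.
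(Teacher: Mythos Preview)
Your proposal is correct and follows essentially the same approach as the paper: use the canonical $L_3$--$L_4$ matching to get $|E(S\cap L,\bar S\cap L)|\ge |S\cap L_3|-|S\cap L_4|>\tfrac12|S\cap L_3|$, then use $|L_1\cup L_2|=4n+1$ together with $|S\cap L_3|>0.2n$ to bound $|S\cap L|$ by a constant multiple of $|S\cap L_3|$. The only difference is in the exact constants (the paper obtains $22$ where you obtain $22.5$), which is immaterial.
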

\begin{proof}

First, we lower bound the number of crossing edges by $c \cdot |S \cap L_3|$.
\begin{align*}
|E(S \cap L, \bar{S} \cap L)|
&\ge |E(S\cap L_3, \bar{S} \cap L_4)| \\
&\ge |S \cap L_3| - |S \cap L_4| \tag*{(by the canonical matching)}\\
&\ge 0.5 \cdot |S \cap L_3| \tag*{(by constraint on $|S \cap L_4|$)}\\
\end{align*}
Then we lower bound $22 \cdot |S \cap L_3|$ by $|S \cap L|$, which proves our claim. The final inequality uses the constraint that $|S \cap L_3| > 0.2 n$.
\begin{align*}
22 \cdot |S \cap L_3|
&= |S \cap L_3| + |S \cap L_3| + 20 \cdot |S \cap L_3| \\
&\ge |S \cap L_3| + |S \cap L_4| + 20 \cdot |S \cap L_3| \tag*{(by constraint on $|S \cap L_4|$)}\\
&\ge |S \cap L_3| + |S \cap L_4| + |S \cap (L_1 \cup L_2)| \tag*{(using $|L_1 \cup L_2| = 4n+1$)}
\end{align*}
as required.
\end{proof}

\begin{restatable}{lemma}{matchExpandThreeSmall}
\label{lem:matchExpandThreeSmall}
If $|S \cap L_3| \le 2 |S \cap L_4|$ and $|S \cap L_4| > 0.1 n$, then $S$ expands.
\end{restatable}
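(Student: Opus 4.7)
The plan is to mimic the structure of \Cref{lem:matchExpandThreeLarge} but to draw the expansion from the expander placed on $L_4$ instead of from the canonical matching. Let $T = S \cap L_4$. The outer hypothesis of \Cref{lem:matchExpandIntersectSmall} gives $|T| < 0.1 n^2$, while $|L_4|/2 = 2n^2 + n$, so $T$ lies on the smaller side of the cut $(T, L_4 \setminus T)$ in the expander on $L_4$. Hence by the $h_0$-expansion of that gadget,
\[
|E(S \cap L_4,\ \bar S \cap L_4)| \;\ge\; h_0\cdot |S\cap L_4|.
\]
These are all edges internal to $L$, so they contribute to $|E(S\cap L, \bar S\cap L)|$.

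The second step is to bound $|S\cap L|$ from above by a constant multiple of $|S \cap L_4|$, using the two hypotheses of the lemma. I would write
\[
|S\cap L| \;=\; |S\cap L_1| + |S\cap L_2| + |S\cap L_3| + |S\cap L_4| \;\le\; (4n+1) + 2|S\cap L_4| + |S\cap L_4|,
\]
applying $|L_1\cup L_2| = 4n+1$ and the hypothesis $|S\cap L_3|\le 2|S\cap L_4|$. Using $|S\cap L_4| > 0.1n$ (together with the paper's standing assumption $n>90$, which makes the constant term negligible) I can absorb the $4n+1$ term into a constant multiple of $|S\cap L_4|$, obtaining something like $|S\cap L| \le 44\,|S\cap L_4|$.

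Combining the two inequalities yields $|E(S\cap L, \bar S\cap L)| \ge (h_0/44)\cdot |S\cap L|$, so $S\cap L$ expands within $L$ with a constant factor, which is exactly what is needed to feed into the combined left/right argument sketched just before \Cref{lem:matchExpandThreeLarge}. The same reasoning with $L_4, L_3, L_1, L_2$ replaced by their $R$-counterparts handles the right side.

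The only real subtlety, and the step I would check most carefully, is the first one: verifying that $T$ really is on the smaller side of the cut in the $L_4$-expander, so that the expansion inequality applies. This is where the two numerical bounds $|S\cap L_4|<0.1n^2$ (from the containing lemma) and $|S\cap L_4|>0.1n$ (a hypothesis of this lemma) conspire, the upper bound ensuring expansion applies and the lower bound ensuring the additive $O(n)$ term from $|L_1\cup L_2|$ is negligible. Everything else is just unwinding constants.
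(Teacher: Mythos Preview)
Your proposal is correct and follows essentially the same approach as the paper: use the outer hypothesis $|S\cap L_4|<0.1n^2<|L_4|/2$ to apply the $h_0$-expander on $L_4$, then bound $|S\cap L|$ by a constant multiple of $|S\cap L_4|$ via $|S\cap L_3|\le 2|S\cap L_4|$ and $|L_1\cup L_2|=4n+1\le 41\cdot|S\cap L_4|$. The paper's proof is actually terser than yours---it simply invokes ``similar to \Cref{lem:matchExpandThreeLarge}'' for the second step---so you have spelled out the constants that the paper leaves implicit.
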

\begin{proof}
Recall that we are working in the case when $|S \cap L_4| < 0.1 n^2 < |L_4|/2$.
\begin{align*}
|E(S \cap L, \bar{S} \cap L)|
&\ge |E(S\cap L_4, \bar{S} \cap L_4)| \\
&\ge h_0 \cdot |S \cap L_4| \tag*{(by expansion)}
\end{align*}
Lower bounding $|S \cap L_4|$ by $c \cdot |S \cap L|$ similar to~\Cref{lem:matchExpandThreeLarge} gives us the claim.
\end{proof}

Now we are only left with the case when $|S \cap L_4| < 0.1n$ and $|S \cap L_3| < 0.2n$.
In what follows, we use the bound below on $|S \cap L|$.
\begin{align*}
|S \cap L|
&= |S \cap (L_1 \cup L_2)| + |S \cap (L_3 \cup L_4)| \\
&\le 5n + |S \cap (L_3 \cup L_4)| \tag*{(by bound on $|L_1 \cup L_2|$)}\\
&\le 5.3n \tag*{(by constraint on $|S \cap (L_3 \cup L_4)|$)}
\end{align*}

\begin{table}[t]
    \renewcommand{\arraystretch}{1.2}
    \centering
    \begin{tabular}{|c|c|c|}
    \hline
    \textbf{Sizes} & \textbf{Proof ideas} & \textbf{Proof} \\ \hline
    $S_{L2} \ge 0.7 n$ & Use the edges of $u$ & \Cref{lem:matchExpandTwoLarge} \\ \hline
    $S_{L2} < 0.7 n$ $\wedge$ $S_{L1} \ge 1.3 n$ & Use the canonical matching on $L_1 \cup L_2$ & \Cref{lem:matchExpandOneLarge} \\ \hline
    $S_{L2} < 0.7 n$ $\wedge$ $S_{L1} < 1.3 n$ & Use the gadget expansion & \Cref{lem:matchExpandAllSmall} \\ \hline
    \end{tabular}
    \caption{Ideas for the second part of the proof of expansion in the left side of the matching reduction graph when $S \cap L_4$ is small. Here we deal with the case when $S_{L4} < 0.1 n$ and $S_{L3} < 0.2 n$.}
    \label{tab:matchVSmall}
\end{table}

\begin{restatable}{lemma}{matchExpandTwoLarge}
\label{lem:matchExpandTwoLarge}
If $|S \cap L_2| > 0.7 n$, then $S$ expands.
\end{restatable}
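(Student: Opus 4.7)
The plan is to exploit the input-dependent $u$-edges, which by construction contribute one edge per index $i\in[2n]$ from $L_2[i]$ to either $L_3[i,0]$ or $L_4[i,0]$, while $L_2[0]$ carries no $u$-edge at all. Since $|S\cap L_2|>0.7n$, at least $0.7n-1$ vertices of $S\cap L_2$ have an incident $u$-edge. These $u$-edges are vertex-disjoint at their $L_3\cup L_4$ endpoints, because only first-column vertices $L_3[i,0]$ or $L_4[i,0]$ receive a $u$-edge, and each such vertex receives at most one.

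First I would lower bound $|E(S\cap L,\bar{S}\cap L)|$ by the number of these $u$-edges whose $L_3\cup L_4$ endpoint lies in $\bar{S}$. The number of $u$-edges that instead stay inside $S$ is bounded by $|S\cap(L_3\cup L_4)|\leq |S\cap L_3|+|S\cap L_4|<0.2n+0.1n=0.3n$ by the standing case assumptions $|S\cap L_3|<0.2n$ and $|S\cap L_4|<0.1n$. Subtracting, at least $(0.7n-1)-0.3n=0.4n-1$ crossing edges within $L$ arise.

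To conclude expansion within $L$, I would invoke the already-established upper bound $|S\cap L|\leq 5.3n$, valid throughout this case. The ratio $|E(S\cap L,\bar{S}\cap L)|/|S\cap L|$ is then at least $(0.4n-1)/(5.3n)$, which is bounded below by a positive constant $h_1$ under the standing assumption $n>90$. The symmetric argument on $R$ gives $|E(S\cap R,\bar{S}\cap R)|\geq h_1|S\cap R|$, and summing the two inequalities proves $|E(S,\bar{S})|\geq h_1|S|$ as desired.

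The argument is essentially a careful double-count; the only subtleties are (i) the additive $-1$ correction for the exceptional vertex $L_2[0]$ which has no $u$-edge, and (ii) the fact that distinct $u$-edges occupy distinct endpoints in $L_3\cup L_4$, which is what lets $|S\cap(L_3\cup L_4)|$ serve as a valid upper bound on the number of internal $u$-edges. Neither step looks like a real obstacle.
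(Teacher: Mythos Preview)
Your proposal is correct and follows essentially the same approach as the paper: count the $u$-edges out of $S\cap L_2$, subtract those landing in $S\cap(L_3\cup L_4)$ (bounded by $0.3n$ via the matching property and the standing case assumptions), and compare against $|S\cap L|\le 5.3n$. You are in fact slightly more careful than the paper, which writes $0.7n$ rather than $0.7n-1$ and does not make the disjoint-endpoint observation explicit. One small remark: your final sentence about the symmetric argument on $R$ is context, not part of this lemma's proof---the paper handles the $R$-side by an independent case analysis, not by invoking the hypothesis $|S\cap L_2|>0.7n$.
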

\begin{proof}
We use the fact that there is an edge from $L_2$ to $L_3 \cup L_4$ regardless of the value of $u$.
\begin{align*}
|E(S \cap L, \bar{S} \cap L)|
&\ge |E(S\cap L_2, \bar{S} \cap (L_3 \cup L_4))| \\
&= |E(S\cap L_2, L_3 \cup L_4)| - |E(S\cap L_2, S \cap (L_3 \cup L_4))| \\
&\ge 0.7n - |E(S\cap L_2, S \cap (L_3 \cup L_4))| \tag*{(by edges of $u$)} \\
&\ge 0.4n \tag*{(by constraint on $S \cap (L_3 \cup L_4)$)} \\
&\ge (0.4/5.3) \cdot |S \cap L| \tag*{(by constraint on $S \cap L$)}
\end{align*}
which proves our claim.
\end{proof}

\begin{restatable}{lemma}{matchExpandOneLarge}
\label{lem:matchExpandOneLarge}
If $|S \cap L_2| < 0.7n$ and $|S \cap L_1| \ge 1.3 n$, then $S$ expands.
\end{restatable}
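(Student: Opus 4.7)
The plan is to exploit the canonical matching on $L_1 \cup L_2$, which pairs each $L_1[i]$ with its neighbor $L_2[i]$ along the path. Because these matching edges form a perfect matching between $L_1$ and $L_1$'s ``partners'' inside $L_2$, whenever $L_1[i] \in S$ and $L_2[i] \notin S$, the edge $(L_1[i], L_2[i])$ contributes to the cut $E(S \cap L, \bar S \cap L)$.

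The first step is to count such indices. At least $|S \cap L_1| \ge 1.3n$ indices $i$ satisfy $L_1[i] \in S$, but at most $|S \cap L_2| < 0.7n$ of them can also have $L_2[i] \in S$. Hence at least $1.3n - 0.7n = 0.6n$ matching edges cross the cut, giving $|E(S \cap L, \bar S \cap L)| \ge 0.6n$.

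The second step is to convert this absolute bound into a ratio using the upper bound $|S \cap L| \le 5.3n$ derived just before the lemma statement (valid in the regime $|S \cap L_4| < 0.1n$, $|S \cap L_3| < 0.2n$). This yields $|E(S \cap L, \bar S \cap L)| \ge (0.6/5.3) \cdot |S \cap L|$, which is the required constant expansion within the left side.

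There is essentially no obstacle here: the argument is a direct matching-pigeonhole count, and the only subtlety is to remember that we are bounding only the left-side contribution to the cut, with the right-side contribution handled symmetrically (and the combined expansion of $S$ then following by summing the left- and right-side inequalities as sketched in the text preceding \Cref{lem:matchExpandThreeLarge}). No use of the expander edges or of the input-dependent edges from $u$ is needed in this case.
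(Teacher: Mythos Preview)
Your proposal is correct and essentially identical to the paper's own proof: both use the canonical matching between $L_1$ and $L_2$ to get $|E(S\cap L,\bar S\cap L)|\ge |S\cap L_1|-|S\cap L_2|\ge 0.6n$, and then divide by the bound $|S\cap L|\le 5.3n$ to obtain the expansion ratio $0.6/5.3$.
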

\begin{proof}
We use the matching on $L_1 \cup L_2$, as follows
\begin{align*}
|E(S \cap L, \bar{S} \cap L)|
&\ge |E(S\cap L_1, \bar{S} \cap L_2)| \\
&\ge |S \cap L_1| - |S \cap L_2| \tag*{(by the canonical matching on $L_1 \cup L_2$)} \\
&\ge 0.6n \tag*{(by constraint on $L_1$ and $L_2$)} \\
&\ge (0.6/5.3) \cdot |S \cap L| \tag*{(by constraint on $S \cap L$)}
\end{align*}
which proves our claim.
\end{proof}

For the final case, we will need the following lemma about each reinforced gadget being an expander.

\begin{restatable}{lemma}{matchExpandGadget}
\label{lem:matchExpandGadget}
A reinforced gadget has constant expansion.
\end{restatable}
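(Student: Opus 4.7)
The plan is to fix an arbitrary $S\subseteq V$ with $|S|\le N/2$, set $A=S\cap X$ and $B=S\cap X'$, and show $|E(S,\bar S)|\ge c|S|$ for a constant $c$ depending only on the expansion parameter $h_0$ of the underlying degree-$d$ expander on $X$. By construction $|X|\ge |X'|$, so $|X|\ge N/2$, and the canonical matching $\mathrm{match}\colon X'\to X$ is injective, so $|\mathrm{match}^{-1}(T)|\le |T|$ for every $T\subseteq X$. The argument splits into two main cases according to whether $|A|\le |X|/2$ or $|A|>|X|/2$, and each case into two subcases.

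In Case~1 ($|A|\le |X|/2$), the expander on $X$ applied to $A$ provides at least $h_0|A|$ cut edges. If additionally $|B|\le 2|A|$, then $|S|\le 3|A|$ and these edges alone yield $|E(S,\bar S)|\ge h_0|S|/3$. If instead $|B|>2|A|$, I recruit the canonical matching: at most $|A|$ vertices of $B$ can be matched into $A$, so at least $|B|-|A|>|B|/2\ge |S|/3$ vertices of $B$ are matched into $X\setminus A\subseteq \bar S$, each producing a cut edge.

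Case~2 ($|A|>|X|/2$) is the main obstacle. Applying the expander guarantee to the set $X\setminus A$ (which now has size less than $|X|/2$) yields $h_0|X\setminus A|$ cut edges, and this already suffices whenever $|A|\le 3|X|/4$ because then $|X\setminus A|\ge |X|/4\ge N/8\ge |S|/4$. The delicate subcase is $|A|>3|X|/4$, where $X\setminus A$ can be very small and the expander alone gives too few cut edges. To handle it, I combine three constraints: from $3|X|/4<|A|\le |S|\le N/2$ I deduce $|X|<2N/3$ and hence $|X'|>N/3$; from $|A|>3|X|/4\ge 3N/8$ I deduce $|B|<N/8$; together these give $|X'\setminus B|>5N/24$ while $|X\setminus A|<|X|/4<N/6$. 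Counting matching edges out of $\bar S$, at least $|X'\setminus B|-|X\setminus A|>N/24$ vertices of $X'\setminus B\subseteq \bar S$ are matched into $A\subseteq S$, contributing that many cut edges, which is at least $|S|/12$ since $|S|\le N/2$.

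Taking the minimum of the four subcase bounds shows that the reinforced gadget has expansion at least $\min\{h_0/3,\,h_0/4,\,1/12\}$. The only real subtlety is the bookkeeping in Subcase~2b, where the numerical constraints on $|A|$, $|X|$, and $|B|$ must be used simultaneously to force $|X|$ into the narrow range $N/2\le |X|<2N/3$ in which the matching count beats the size of $S$.
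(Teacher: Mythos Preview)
Your proof is correct and follows essentially the same strategy as the paper: use the $h_0$-expander on $X$ when $|S\cap X|$ is balanced within $X$, and use the canonical matching between $X$ and $X'$ when it is not. The paper splits into three ranges for $|S\cap X|$ (below $0.1|X|$, between $0.1|X|$ and $0.9|X|$, above $0.9|X|$) rather than your two ranges at $|X|/2$ and $3|X|/4$, but the ingredients are identical.

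The one place your argument is more laborious than necessary is Subcase~2b. The paper handles the ``large $A$'' regime directly: from $|A|>0.9|X|$ and $|S|\le N/2\le |X|$ it gets $|B|<0.1|X|<(1/9)|A|$, and then the matching immediately gives $|E(S,\bar S)|\ge |A|-|B|\ge (8/9)|A|\ge (4/5)|S|$. Your route via $|X'\setminus B|-|X\setminus A|$ computes the same quantity (both equal $|A|-|B|$ up to the small gap $|X|-|X'|$), but you arrive at it through the auxiliary bounds $|X|<2N/3$, $|X'|>N/3$, $|B|<N/8$, which are not needed if one counts from the $A$ side instead of the $X'\setminus B$ side. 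This costs you in the final constant ($1/12$ versus the paper's $4/5$ in that case) but of course does not affect correctness.
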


We use this lemma as follows.

\begin{restatable}{lemma}{matchExpandAllSmall}
\label{lem:matchExpandAllSmall}
If $|S \cap L_2| < 0.7 n$ and $|S \cap L_1| < 1.3 n$, then $S$ expands.
\end{restatable}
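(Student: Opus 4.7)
The plan is to invoke Lemma~\ref{lem:matchExpandGadget} (that a reinforced gadget is a constant expander) separately on the two reinforced gadgets forming the left half of the graph, namely the one on $L_1\cup L_2$ and the one on $L_3\cup L_4$, and then sum the resulting edge counts.

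First I will collect the size bounds in force at this branch of the case analysis. The present hypothesis gives $|S\cap L_1|<1.3n$ and $|S\cap L_2|<0.7n$, and the fact that neither Lemma~\ref{lem:matchExpandThreeLarge} nor Lemma~\ref{lem:matchExpandThreeSmall} applies yields $|S\cap L_3|<0.2n$ and $|S\cap L_4|<0.1n$. Consequently
\begin{align*}
|S\cap(L_1\cup L_2)|&<2n\;\le\;\tfrac{4n+1}{2}\;=\;\tfrac{|L_1\cup L_2|}{2},\\
|S\cap(L_3\cup L_4)|&<0.3n\;\ll\;\tfrac{8n^2+4n}{2}\;=\;\tfrac{|L_3\cup L_4|}{2},
\end{align*}
so in each gadget the intersection of $S$ with the gadget lies in the smaller half, which is exactly the hypothesis needed for Lemma~\ref{lem:matchExpandGadget}.

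Applying Lemma~\ref{lem:matchExpandGadget} to each gadget yields a constant $h_g>0$ with
\begin{align*}
|E(S\cap(L_1\cup L_2),\,\overline S\cap(L_1\cup L_2))|&\ge h_g\cdot|S\cap(L_1\cup L_2)|,\\
|E(S\cap(L_3\cup L_4),\,\overline S\cap(L_3\cup L_4))|&\ge h_g\cdot|S\cap(L_3\cup L_4)|.
\end{align*}
These two edge sets are disjoint and both contained in $E(S\cap L,\overline S\cap L)$, so summing gives $|E(S\cap L,\overline S\cap L)|\ge h_g\cdot|S\cap L|$. Combined with the symmetric bound on the right half (obtained by the same three-step case analysis), the decomposition noted just before Lemma~\ref{lem:matchExpandThreeLarge} yields $|E(S,\overline S)|\ge h_g\cdot|S|$, so $S$ expands.

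The real work in this case is of course Lemma~\ref{lem:matchExpandGadget} itself; the present lemma is just a clean sum once that is in hand. The only point that needs care here is verifying that the accumulated upper bounds on the $|S\cap L_i|$ are tight enough to place $|S\cap(L_1\cup L_2)|$ in the smaller half of its (small) gadget, and as computed above this holds for all $n\ge 1$ with room to spare, while the bound for the large gadget is immediate.
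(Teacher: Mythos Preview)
Your proposal is correct and follows essentially the same approach as the paper: verify that $|S\cap(L_1\cup L_2)|<2n<|L_1\cup L_2|/2$ and $|S\cap(L_3\cup L_4)|<0.3n<|L_3\cup L_4|/2$, apply Lemma~\ref{lem:matchExpandGadget} to each reinforced gadget, and sum. The only quibble is the phrase ``with room to spare'' for the small gadget: since $|L_1\cup L_2|=4n+1$, the margin $2n<(4n+1)/2$ is just $1/2$, so the bound is correct but tight rather than loose.
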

\begin{proof}
In this case, we reason about $L_1 \cup L_2$ and $L_3 \cup L_4$ separately.
Since the intersection of $S$ with each reinforced gadget covers less than half the nodes, i.e.,
\begin{align*}
|S \cap (L_1 \cup L_2)| &<\ 2n\ < |L_1 \cup L_2|/2, \\
|S \cap (L_3 \cup L_4)| &< 0.3n < |L_3 \cup L_4|/2,
\end{align*}
we use the expansion of each reduction gadget from \Cref{lem:matchExpandGadget} to get
\begin{align*}
|E(S \cap (L_1 \cup L_2), \bar{S} \cap (L_1 \cup L_2))|
&\ge c \cdot |S \cap (L_1 \cup L_2)|, \\
|E(S \cap (L_3 \cup L_4), \bar{S} \cap (L_3 \cup L_4))|
&\ge c \cdot |S \cap (L_3 \cup L_4)|,
\end{align*}
giving the expansion of $S \cap L$ as required.
\end{proof}

Lemmas \ref{lem:matchExpandThreeLarge}--\ref{lem:matchExpandAllSmall} together prove \Cref{lem:matchExpandIntersectSmall}.
Lemmas \ref{lem:matchExpandIntersectLarge} and \ref{lem:matchExpandIntersectSmall} together show \Cref{lem:matchExpandExpansion}.
All that is left is to prove \Cref{lem:matchExpandGadget}, which we do below.

\matchExpandGadget*
\begin{proof}
Let $X' \cup X$ be a reinforced gadget, with the $h_0$-expander on $X$. Let $S$ be a subset of nodes of size $< |V|/2$. The proof of expansion follows in similar lines as before.

If the intersection with $X$ is large, then we use the matching edges (similar to \Cref{lem:matchExpandBothLarge}). Concretely, if $|S \cap X| > 0.9 |X|$, then since $|X'| \le |X|$ and $|S| < |V|/2$, we get that $|S \cap X'| < (1/9) \cdot |S \cap X|$. Thus
\begin{align*}
|E(S, \bar{S})|
&\ge |S \cap X| - |S \cap X'| \tag*{(by the matching on $X' \cup X$)} \\
&\ge (8/9) \cdot |S \cap X| \tag*{(by constraint on $|S \cap X'|$)} \\
&\ge (8/9) \cdot (9/10) \cdot |X| \tag*{(by constraint on $|S \cap X|$)} \\
&\ge (4/5) \cdot |S| \tag*{(since $|X| \ge |V|/2$ and $|S| \le |V|/2$)}
\end{align*}
If the intersection with $X$ is of medium size, then we use the expander on $X$ (similar to \Cref{lem:matchExpandIntersectMed}). Concretely, if $0.1 |X| \le |S \cap X| \le 0.9 |X|$, then let $T$ be the smaller of the two sets $S \cap X$ and $\bar{S} \cap X$. Then
\begin{align*}
|E(S, \bar{S})|
&\ge |E(S \cap X, \bar{S} \cap X)| \\
&= |E(T, X \setminus T)| \tag*{(by definition of $T$)} \\
&\ge h_0 \cdot |T| \tag*{(by expansion)}\\
&\ge h_0 \cdot 0.1 \cdot |X| \tag*{(by constraint on $|S \cap X|$)}\\
&\ge h_0 \cdot 0.1 \cdot |S| \tag*{(since $|S| \le |X|$)}
\end{align*}
We are left with the case when the intersection with $X$ is small, namely, $|S \cap X| < 0.1 |X|$. If $|S \cap X'| > 2 |S \cap X|$, then we use the matching edges again.
\begin{align*}
|E(S, \bar{S})|
&\ge |S \cap X'| - |S \cap X| \tag*{(by the matching on $X' \cup X$)}\\
&= (1/3) \cdot (2 |S \cap X'| - 4 |S \cap X| + |S \cap X'| + |S \cap X|) \\
&\ge (1/3) \cdot (|S \cap X'| + |S \cap X|) \tag*{(by constraint on $|S \cap X'|$)}
\end{align*}
which leaves us with the final case when $|S \cap X'| \le 2 |S \cap X|$. Here, we use the expander on $X$
\begin{align*}
|E(S, \bar{S})|
&\ge |E(S \cap X, \bar{S} \cap X)| \\
&\ge h_0 \cdot |S \cap X| \tag*{(by expansion)} \\
&\ge (1/3) \cdot h_0 \cdot (|S \cap X| + | S \cap X'|) \tag*{(by constraint on $|S \cap X'|$)}
\end{align*}
which shows that the reinforced gadget has constant expansion.
\end{proof}

\subsection{Power-law Graph}
\label{ssec:match_powerlaw}

We first make the reduction graph robust with respect to degree changes. We use the following static graph in our reduction.
\begin{itemize}
	\item A reduction gadget with one subgadget of size $2n+1$, on a set $L_1 \cup L_2$ as earlier.

	\item A reduction gadget with $2n$ subgadgets of size $2n+2$, on a set $L_3 \cup L_4$. The subgadgets are labelled $LG[i]$ for $1 \le i \le 2n$, and the nodes of subgadget $LG[i]$ are labelled $L_3[i,j]$ or $L_4[i,j]$ for $0 \le j \le n$ depending on whether the node is in $L_3$ or $L_4$. The path in each subgadget goes from $L_3[i,0]$ to $L_4[i,n]$.

    \item A copy of the above structure, with node sets marked $R_i$ instead of $L_i$.

	\item If $M_{ij} = 1$, then add the edges $(L_4[i,j], R_4[j,i])$ and $(L_4[n+i, n+j], R_4[n+j, n+i])$.

	\item If $M_{ij} = 0$, then add the edges $(L_4[i,j], R_4[n+j,n+i])$ and $(L_4[n+i, n+j], R_4[j, i])$.

	\item The edges for an input pair of vectors $(u,v)$ will be detailed later.
\end{itemize}

\begin{table}[t]
    \small
    \renewcommand{\arraystretch}{1.2}
    \centering
    \begin{tabular}{|c|c|c|c|c|}
    \hline
    Layer & Deg $1$ & Deg $2$ & Deg $3$ \\
    \hline
    $L_1$ & $0$ & $n$ & $0$ \\
    \hline
    $L_2$ & $1$ & $n$ & $0$ \\
    \hline
    $L_3$ & $2n$ & $2n^2$ & $0$ \\
    \hline
    $L_4$ & $0$ & $2n$ & $2n^2$ \\
    \hline
    \end{tabular}
    \caption{Degree distribution of the nodes on the left side of the reduction graph.}
    \label{tab:match_powerlaw_degrees}
\end{table}

The degree distribution for each layer on the left side of the reduction gadget in the current instance, before adding any edges for $(u,v)$, is given in~\Cref{tab:match_powerlaw_degrees}. For ease of notation, let us use $(d, N_d)$ to denote that there are $N_d$ nodes of degree $d$ in the graph. Thus the degree distribution in the entire reduction graph is as follows: $(1, 4n+2), (2, 4n^2+8n), (3, 4n^2)$.
Some of the nodes will change their degree when we add edges for the input vector pair $(u,v)$, and we take care of these changes later.

Let $\beta > 2$ be the exponent for which we want to show our lower bound, and $N$ be the number of nodes we need in our reduction graph. First, choose $N$ such that \[
N > \zeta(\beta) \cdot \max \{ (2N_1 + 2n), (2N_2+2n) \cdot 2^{\beta}, (2N_3+2n) \cdot 3^{\beta} \}
,\] and pick any power-law graph $G$ on $N$ nodes. If $N'_d$ is the number of nodes of degree $d$ in $G$, then by construction we get that $N'_d > 2N_d + 2n$.

We would like to essentially embed our reduction graph into this power-law graph. For this, we would like to reduce $N'_d$ by exactly $N_d$ for $d \in \{1, 2, 3\}$, which would allow us to embed our graph into $G$. In what follows, we use the fact that $N_1 < N_3$. We ``make space'' for our nodes as follows.
\begin{itemize}
    \item Do the following $N_1$ times: Pick a node $u \in G$ of degree $1$. Let $w$ be its neighbour. Since $\deg(w) < \sqrt{N}$ and there are $\ge N'_3 - N_3 > n^2$ nodes of degree $3$ in $G$, there exists a node $v \in G$ of degree $3$ such that $v$ has a neighbour $x \in N(v)$ with $(x, w) \not\in G$.
    Delete the edges $(u, w), (v, x)$ and add the edge $(x, w)$.
    This gives us a node $u$ which we can assign degree $1$ to, in our reduction graph. We have also converted a degree $3$ node to a degree $2$ node, which we take care of in the next step.
    \item Do the following $N_2 + N_1$ times: Pick a node $u \in G$ of degree $2$. Let $u_1, u_2$ be its neighbours. Since $\deg(u_1) + \deg(u_2) < 2\sqrt{N} $, and there are $\ge N'_2 - N_2$ nodes of degree $2$ left in $G$, there exists a node $v \in G$ of degree $2$ that has neighbours $v_1, v_2$, with $u_i \neq v_j$. Remove the edges $(u, u_i), (v, v_i)$, and add the edge $(u_i, v_i)$. This gives us two nodes $u, v$ which we can assign degree $1$ to, in our reduction graph.
    \item Similarly, free up $N_3 - N_1$ nodes of degree $3$ for our reduction graph.
\end{itemize}
Note that we will have at most one extra node of each degree which we can leave unused because of the slack allowed in approximate power-law graphs.

We can now embed our reduction graph into a power-law graph with at most two nodes having different degrees, since there are an even number of nodes of degree $2$ in our reduction graph by parity, and we requisitioned degree $1$ nodes one-by-one and not in pairs. Thus, in particular, we are at most one degree $2$ node and one degree $3$ node away from a perfect power law graph.

We now come to the question of the input vectors $(u, v)$. If $u_i = 1$, then we add the edge $(L_2[i], L_3[i,0])$. Note that this changes the degree distribution in the following way: One degree $2$ node increases to degree $3$, and a degree $1$ node increases to degree $2$. We need to adjust for this in the power law graph, while making sure that the size of the maximum matching in the remaining graph is still known. We do this as follows:

During preprocessing, pick $2n$ disjoint pairs of nodes $(a_i, b_i)$ in the graph $G$ such that $\deg(a_i) = 2$ and $\deg(b_i) = 3$, such that $\exists\, c_i \in N(a_i), d_i \in N(b_i)$ with $(c_i, d_i) \not\in G$. $G_0$ is the graph $G$. Let $G_j$ be the graph with the edges $(a_i, c_i), (b_i, d_i)$ deleted and the edges $(c_i, d_i)$ added for all $1 \le i \le j$. Since $\poly(n)$ preprocessing is allowed in the \omv{} conjecture, we can afford to find the sizes of the maximum matchings in all the graphs $G_j$ before we receive any input. Denote the sizes of these matchings as $m_j$, $0 \le j \le 2n$.

On input $(u, v)$, we do the following:
\begin{itemize}
    \item If $u_i = 1$, then add the edge $(L_2[i], L_3[i,0])$.
    \item If $v_i = 1$, then add the edge $(R_2[i], R_3[i,0])$.
    \item Let $k = \supp(u) + \supp(v)$. Delete the edges $(a_i, c_i), (b_i, d_i)$ and add the edges $(c_i, d_i)$ for all $1 \le i \le k$.
\end{itemize}

Now we ask for the size of the maximum matching in this graph. Let $\overline{G}$ be the subgraph of $G$ which is the reduction graph, and let $\overline{N}$ be the number of nodes in $\overline{G}$. Note that we already know the size of the maximum matching in $G \setminus \overline{G}$ to be $m_k$. $uMv = 1$ if and only if a maximum matching restricted to $\overline{G}$ is perfect, and since $\overline{G}$ is disjoint from $G \setminus \overline{G}$, if and only if the maximum matching on $G$ is of size $m_k + \frac{\overline{N}}{2}$. We then roll back the graph to its previous state and process the new input vector pair.

We make $O(n)$ updates and $1$ queries for each input pair, and the graph consists of $\Theta(n^2)$ nodes, which gives us the same lower bounds as in the constant-degree reduction.

\section{\texorpdfstring{Lower Bounds for Dynamic $(s,t)$-Shortest Path}{Lower Bounds for Dynamic (s,t)-Shortest Path}}
\label{sec:full_st_LBs}

In this section, we present our lower bound results for the dynamic $(s,t)$-shortest path problem.
In \Cref{ssec:st_const_degree}, we give a lower bound for dynamic $(s,t)$-distance on graphs with maximum degree $3$. We extend this lower bound to $(3-\delta)$-approximations in \Cref{ssec:st_const_apx}. In \Cref{ssec:st_vary_degree}, we show that the distinction between the unbounded and constant-degree reductions is not discrete, by giving a lower bound reduction parameterized on the maximum degree allowed in the graph. In \Cref{ssec:st_expand}, we show that the lower bound on constant-degree graphs holds even on expanders, by constructing a more involved reduction graph. Finally, we prove the power-law graph lower bounds in \Cref{ssec:st_powerlaw}.

\subsection{Constant-Degree Graph}
\label{ssec:st_const_degree}

Consider the \oumv{} problem on vectors of length $n$ and an $n \times n$ matrix. We first perform a simple reduction that shows that maintaining $(s,t)$-distance is hard even on graphs where the maximum degree is $3$.

\begin{restatable}{theorem}{stConst}
	\label{thm:stConst}
	For any constant $\epsilon>0$,
	there is no dynamic algorithm maintaining $(s,t)$-distance, SSSP or APSP,
	on all $N$-node bipartite graphs with maximum degree $\Delta \le 3$,
	with amortized \ub{1/2} update time and \ub{1} query time,
	unless the \omv{} conjecture is false.
\end{restatable}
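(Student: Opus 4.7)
The plan is to reduce from the OuMv problem: given an $n \times n$ boolean matrix $M$ and $n$ input pairs $(u^k, v^k)$ of boolean $n$-vectors, I will build a constant-degree bipartite graph of $N = \Theta(n^2)$ vertices so that each round is processed with $O(n)$ edge updates followed by one $(s,t)$-distance query, and this query decides whether $u^k M v^k = 1$. The standard reduction for $(s,t)$-distance uses a three-layer graph in which the source, sink, and row/column ``hubs'' have degree $\Theta(n)$; the main idea here is to replace each such hub with a balanced binary tree of depth $\lceil \log n \rceil$, which keeps the degree at $3$ while keeping the canonical distance the same across all $(i,j)$.

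Concretely, I construct an $s$-tree rooted at $s$ with $n$ leaves $\tilde{a}_1,\ldots,\tilde{a}_n$; a symmetric $t$-tree rooted at $t$ with leaves $\tilde{d}_1,\ldots,\tilde{d}_n$; a row-tree $T_i^L$ with root $r_i$ and leaves $T_i^L[1],\ldots,T_i^L[n]$ for every row $i$; and a column-tree $T_j^R$ with root $c_j$ and leaves $T_j^R[1],\ldots,T_j^R[n]$ for every column $j$. The matrix is encoded by the static edges $(T_i^L[j], T_j^R[i])$ whenever $M_{ij}=1$, and the input vectors by the dynamic edges $(\tilde{a}_i, r_i)$ for $u_i=1$ and $(c_j, \tilde{d}_j)$ for $v_j=1$. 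Internal nodes of trees have degree $3$, roots and leaves have degree at most $3$ and $2$ respectively, so $\Delta \le 3$. Since each tree is bipartite and one can two-color the row-tree and column-tree roots with opposite colors (and the $s$- and $t$-tree accordingly), every matrix edge and every input edge crosses the bipartition, so the whole graph is bipartite.

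Correctness rests on the claim that the $(s,t)$-distance equals $D := 4\lceil \log n \rceil + 3$ if $uMv=1$ and is at least $D+6$ otherwise. The canonical path uses one matrix edge at some $(i,j)$ with $u_i = M_{ij} = v_j = 1$: descend the $s$-tree ($\lceil \log n \rceil$ edges), cross $(\tilde{a}_i, r_i)$, descend $T_i^L$ to $T_i^L[j]$, cross the matrix edge to $T_j^R[i]$, ascend to $c_j$, cross $(c_j, \tilde{d}_j)$, and ascend the $t$-tree, for a total of $D$ edges. Any $s$-to-$t$ path must enter some row-tree through its root (requiring a $u$-edge), alternate between row- and column-trees via matrix edges, and exit through some column-tree root (requiring a $v$-edge); in particular, the number of matrix edges on the path is odd. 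A three-matrix-edge path pays two additional matrix edges plus two extra leaf-to-leaf traversals in intermediate trees, each of length at least $2$, giving length at least $D+6$. The main subtlety I expect in a formal write-up is this detour analysis, namely ruling out any shortcut that would bring the $uMv=0$ distance down to $D$; however, it follows from the simple fact that leaf-to-leaf distance in a tree is at least $2$ unless the leaves coincide.

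For the complexity argument, the preprocessing stores the matrix edges in $\poly(n)$ time. For each of the $n$ rounds, the reduction performs at most $2n$ updates (flipping the input-dependent edges for $u$ and $v$) followed by a single query, yielding $O(n^2)$ updates and $n$ queries in total. Because the graph is constant-degree we have $m = \Theta(N) = \Theta(n^2)$, so a dynamic algorithm with amortized update time $O(N^{1/2-\epsilon}) = O(n^{1-2\epsilon})$ and query time $O(N^{1-\epsilon}) = O(n^{2-2\epsilon})$ would solve OuMv in $O(n^{3-2\epsilon})$ total time, contradicting the OMv conjecture. Finally, since the $(s,t)$-distance can be read off from any SSSP or APSP data structure, the same lower bound transfers to those problems.
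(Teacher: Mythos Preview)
Your construction is essentially identical to the paper's: both replace the high-degree hubs of the standard reduction by depth-$\lceil\log n\rceil$ binary trees, yielding the same $\Theta(n^2)$-node, degree-$3$ graph with matrix edges between leaves and vector edges between tree roots and the leaves of the $s$/$t$-trees. The complexity calculation is also the same.

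The one genuine difference is in the correctness argument. The paper assigns each node a \emph{layer} in $\{0,\ldots,4\log n+3\}$ so that every edge joins consecutive layers; this immediately gives bipartiteness and shows that any path of length exactly $4\log n+3$ must increase the layer at every step, which forces it to use one $u$-edge, one matrix edge, and one $v$-edge with compatible indices. You instead argue topologically: any $s$--$t$ path uses an odd number of matrix edges, and if $uMv=0$ then at least three are needed, incurring two extra matrix edges and two leaf-to-leaf traversals of length $\ge 2$, hence $\ge D+6$. Your argument is correct, but the layering is cleaner and sidesteps the detour case analysis you flag as the main subtlety; it also gives bipartiteness for free, whereas your two-coloring sketch requires checking that the root-depth parities of the four tree families can be chosen consistently.
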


Since the original reduction~\cite{HenzingerKNS15} could possibly have unbounded degree, we use \emph{binary forests} to sparsify our reduction graph.

\begin{definition}[Binary Forest]
A \emph{binary forest} of $x$ trees of height $y$ is a graph composed of $x$ disjoint binary trees, each of height $y$.
\end{definition}

We naturally split the nodes of a binary forest between \emph{internal nodes} and \emph{leaves}. Intuitively, we would like to replace a high degree node of the original reduction with a binary forest to moderate the maximum allowed degree.
Note that a binary forest has $x\cdot2^y$ leaves in total.

\subsubsection{Static Graph}
\label{sssec:st_static}

We use the following static graph as the base for our reduction:
\begin{itemize}
	\item A $(\log n)$-depth binary forest with a single tree. The set of $n-1$ internal nodes is marked $L_1$ ($L$ for left) and the $n$ leaves $L_2$; the root of the tree is the source node $s$, and the $n$ nodes of $L_2$ are marked as $L_2[i]$ for $1\le i \le n$.

	\item A $(\log n)$-depth binary forest with $n$ trees. The $n(n-1)$ internal nodes are marked $L_3$ and the $n^2$ leaves $L_4$. The roots of each of the $n$ trees are marked as $L_3[i]$, for $1 \le i \le n$, and the leaves of the tree with root $L_3[i]$ are marked $L_4[i,j]$, for $1 \le j \le n$.

	\item A copy of the above structure, with node sets marked $R_1,R_2,R_3,R_4$ instead of $L_1,L_2,L_3,L_4$, respectively. The root of the single tree of $R_1$ is the target node $t$.

	\item Edges from $L_4$ to $R_4$ by the matrix $M$, as detailed next.

	\item For an input pair of vectors $(u,v)$, edges between $L_2$ and $L_3$ by $u$, and between $R_2$ and $R_3$ by $v$, as detailed next.
\end{itemize}

The total number of nodes in the reduction graph is $N = 4n^2 + 2n - 2 = \Theta(n^2)$.

\subsubsection{Input-Dependent Edges}
\label{sssec:st_input}

We add the following edges depending on the input matrix $M$ and vectors $u, v$:

\begin{itemize}
    \item For the matrix $M$, add the edge $(L_4[i,j],R_4[j,i])$ if $M_{ij}=1$.
    \item Given an input vector $u$, for each $i\in[n]$, add the edge $(L_2[i], L_3[i])$ if $u_i=1$.
    \item Given an input vector $v$, for each $j\in[n]$, add the edge $(R_2[j], R_3[j])$ if $v_j=1$.
\end{itemize}

\subsubsection{Distances in the Graph}
\label{sssec:st_distances}

We now show the correctness of the reduction, by considering the $(s,t)$ distance in different scenarios.

\begin{restatable}{lemma}{stDistLem}[constant-degree reduction]
\label{lem:stDistLem}
$uMv=1$ if and only if $\dist(s,t) \le 4\log n+3$. Moreover, the graph is bipartite.
\end{restatable}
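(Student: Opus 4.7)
The plan is to establish both the bipartiteness claim and the distance characterization by leveraging the layered structure of the reduction graph. I expect bipartiteness to be the easiest part: I would assign every node an integer ``level'' so that each edge crosses two consecutive levels. Specifically, I place $s$ at level $0$; the left tree occupies levels $0,\dots,\log n$ (with $L_2[i]$ at level $\log n$); the left forest occupies levels $\log n + 1,\dots,2\log n + 1$ (roots $L_3[i]$ at the top, leaves $L_4[i,j]$ at the bottom); the right forest and right tree mirror this on levels $2\log n + 2,\dots,4\log n + 3$, with $t$ at level $4\log n + 3$. A direct check shows that tree edges, $u$-edges, $M$-edges, and $v$-edges each cross exactly one level, so the partition by level parity yields a valid $2$-coloring, proving the graph is bipartite.

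For the forward direction of the distance claim, I would assume $uMv = 1$, pick indices $i,j$ with $u_i = M_{ij} = v_j = 1$, and exhibit the explicit path that descends the left tree from $s$ to $L_2[i]$ in $\log n$ edges, crosses the $u$-edge to $L_3[i]$, descends to the leaf $L_4[i,j]$ of the corresponding subtree in the left forest in $\log n$ edges, crosses the $M$-edge to $R_4[j,i]$, and then mirrors this on the right: $\log n$ edges up to $R_3[j]$, the $v$-edge to $R_2[j]$, and finally $\log n$ edges up to $t$. The total length is exactly $4\log n + 3$, giving the desired upper bound on $\dist(s,t)$.

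For the reverse direction, I would consider any shortest $(s,t)$-path $P$. Since the left side $L = L_1 \cup L_2 \cup L_3 \cup L_4$ is connected to the right side $R$ only by $M$-edges, $P$ must traverse at least one such edge $(L_4[i,j], R_4[j,i])$, which forces $M_{ij} = 1$. Reaching $L_4[i,j]$ from $s$ on the left side alone requires at least $\log n$ edges to descend the left tree to some $L_2[i']$, a $u$-edge (forcing $u_{i'} = 1$) to enter the left forest, and $\log n$ further edges down the tree rooted at $L_3[i']$; since the trees of the left forest are node-disjoint inside $L$, landing on the leaf $L_4[i,j]$ forces $i' = i$, and thus $u_i = 1$. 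A symmetric argument on the right side gives $v_j = 1$ and contributes another $2\log n + 1$ edges, accounting for the full $4\log n + 3$. The main obstacle will be ruling out alternative shortest paths that cross between $L$ and $R$ more than once, or that backtrack within a tree: each additional $L$-$R$ round-trip adds two $M$-edges and a traversal of at least $2\log n$ edges within the intermediate subtree, while any detour inside a binary tree strictly increases the descent length, so any such deviation strictly exceeds $4\log n + 3$. Hence equality in the bound forces a direct route of the form described, and $uMv = 1$.
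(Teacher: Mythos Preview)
Your approach is essentially the paper's: a level assignment that makes every edge cross consecutive levels (giving bipartiteness and the lower bound $\dist(s,t)\ge 4\log n+3$), an explicit witness path for the forward direction, and a structural trace for the reverse direction.

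One point to tighten in your reverse direction: the claim that ``each additional $L$--$R$ round-trip adds \dots a traversal of at least $2\log n$ edges within the intermediate subtree'' is not correct as stated---two leaves $R_4[j,i]$ and $R_4[j,i']$ in the same right-forest subtree can be at distance as small as $2$. The conclusion you want still holds, but you are working harder than necessary. You have already shown that every edge goes between consecutive levels and that $t$ sits at level $4\log n+3$; hence any $(s,t)$-path of length $\le 4\log n+3$ has length exactly $4\log n+3$ and is \emph{strictly monotone} in level. This single observation immediately rules out backtracking and multiple $M$-edge crossings, and forces the path to use exactly one edge between each pair of consecutive levels---which is precisely how the paper argues. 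From there, the $L_2\to L_3$ edge gives $u_i=1$, the $L_4\to R_4$ edge gives $M_{ij}=1$, and the $R_3\to R_2$ edge gives $v_j=1$, without any separate case analysis of detours.
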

\begin{proof}
We first show that any path from $s$ to $t$ has to be of length at least $4 \log n + 3$, by partitioning the node set into layers. We maintain the property that a node at level $\ell$ can have neighbours only in levels $\ell -1$, $\ell$, or $\ell + 1$. The layering is as follows: The layer of a node in $L_1 \cup L_2$ is its distance from $s$; in $L_3 \cup L_4$ is its distance from its closest root, plus $\log n + 1$; in $R_3 \cup R_4$ is its distance from its closest leaf, plus $2\log n + 2$; in $R_1 \cup R_2$ is its distance from its closest leaf, plus $3 \log n + 3$. Specifically, the layer of node $t$ is $4 \log n + 3$. Thus $\dist(s,t) \ge 4\log n + 3$, regardless of $u$, $M$, and $v$.
Moreover, the the fact that edges only connects consecutive layers implies that the graph is bipartite.

($\implies$) If $uMv=1$, then there exists indices $i, j$ such that $u_i = M_{ij} = v_j = 1$. Then consider the path $P$ composed of the following sub-paths:
\begin{itemize}
    \item $P_1$ is the shortest path from $s$ to $L_3[i]$, which follows the tree $L_1 \cup L_2$ and then the $(L_2[i], L_3[i])$ edge ($\log n + 1$ edges).
    \item $P_2$ is the shortest path from $L_3[i]$ to $R_4[j,i]$, which follows the $i^{th}$ tree rooted at $L_3[i]$ and then the $(L_4[i,j], R_4[j,i])$ edge ($\log n + 1$ edges).
    \item $P_3$ is the shortest path from $R_4[j,i]$ to $R_2[j]$, which follows the $j^{th}$ tree rooted at $R_3[j]$ and then the $(R_3[j], R_2[j])$ edge ($\log n + 1$ edges).
    \item $P_4$ is the shortest path from $R_2[j]$ to $t$, which follows the tree $R_1 \cup R_2$ ($\log n$ edges).
\end{itemize}
$P$ is then a path of length $4 \log n + 3$ from $s$ to $t$.

($\impliedby$) Assume that there is a path of length $4\log n+3$ from $s$ to $t$. By the layering, each edge in the path must connect a layer $\ell$ node and a layer $(\ell+1)$ node, and there is exactly one such edge in the path for each $\ell\in[4\log n+2]$. Next, we use these two facts (sometimes implicitly) to show that the path must have the form of the above described path, and conclude that $uMv=1$.

The path must contain an edge from $L_2$ to $L_3$. The only such edges are of the form $(L_2[i], L_3[i])$ for some $i\in [n]$, implying $u_i=1$. From there, the path must continue to some leaf $LU[i,j]$ of the tree rooted at $LU[i]$. Since the only edge from $L_4[i,j]$ that strictly increases in level is the edge $(L_4[i,j], R_4[i,j])$, the path must contain such an edge, implying $M_{ij}=1$ for some $j \in [n]$. From $R_4[j,i]$, the path must continue to the root $R_3[j]$, and then to $R_2[j]$ over an edge $(R_3[j],R_2[j])$, implying $v_j=1$. The path ends trivially by going from $R_2[j]$ to the root $t$. Thus $uMv=1$.
\end{proof}

\subsubsection{Complexity of the Reduction}
\label{sssec:st_complexity}
We are now ready to prove the theorem.

\stConst*
\begin{proof}
Consider the reduction graph above, which is bipartite by \cref{lem:stDistLem}.
It consists of $N = 4n^2 + 2n -2 = \Theta(n^2)$ nodes. Every time we get a new $(u,v)$ input vector pair, we delete all the edges between $L_2 \times L_3$ and $R_2 \times R_3$ and insert edges according to the new input vectors. This takes $O(n)$ updates in total. After that, we query once for the $(s,t)$-distance in this new graph, and return $1$ if and only if $\dist(s,t) = 4\log n + 3$.

Thus for each pair of input vectors, we perform $O(n)$ updates and $O(1)$ query. In total, checking $n$ pairs takes us $O(n^2)$ updates and $O(n)$ query. If there were an algorithm for $(s,t)$-distance on constant-degree graphs with update time \ub{1/2} (i.e., $O(n^{1-2\epsilon})$) and query time \ub{1} (i.e., $O(n^{2-2\epsilon})$), then we can decide if $uMv=1$ for all $n$ pairs in $O(n^{3-2\epsilon})$ time, contradicting the \omv{} conjecture.
\end{proof}

\subsection{\texorpdfstring{$(3-\delta)$-Approximation Lower Bound for Constant-Degree Graphs}{(3-d)-Approximation Lower Bound for Constant-Degree Graphs}}
\label{ssec:st_const_apx}

We show that the above lower bounds on $(s,t)$-distances also holds for $(3-\delta)$-approx $(s,t)$-distances by minimally modifying the reduction graph to exploit the following observation: If $uMv = 0$, then every path from $s$ to $t$ in the simple reduction graph needs to take at least three edges corresponding to the matrix $M$, as opposed to just one such edge when $uMv=1$.

We use the same reduction graph as before, but with one important difference: Earlier, we added a single edge $(L_4[i,j], R_4[j,i])$ if $M_{ij} = 1$. Now, we add a path with $\Theta(\log n)$ new nodes between $L_4[i,j]$ and $R_4[j,i]$. Note that since we only do this for $M$ and not for every new input vector pair, we can do this in just polynomial pre-processing time, which is allowed for in the \omv{} conjecture.

Formally, let $\alpha = \left\lceil \frac{12}{\delta} - 4 \right\rceil $.
Add $n^2 \cdot \alpha \log n$ new nodes to the reduction graph, with the nodes labelled $v_{ij}[k]$ for $1 \le k \le \alpha \log n$ and $1 \le i,j \le n$.
If $M_{ij} = 1$, add the path $L_4[i,j], v_{ij}[1], \ldots, v_{ij}[\alpha \log n]$, $R_4[i,j]$ to the reduction graph, while otherwise all the nodes $v_{ij}[k]$ remain disconnected.

The following lemma is an analogue of \Cref{lem:stDistLem}, and the proof follows from the proof of \Cref{lem:stDistLem} and the above observation.

\begin{restatable}{lemma}{stDistApxLem}
\label{lem:stDistApxLem}
If $uMv=1$, then $\dist(s,t) = (4 + \alpha)\log n+2$, and otherwise $\dist(s,t) \ge (4 + 3\alpha)\log n +2$.
Moreover, the graph is bipartite.
\end{restatable}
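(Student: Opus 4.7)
The plan is to reuse the layering from the proof of \Cref{lem:stDistLem} almost verbatim, augmented by the fresh nodes on each matrix path. Assign layer $\ell$ to the nodes of $L_1\cup L_2$ and $L_3\cup L_4$ exactly as before, then give $v_{ij}[k]$ layer $2\log n+2+k$ (so that $v_{ij}[1]$ is adjacent to $L_4[i,j]$ at layer $2\log n+2$ and $v_{ij}[\alpha\log n]$ is adjacent to $R_4[j,i]$ at layer $2\log n+2+\alpha\log n+1$); shift the layers of $R_4,R_3,R_2,R_1$ up by $\alpha\log n$. Since every edge (including the new matrix-path edges) connects consecutive layers, the graph is bipartite, and any $s$-$t$ path has length at least the layer difference between $s$ and $t$.

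For the upper bound in the case $uMv=1$, take indices $i,j$ with $u_i=M_{ij}=v_j=1$ and use the same four sub-paths $P_1,P_2,P_3,P_4$ as in \Cref{lem:stDistLem}, except that the single matrix edge $(L_4[i,j],R_4[j,i])$ in $P_2$ (or equivalently the transition between the left and right halves) is now the full matrix path of length $\alpha\log n+1$ through the nodes $v_{ij}[1],\dots,v_{ij}[\alpha\log n]$. The additional $\alpha\log n$ edges compared to the original reduction yield a path of the claimed length.

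For the lower bound in the case $uMv=0$, the key observation is that the only edges crossing between the ``left half'' ($L_1\cup L_2\cup L_3\cup L_4$) and the ``right half'' ($R_1\cup R_2\cup R_3\cup R_4$) lie on the matrix paths, and any such crossing forces the entire matrix path of $\alpha\log n+1$ edges to be traversed. By parity, an $s$-$t$ path must traverse an odd number of matrix paths; a single crossing would give a path exactly of the form described in the proof of \Cref{lem:stDistLem}, which exists only if $uMv=1$. Hence under $uMv=0$ at least three matrix paths are traversed, contributing $3(\alpha\log n+1)$ edges. The remaining edges outside the matrix paths can be bounded below by applying the original layering restricted to each side: the initial segment from $s$ to the first matrix-path endpoint in $L_4$ needs $\ge 2\log n+2$ edges, and likewise the final segment from $R_4$ back to $t$ needs $\ge 2\log n+2$ edges. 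Summing gives a lower bound that dominates $(4+3\alpha)\log n+2$.

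The main obstacle is the lower bound argument in the $uMv=0$ case, specifically the parity/uniqueness claim that a single matrix-path crossing forces $uMv=1$. This is essentially a restatement of the ``if'' direction of \Cref{lem:stDistLem}: once the path enters the left tree rooted at some $L_3[i]$ via the unique $u_i$-edge, exits via $M_{ij}$, and must enter the right via the unique $v_j$-edge to reach $t$, we recover $u_i=M_{ij}=v_j=1$. The rest of the argument is straightforward bookkeeping of layer distances.
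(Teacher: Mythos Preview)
Your proposal is correct and follows exactly the approach the paper intends: its ``proof'' is the single sentence that the result follows from \Cref{lem:stDistLem} together with the preceding observation that any $s$--$t$ path under $uMv=0$ must use at least three matrix edges, and your parity-plus-single-crossing argument is precisely how one justifies that observation. (One minor bookkeeping slip: under the paper's layering $L_4[i,j]$ sits at layer $2\log n+1$, not $2\log n+2$, so the initial and final segments contribute $\ge 2\log n+1$ edges each---but the resulting total $(4+3\alpha)\log n+5$ still dominates the claimed bound.)
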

\begin{restatable}{corollary}{stConstApx}
\label{cor:stConstApx}
	For any constant $\epsilon>0$,
	there is no dynamic algorithm maintaining $(3-\delta)$-approximate $(s,t)$-distance, SSSP or APSP,
	on all $N$-node bipartite graphs with maximum degree $\Delta \le 3$,
	with amortized \ub{1/2} update time and \ub{1} query time,
	unless the \omv{} conjecture is false.
\end{restatable}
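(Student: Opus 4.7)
The plan is to invoke \Cref{lem:stDistApxLem} as a black box and turn the distance gap it provides into a $(3-\delta)$-approximation lower bound by a careful choice of the subdivision length $\alpha$. Since that lemma already guarantees $\dist(s,t)=(4+\alpha)\log n+2$ when $uMv=1$ and $\dist(s,t)\ge(4+3\alpha)\log n+2$ when $uMv=0$, all I need is that the ratio of these two values strictly exceeds $3-\delta$, so that a $(3-\delta)$-approximate answer falls in disjoint intervals in the two cases and a single query distinguishes them.

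First I would verify that the choice $\alpha=\lceil 12/\delta-4\rceil$ suffices. Asymptotically the ratio of the two distances is $(4+3\alpha)/(4+\alpha)$, and a short algebraic manipulation shows that this exceeds $3-\delta$ precisely when $\alpha>8/\delta-4$; the stated choice therefore leaves a constant margin, so for sufficiently large $n$ the additive $+2$ terms are absorbed and
\[
(3-\delta)\bigl((4+\alpha)\log n+2\bigr)<(4+3\alpha)\log n+2.
\]
Any $(3-\delta)$-approximation algorithm must return a value strictly below the right-hand side when $uMv=1$ and at least the right-hand side when $uMv=0$, so a single distance query decides \oumv{} for the current pair.

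Next comes the complexity bookkeeping, which follows the same template as the proof of \Cref{thm:stConst}. The subdivision adds at most $n^2\cdot\alpha\log n$ degree-two nodes, so the reduction graph has $N=\Theta(n^2\log n)$ nodes, remains of maximum degree $3$, and stays bipartite (subdividing an edge preserves both properties). Crucially, the subdivision depends only on $M$ and is therefore performed once during the $\poly(n)$ preprocessing permitted by the \omv{} conjecture; per input pair $(u,v)$ only the $O(n)$ edges between $L_2\times L_3$ and $R_2\times R_3$ change, followed by a single distance query. Across the $n$ pairs of an \oumv{} instance this gives $O(n^2)$ updates and $O(n)$ queries in total. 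Plugging in $N^{1/2-\epsilon}=O(n^{1-\epsilon'})$ and $N^{1-\epsilon}=O(n^{2-\epsilon'})$ for a slightly smaller $\epsilon'>0$ (which absorbs the extra polylogarithmic factor coming from $N=\Theta(n^2\log n)$) yields total running time $O(n^{3-\epsilon'})$, contradicting \Cref{con:omv}.

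The only real subtlety, and the step I expect to be the main obstacle, sits inside \Cref{lem:stDistApxLem} rather than in the corollary itself: confirming that when $uMv=0$ every $s$-$t$ walk must traverse \emph{at least three} subdivided $M$-edges rather than just one is what supplies the factor of three and hence powers the $3-\delta$ approximation hardness. Once that layered-cut argument is in hand, everything else reduces to the parameter chase above.
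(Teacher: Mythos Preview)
Your proposal is correct and follows essentially the same route as the paper: invoke \Cref{lem:stDistApxLem}, verify that with $\alpha=\lceil 12/\delta-4\rceil$ the ratio between the two distance thresholds exceeds $3-\delta$, and then reuse the complexity accounting of \Cref{thm:stConst} with $N=\Theta(n^2\log n)$. The only cosmetic difference is that the paper bounds the ratio explicitly for all $n\ge 2$ via $3-\frac{8\log n+4}{(4+\alpha)\log n+2}\ge 3-\frac{12}{4+\alpha}\ge 3-\delta$, whereas you argue it asymptotically and absorb the $+2$ for large $n$; either is fine for a conditional lower bound.
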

\begin{proof}
From \Cref{lem:stDistApxLem}, we get that any approximate reported distance $d$ where $(4+\alpha) \log n + 2 \le d < (4+3\alpha) \log n + 2$ would imply that $uMv=1$. Note that
\begin{align}
\frac{(4 + 3 \alpha) \log n + 2}{(4+\alpha) \log n + 2}
&= 3 - \frac{8 \log n + 4}{(4 + \alpha) \log n + 2} \nonumber \\
&\ge 3 - \frac{12\log n}{(4+\alpha)\log n} \label[ineq]{eq:triv_apx} \\
&\ge 3 - \delta \label[ineq]{eq:alpha_apx}
\end{align}
where \Cref{eq:triv_apx} follows from the fact that $4\log n \ge 4$, and \Cref{eq:alpha_apx} follows from the definition of $\alpha$. Thus any $(3-\delta)$-approx algorithm would be able to distinguish between $uMv=1$ and $uMv=0$.

The number of nodes in the reduction graph is $N = \Theta(n^2\log n)=O(n^{2+\epsilon})$. Thus for each pair of input vectors, we perform $O(n)$ updates and $O(1)$ queries.
In total, checking $n$ pairs takes us $O(n^2)$ updates and $O(n)$ queries.
If there were an algorithm for $(3-\delta)$-approx $(s,t)$-distance on constant-degree graphs with update time $\ub{1/2}=O(n^{1 - 3\epsilon/2 - \epsilon^2)}$ and query time $\ub{1}=O(n^{2 - \epsilon - \epsilon^2})$, then
we can decide if $uMv=1$ for all $n$ pairs in $O(n^{3-c})$ time for some constant $c$, contradicting the \omv{} conjecture.
\end{proof}

\subsection{Varying-Degree Graph}
\label{ssec:st_vary_degree}

We present a reduction that gives a lower bound parameterized on the maximum degree in the graph.

\begin{restatable}{theorem}{stVary}
	\label{thm:stVary}
	For any $0\leq t\leq 1$ and any constant $\epsilon>0$,
	there is no dynamic algorithm maintaining $(s,t)$-distance, SSSP or APSP,
	on all $N$-node bipartite graphs with maximum degree $\Delta = O(N^t)$,
	with amortized \ub{\frac{1+t}{2}} update time and \ub{1+t} query time,
	unless the \omv{} conjecture is false.
\end{restatable}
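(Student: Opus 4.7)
My plan is to adapt the constant-degree construction from \Cref{ssec:st_const_degree} in a manner parallel to the matching varying-degree reduction of \Cref{ssec:match_vary_degree}. The constant-degree reduction uses full binary forests of depth $\log n$ on both sides, with the ``middle'' forests consisting of $n$ trees of $n$ leaves each, giving $|L_4|=|R_4|=n^2$. To allow a higher max degree while shrinking $N$, I will keep the number of trees fixed at $n$ but reduce the depth of each middle tree from $\log n$ to $\frac{1-t}{1+t}\log n$, so each middle tree contributes $n^{(1-t)/(1+t)}$ leaves and the total number of nodes becomes $N=\Theta(n^{2/(1+t)})$. The outer forests on $L_1\cup L_2$ and $R_1\cup R_2$ and the input-dependent edges between $L_2/L_3$ and $R_2/R_3$ remain exactly as in \Cref{sssec:st_static}.

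To compensate for the reduced number of leaves, I compress the matrix indices exactly as in \Cref{ssec:match_vary_degree}: if $M_{ij}=1$, set $i'=\lceil i\cdot n^{-2t/(t+1)}\rceil$ and $j'=\lceil j\cdot n^{-2t/(t+1)}\rceil$ and add the edge $(L_4[i,j'],R_4[j,i'])$. Because each middle leaf is labeled with one original matrix index and one compressed index, every node in $L_4\cup R_4$ is incident to at most $n^{2t/(1+t)}$ matrix edges, so the max degree of the entire graph is $O(n^{2t/(1+t)})=O(N^t)$, as required. The total node count is dominated by the two middle forests and is indeed $\Theta(n^{2/(1+t)})$.

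The distance analysis parallels \Cref{lem:stDistLem} with essentially no change. The same layering certifies that any $(s,t)$-path has length at least $2\log n+2\cdot\frac{1-t}{1+t}\log n+3$. The forward direction (given $u_iM_{ij}v_j=1$) exhibits a canonical path $s\rightsquigarrow L_2[i]\to L_3[i]\rightsquigarrow L_4[i,j']\to R_4[j,i']\rightsquigarrow R_3[j]\to R_2[j]\rightsquigarrow t$ of exactly this length. The reverse direction is the delicate part and the main obstacle, so let me highlight why it still works: a shortest $(s,t)$-path must cross from the $L$-side to the $R$-side via exactly one matrix edge $(L_4[i,j'],R_4[j,i'])$, and crucially the first coordinates $i$ and $j$ of these two endpoints are the original (uncompressed) matrix indices. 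Hence the path is forced into the $i$-th middle tree on the left and the $j$-th on the right, witnessing $u_i=1$ and $v_j=1$; the presence of the edge witnesses $M_{ij}=1$, so no spurious short paths can arise from index compression.

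The complexity argument is then identical to the proof of \Cref{thm:stConst} but with the new parameters: for each \oumv{} pair we perform $O(n)$ updates and one query, so over all $n$ pairs the cost is at most $O\!\bigl(n^2\cdot N^{(1+t)/2-\epsilon}+n\cdot N^{1+t-\epsilon}\bigr)$. Substituting $N=\Theta(n^{2/(1+t)})$ turns both terms into $O(n^{3-2\epsilon/(1+t)})$, which contradicts the \omv{} conjecture. This yields the claimed lower bound of $\ub{(1+t)/2}$ update time and $\ub{1+t}$ query time for $N$-node bipartite graphs of max degree $O(N^t)$.
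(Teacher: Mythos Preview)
Your proposal is correct and follows essentially the same approach as the paper: shrink the middle forests to depth $\frac{1-t}{1+t}\log n$ while keeping the outer forests and the $L_2$--$L_3$, $R_2$--$R_3$ edges unchanged, and compress the matrix indices exactly as in \Cref{ssec:match_vary_degree}. Your observation that the reverse direction of the distance lemma still goes through because the uncompressed coordinates $i,j$ remain as the tree indices is precisely the point the paper relies on (it simply refers back to the proof of \Cref{lem:stDistLem}), and your complexity calculation matches the paper's.
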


\paragraph*{The reduction graph}
\label{sssec:st_vary_graph}
For a value $0 \le t \le 1$ (that may depend on $N$), we construct a reduction graph on $N$ nodes that has maximum degree $\Theta(N^t)$. The node sets $L_1, L_2, R_1, R_2$ and the edges that depend on $u$ and $v$ are the same as in the previous construction in~\Cref{ssec:st_const_degree}.
We detail the changes for $L_3, L_4, R_3, R_4$ and $M$ below.

\begin{itemize}

	\item A $\left(\frac{1-t}{1+t} \cdot \log n\right)$-depth binary forest with $n$ trees. The $n\cdot \left(n^{\frac{1-t}{1+t}}-1\right)$ internal nodes are marked $L_3$ and the $n \cdot n^{\frac{1-t}{1+t}}$ leaves are marked $L_4$.
	The roots of each of the $n$ trees are marked as $L_3[i]$, for $1 \le i \le n$, and the leaves of the tree with root $L_3[i]$ are marked $L_4[i,j]$, for $1 \le j \le n^{\frac{1-t}{1+t}}$, and similarly for $R_3$ and $R_4$.

    \item For the matrix $M$, if $M_{ij}=1$, let $i' = \left\lceil i \cdot n^{-2t/(t+1)} \right\rceil$ and $j' = \left\lceil j \cdot n^{-2t/(t+1)} \right\rceil $. Add the edge $(L_4[i,j'],R_4[i',j])$ to the graph.
\end{itemize}

Note that the distances in this reduction graph are the same as in the constant-degree reduction graph by a similar proof as used to prove~\Cref{lem:stDistLem}.
Furthermore, each node in $L_4$ is connected to at most $n^{2t/(t+1)}$ nodes in $R_4$, and each node in $R_4$ is connected to at most $n^{2t/(t+1)}$ nodes in $L_4$.

We can now prove \Cref{thm:stVary}.
\begin{proof}
The number of nodes in the reduction graph above is dominated by the number of nodes in $L_4$ and $R_4$.
Thus, the total number of nodes in the reduction graph is $N = \Theta(n^{2/(t+1)})$ nodes. Since each node not in $L_4$ or $R_4$ have at most $3$ edges adjacent on it, its degree is  trivially $O(N^t)$.
Every node in $L_4$ and $R_4$ has one tree edge incident on it, and at most $n^{2t/(t+1)}$ edges of $M$ incident on it by construction. Thus the maximum degree in the graph is $O(n^{2t/(t+1)}) = O(N^t)$ as claimed. The rest is similar to \Cref{thm:stConst}. Every time we get a new $(u,v)$ input vector pair, we delete all the edges between $L_2 \times L_3$ and $R_2 \times R_3$ and insert edges according to the new input vectors.

For each pair of input vectors, we thus perform $O(n)$ updates and $O(1)$ queries.
In total, checking $n$ pairs takes $O(n^2)$ updates and $O(n)$ queries.
If there was an algorithm for $(s,t)$-distance on graphs with maximum degree bounded by $N^t$ with update time \ub{\frac{1+t}{2}} (i.e., $O(n^{1-2\epsilon})$) and query time \ub{1+t} (i.e., $O(n^{2-2\epsilon})$), then we can decide if $uMv=1$ for all $n$ pairs in $O(n^{3-2\epsilon})$ time, contradicting the \omv{} conjecture.
\end{proof}

\subsection{Expander Graph}
\label{ssec:st_expand}

For our expander lower bound, we use the following gadgets, called \emph{reinforced forests}.
\vspace{2ex}

\noindent
\begin{minipage}{.8\textwidth}
\begin{definition}[Reinforced forest]
A \emph{reinforced forest} of $x$ trees of height $y$ is a graph composed of
$x$ disjoint binary trees, each of them of height $y>0$.
These trees have $x2^y$ leaves in total;
consider a degree-$d$ expander graph on $x2^y$ nodes, choose an arbitrary bijection between the expander nodes and the forest's leaves, and add the expander edges to the leaves accordingly.
Finally, in order to guarantee that the graph is bipartite, on each edge added from the expander, we add a dummy node.
The resulting graph is the reinforced forest.
\end{definition}
\end{minipage}
\begin{minipage}{.2\textwidth}
\begin{center}
\vspace*{2ex}
	\includegraphics[scale=0.7]{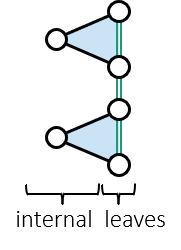}
\end{center}
\end{minipage}
\vspace{1ex}

We naturally split the nodes of a reinforced forest between \emph{internal nodes} and \emph{leaves}.

\paragraph*{Reduction Graph}
\label{sssec:st_expand_static}
\begin{figure}
	\centering
	\includegraphics[scale=0.7]{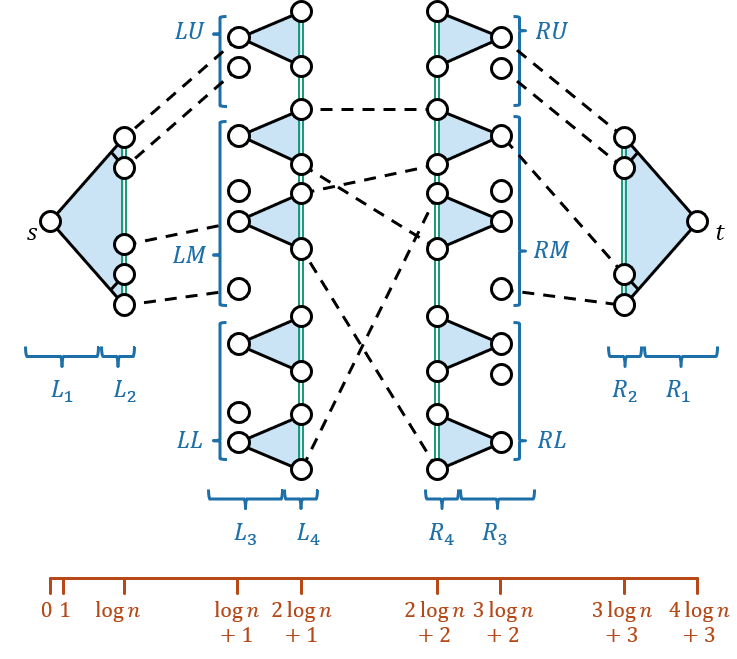}
	\caption{The reduction graph.
		The input-dependent edges are dotted.}
	\label{fig:new_st_expand_graph}
\end{figure}

We use the following graph as the base for our reduction (see \cref{fig:new_st_expand_graph}).
\begin{itemize}
	\item A $(\log n)$-depth reinforced forest with a single tree. The set of $n-1$ internal nodes is marked $L_1$ and the $n$ leaves $L_2$; the root of the tree is the source node $s$, and the $n$ nodes of $L_2$ are marked as $L_2[i]$ for $1\le i \le n$.

	\item A $(\log n)$-depth reinforced forest with $3n$ trees. The $3n(n-1)$ internal nodes are marked $L_3$ and the $3n^2$ leaves $L_4$. We split this reinforced forest into sets of $n$ trees each, and label them $LU, LM, LL$ (upper, middle, and lower). The roots of each of the $n$ trees are marked as $LX[i]$, for $1 \le i \le n$, $X \in \{U,M,L\}$, and the leaves of the tree with root $LX[i]$ are marked $LX[i,j]$, for $1 \le j \le n$.

	\item A copy of the above structure, with node sets marked $R_i$ instead of $L_i$, respectively. The root of the single tree of $R_1$ is the target node $t$.

    \item For the matrix $M$, add the edge $(LM[i,j],RM[j,i])$ if $M_{ij}=1$, and add the edges $(LM[i,j], RL[j,i])$ and $(LL[i,j], RM[j,i])$ otherwise.

    \item Given an input vector $u$, for each $i\in[n]$, add the edge $(L_2[i], LM[i])$ if $u_i=1$, and $(L_2[i], LU[i])$ otherwise.

    \item Given an input vector $v$, for each $j\in[n]$, add the edge $(R_2[j], RM[j])$ if $v_j=1$, and $(R_2[j], RU[j])$ otherwise.
\end{itemize}

The following lemma characterizes $(s,t)$-distance in the expander reduction graph based on the value of $uMv$.
We split the graph nodes into layers (indicated at the bottom of \cref{fig:new_st_expand_graph}) and use the fact that edges can only connect consecutive layers in order to prove this lemma in~\Cref{sssec:st_expand_dist_proof}.

\begin{restatable}{lemma}{stExpandDistLem}
\label{lem:stExpandDistLem}
$uMv=1$ if and only if $\dist(s,t) \le 4\log n+3$. Moreover, the graph is bipartite.
\end{restatable}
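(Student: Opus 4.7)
The plan is to mirror the constant-degree argument of \Cref{lem:stDistLem} by assigning each node an integer layer so that every edge connects two consecutive layers. Specifically, I would place $s$ at layer $0$; nodes in $L_1 \cup L_2$ at their distance from $s$ within the single tree (so $L_2[i]$ sits at layer $\log n$); roots $LU[i], LM[i], LL[i]$ at layer $\log n + 1$ with their trees descending to layer $2\log n + 1$ for the leaves in $L_4$; a symmetric labeling of the right side with $R_4$ leaves at layer $2\log n + 2$ and $t$ at layer $4\log n + 3$. Every dummy node inserted on an expander edge is placed at the single layer adjacent to its two endpoints (so, e.g., dummies inserted on expander edges between leaves of $L_4$ sit at layer $2\log n + 2$, and dummies in the $L_2$ and $R_2$ expanders sit at layer $\log n + 1$ and $3\log n + 2$ respectively). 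A routine check of each edge type (tree edges, input-dependent edges from $L_2$/$R_2$, matrix edges between $L_4$ and $R_4$, and dummy-expander edges) confirms consecutive-layer incidence, which immediately yields bipartiteness and the lower bound $\dist(s,t) \geq 4\log n + 3$.

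For the ``if'' direction, assuming $uMv = 1$ I would pick $i, j$ with $u_i = M_{ij} = v_j = 1$ and exhibit the path $s \leadsto L_2[i] \to LM[i] \leadsto LM[i,j] \to RM[j,i] \leadsto RM[j] \to R_2[j] \leadsto t$, of length exactly $4\log n + 3$, analogous to the path $P_1 P_2 P_3 P_4$ in the proof of \Cref{lem:stDistLem}.

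For the ``only if'' direction, assume $\dist(s,t) \leq 4\log n + 3$. Combined with the lower bound, equality holds, so the shortest path must be monotone in layer (any edge that decreases the layer would force an extra edge to recover the deficit, exceeding the length). I would then trace a monotone shortest path step by step. The dummy nodes of every expander sit at a single layer and have both endpoints one layer lower, so a monotone path that enters a dummy is forced to decrease its layer next; hence no expander edge can be used. Consequently the path is confined to tree edges, input-dependent edges, and exactly one $M$-edge. At layer $\log n$ the path is at some $L_2[i]$; the only layer-increasing neighbor in $L_3$ is $LU[i]$ if $u_i = 0$ and $LM[i]$ if $u_i = 1$ (note $LL[i]$ is inaccessible, as its only neighbors are its own tree descendants in $L_4$). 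From there the path descends to a leaf of the corresponding middle subtree. Since $LU[\cdot,\cdot]$ has no outgoing $M$-edge, a monotone path can only cross to the right side if it entered $LM$, forcing $u_i = 1$. From $LM[i,j]$ the two possible layer-increasing edges lead to $RM[j,i]$ (requiring $M_{ij} = 1$) or $RL[j,i]$ (requiring $M_{ij} = 0$); in the latter case the path would be stuck at $RL[j]$, which has no edge to $R_2$. Hence the crossing edge is $(LM[i,j], RM[j,i])$ with $M_{ij} = 1$, and by the symmetric argument on the right side we get $v_j = 1$. Together this yields $uMv = 1$.

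The main obstacle is the verification that expander edges (and their dummy nodes) cannot be exploited to shorten the path: this is what forces the careful layer assignment with dummies placed in the unique layer where both their endpoints lie one step below. Once the layering is in place, the forbidden-shortcut argument together with the layered case analysis is essentially the same as in the constant-degree reduction, with the additional observation that the $LL, RL$ subtrees serve only to absorb ``wrong-guess'' matrix edges and are unreachable along any monotone path of length $4\log n + 3$.
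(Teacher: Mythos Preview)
Your proposal is correct and takes essentially the same approach as the paper's proof: both assign integer layers so that every edge (including the dummy-incident edges) connects consecutive layers, then trace a monotone shortest path through $L_2\to LM\to RM\to R_2$ and rule out the $LU$/$RL$ branches exactly as you do. The only cosmetic difference is which side of their neighbors the dummy nodes are placed on (the paper puts the $L_2,L_4$ dummies one layer \emph{below} and the $R_2,R_4$ dummies one layer \emph{above}); either choice works, though note that your stated placement of the $R_2$ dummies at layer $3\log n+2$ makes their endpoints one layer \emph{higher}, not lower, so your blanket phrase ``both endpoints one layer lower'' needs the obvious symmetric variant there.
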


The number of nodes in the reduction graph is dominated by the nodes in $L_4 \cup R_4$, yielding the same asymptotic lower bounds as in the constant-degree case (\cref{ssec:st_const_degree}).
When a vector pair arrives, we first add all the potential edges and then remove the unnecessary ones,
in a similar fashion as before, to preserve expansion.
We defer the proof of expansion to \Cref{sssec:st_expand_proof}.
Thus, we get the following theorem for expander graphs.
\begin{restatable}{theorem}{stExpand}
    \label{thm:stExpand}
    For any constant $\epsilon>0$,
    there is no dynamic algorithm maintaining $(s,t)$-distance, SSSP or APSP,
    on all $N$-node constant-degree bipartite graphs with constant expansion,
    with amortized \ub{1/2} update time and \ub{1} query time,
    unless the \omv{} conjecture is false.
\end{restatable}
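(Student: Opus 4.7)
The plan is to combine the correctness characterization in \Cref{lem:stExpandDistLem} with the (deferred) expansion guarantee, and then mirror the complexity argument of \Cref{thm:stConst}. First I would observe that the reduction graph has $N=\Theta(n^2)$ nodes, with the size dominated by $|L_4|=|R_4|=3n^2$, and that it has constant maximum degree (tree edges and expander edges in each reinforced forest, plus at most one input-dependent edge per node of $L_2$ or $R_2$), so $m=\Theta(N)=\Theta(n^2)$.

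Next I would describe the reduction from an \oumv{} instance with matrix $M$ and stream $(u^1,v^1),\dots,(u^n,v^n)$. The $M$-dependent edges between $LM$--$RM$, $LM$--$RL$ and $LL$--$RM$ are installed once during preprocessing. When pair $(u^k,v^k)$ arrives, the $O(n)$ edges between $L_2$ and $\{LU,LM\}$ (and symmetrically on the right) must be refreshed; in order to keep the graph a constant-degree constant-expansion graph at every intermediate time step, I would first insert all the new edges dictated by $(u^k,v^k)$ and only then delete those dictated by $(u^{k-1},v^{k-1})$. A single query for $\dist(s,t)$ then suffices: by \Cref{lem:stExpandDistLem} the output for this pair is $1$ iff the reported distance is at most $4\log n+3$.

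A direct accounting shows that the whole \oumv{} instance is handled with $O(n^2)$ updates and $n$ queries. Under the assumption of a dynamic algorithm with update time $\ub{1/2}=O(n^{1-2\epsilon})$ and query time $\ub{1}=O(n^{2-2\epsilon})$ on constant-degree constant-expansion graphs, the total running time for processing all pairs would be $O(n^{3-2\epsilon})$, contradicting \Cref{con:omv}.

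The main obstacle is therefore not this complexity-style wrap-up but the expansion guarantee itself, deferred to \Cref{sssec:st_expand_proof}. The delicate point is that the input-dependent edges between $L_2$ and either $LM$ or $LU$ (and the analogous edges on the right) must not create sparse cuts, regardless of which bits $u_i$ and $v_j$ are set. The three-way split $LU/LM/LL$ together with the rule that each node of $L_2$ is always attached to exactly one of $LU$ or $LM$ is designed precisely so that the graph contains no input-sensitive sparse cut, while the expander edges added to the leaf layers by the reinforced-forest construction rule out the sparse cuts between an individual tree and the rest of the graph that would otherwise be present in the non-expander constant-degree reduction of \Cref{ssec:st_const_degree}.
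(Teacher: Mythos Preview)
Your proposal is correct and follows essentially the same approach as the paper: invoke \Cref{lem:stExpandDistLem} for correctness, note that $N=\Theta(n^2)$ with constant degree so that $\ub{1/2}=O(n^{1-2\epsilon})$ and $\ub{1}=O(n^{2-2\epsilon})$, account $O(n^2)$ updates and $O(n)$ queries over all $n$ vector pairs, and defer the expansion proof. Your observation about the insert-before-delete ordering to preserve expansion at intermediate steps matches the paper's treatment exactly.
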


\subsubsection{Distances in the expander graph}
\label{sssec:st_expand_dist_proof}
We prove the following lemma for the expander reduction graph described in \cref{ssec:st_expand}.
This is basically the graph described in \cref{ssec:st_const_degree}, with expander edges added in carefully chosen places, and one dummy node added on each such edge.
Note that adding the dummy nodes does not change the expansion by more than a constant factor,
does not not affect the asymptotic number of nodes,
and the shortest paths never use the expander edges and thus the also stay unaffected by the dummy nodes addition.
To simplify the writing, we thus ignore the dummy nodes in the following.
\stExpandDistLem*
\begin{proof}
The proof uses the same ideas and layering as in the proof of \cref{lem:stDistLem}.
We include the dummy nodes of $L_2$ and $L_4$ in one layer before their (non-dummy) neighbors, and the other dummy nodes in one layer after their neighbors.
This already shows that the graph is bipartite.

($\implies$) This direction of the proof is the same as in the proof of \cref{lem:stDistLem}, since we only add edges, and the same path as earlier still exists in the graph.

($\impliedby$) We still make use of the layering argument for this direction, but the argument is slightly more nuanced than the one for the constant-degree case.
Assume that there is a path of length $4\log n+3$ from $s$ to $t$. By the layering, each edge in the path must connect a layer $\ell$ node and a layer $(\ell+1)$ node, and there is exactly one such edge in the path for each $\ell\in[4\log n+2]$. Next, we use these two facts (sometimes implicitly) to show that the path must have the form of the above described path, and conclude that $uMv=1$.
\begin{itemize}
    \item The path must contain an edge from $L_2$ to $L_3$. Suppose the edge was to a node in $LU$. Since no node in $LU$ has an edge to $R_4$, the path does not always connect two nodes of strictly increasing layers, contradicting the claimed length. Thus the edge is of the form $(L_2[i], LM[i])$ for some $i \in [n]$, implying that $u_i = 1$.

    \item From there, the path must continue to some leaf $LM[i,j]$ of the tree rooted at $LM[i]$. Suppose the path continues to $RL$ instead of $RM$. Since there are no edges from $RL$ to $R_2$, the path would have to connect two  of the non-increasing at least once, which cannot happen on a $4\log n + 3$ length path. Thus the path uses the edge $(LM[i,j], RM[j,i])$, implying that $M_{ij} = 1$.

    \item From $RM[j,i]$, the path must continue to the root $RM[j]$, and then to $R_2[j]$ by using the edge $(RM[j],R_2[j])$, implying $v_j=1$.
\end{itemize}
We have established that there exist indices $i,j\in [n]$ such that $u_i = M_{ij} = v_j= 1$, implying $uMv=1$.
\end{proof}

\subsubsection{Expansion of the expander graph}
\label{sssec:st_expand_proof}

Let us now verify that the graph is indeed an $h_1$-expander graph, for some constant $h_1>0$.
Recall that the reinforced forests contain $h_0$ expanders, for some constant $h_0>0$.

While a similar proof as in the maximum matching expander reduction works for this setting as well, we take a different approach here.
Consider a reinforced forest with a set $U$ of internal nodes, a set $U'$ of leaves, and an $h_0$-expander on the leaves.
Let $S$ be a set of nodes in the forest (we do not bound $|S|$).

\begin{observation}
	\label{obs:strein_medium}
	Let $0<c<1/2$, that may depend on $n$.
	If $c|U'|\leq|S\cap U'|\leq (1-c)|U'|$, then
	$E(S,\bar S)|\geq ch_0|U'|$.
\end{observation}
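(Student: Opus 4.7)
The plan is to reduce the observation directly to the $h_0$-expansion guarantee on the leaves, since this is precisely the setting in which the expander property on $U'$ is useful.

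First, I would set $T := S \cap U'$ and split into two cases based on the size of $T$ relative to $|U'|/2$. If $|T| \le |U'|/2$, I apply the $h_0$-expander definition directly to $T$ inside the expander on $U'$; by assumption $|T| \ge c|U'|$, so the number of expander edges between $T$ and $U' \setminus T$ is at least $h_0 |T| \ge c h_0 |U'|$. If instead $|T| > |U'|/2$, I apply the expander definition to $T^* := U' \setminus T$ (whose size is $\le |U'|/2$ by assumption, and $\ge c|U'|$ since $|T|\le(1-c)|U'|$), yielding again at least $c h_0 |U'|$ expander edges crossing the same cut $(T, U'\setminus T)$.

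Next, I would translate each such expander edge into a cut edge of the reinforced forest. Because $T \subseteq S$ and $U'\setminus T \subseteq \bar S$, every expander edge $\{a,b\}$ with $a \in T$ and $b \in U' \setminus T$ is subdivided by a dummy node $d$ which lies in either $S$ or $\bar S$; in either case at least one of the two edges of the path $a$--$d$--$b$ crosses the cut $(S,\bar S)$. Hence the contribution of the expander to $|E(S,\bar S)|$ is at least $c h_0 |U'|$, as required. Edges of $E(S,\bar S)$ outside the leaf expander can only add to this lower bound.

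The only subtlety I expect is the dummy-node bookkeeping: one must be careful that a single expander edge is not overcounted as two cut edges and that the $h_0$-expansion constant of the reinforced forest is compared to that of the underlying expander on $U'$ without an extra factor. Since the argument only needs a lower bound on $|E(S,\bar S)|$, assigning at least one cut edge to each crossing expander edge is enough and avoids any such issue.
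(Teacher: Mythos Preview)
Your proposal is correct and follows essentially the same two-case argument as the paper: apply the $h_0$-expansion of the leaf expander to whichever of $S\cap U'$ and $U'\setminus S$ has size at most $|U'|/2$, and use the hypothesis to lower bound that side by $c|U'|$. The only difference is that the paper explicitly states (just before this observation's context) that dummy nodes are ignored in the analysis, so your dummy-node bookkeeping is more careful than required here but certainly not wrong.
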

\begin{claimproof}
	Consider first the case $|S\cap U'|\leq |U'|/2$, in which the expander graph edges on $U'$ guarantee $|E(S,\bar S)\cap U'|\geq h_0|S|$, implying
	$|E(S,\bar S)|\geq|E(S,\bar S)\cap U'|\geq h_0|S|\geq ch_0|U'|$.
	Otherwise, apply the analogous argument on the set $U'\setminus S$.
\end{claimproof}

\begin{observation}
	\label{obs:strein_moreint}
	$|E(S,\bar S)|\geq |U\cap S|-|U'\cap S|$.
\end{observation}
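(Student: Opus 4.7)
The plan is to prove the inequality using only the tree edges of the reinforced forest; the expander edges and dummy nodes can only contribute further cut edges, so it suffices to show $|E_T(S,\bar S)|\geq |U\cap S|-|U'\cap S|$, where $E_T$ denotes the set of tree edges. The bound is vacuous when the right-hand side is non-positive, so I focus on the case $|U\cap S|>|U'\cap S|$.

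First I would record the tree-degrees: every non-root internal node has tree-degree $3$ (two children and a parent), every root has tree-degree $2$, and every leaf has tree-degree $1$. Writing $a=|U\cap S|$, $b=|U'\cap S|$, and $r$ for the number of tree roots that lie in $S$, a direct sum gives
\[
\sum_{v\in S\cap(U\cup U')}\deg_T(v)=3(a-r)+2r+b=3a+b-r.
\]

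The next step is to estimate $|E_T(S,S)|$. The subgraph of $G_T$ induced on $S\cap(U\cup U')$ is itself a forest (a subgraph of the original forest of trees); hence, if $c$ denotes the number of its connected components, it has exactly $a+b-c$ edges. Combining with the identity $\sum_{v\in S}\deg_T(v)=2|E_T(S,S)|+|E_T(S,\bar S)|$ then yields
\[
|E_T(S,\bar S)|=3a+b-r-2(a+b-c)=a-b+2c-r.
\]

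What remains is the inequality $c\geq r$. The key observation is that roots of distinct trees lie in disjoint components of the original forest, hence also in disjoint components of $G_T[S]$ when they are in $S$; so each root-in-$S$ is contained in a distinct component of $G_T[S]$, giving an injection from these roots into the component set. Substituting $c\geq r$ into the previous identity produces $|E_T(S,\bar S)|\geq a-b+r\geq a-b$, which is the desired bound. There is no real obstacle here: the only care needed is in the degree bookkeeping together with the tight count $|E_T(S,S)|=a+b-c$ via the forest structure. Dummy nodes play no role in the estimate because they do not belong to $U\cup U'$ and the bound uses only tree edges.
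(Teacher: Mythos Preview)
Your proof is correct, but it takes a more elaborate route than the paper's two-line argument. The paper simply counts child edges emanating from $U\cap S$: every internal node has exactly two children, so these nodes contribute $2|U\cap S|$ parent-to-child edges; since each node has at most one parent, the number of those children that lie in $S$ is at most $|(U\cup U')\cap S|=|U\cap S|+|U'\cap S|$, and hence at least $2|U\cap S|-(|U\cap S|+|U'\cap S|)=|U\cap S|-|U'\cap S|$ of these edges cross into $\bar S$. Your argument instead computes the full tree-degree sum over $S$, uses the forest structure to obtain the exact induced edge count $a+b-c$, and closes via the injection from roots-in-$S$ into components of $G_T[S]$. Both are valid; the paper's counting trick is shorter and sidesteps any discussion of roots or components, while your approach yields the sharper identity $|E_T(S,\bar S)|=a-b+2c-r$ before relaxing it.
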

\begin{claimproof}
	Every node in $U\cap S$ is a tree node that has two children.
	In total, these node has $2|U\cap S|$ edges going to children, of which at most $|U\cap S|+|U'\cap S|$ children are in $S$.
	Hence, $|E(S,\bar S)|\geq 2|U\cap S|-(|U\cap S|+|U'\cap S|)=|U\cap S|-|U'\cap S|$.
\end{claimproof}

\begin{observation}
	\label{obs:strein_lessint}
	Let $0<c<1/2$ constant.
	If $|U\cap S|\leq c|U'\cap S|$
	then $|E(S,\bar S)|\geq (0.5-c)|U'\cap S|$.
\end{observation}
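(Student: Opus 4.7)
The plan is to bound $|E(S,\bar S)|$ by counting only the parent edges of the leaves in $U'\cap S$, ignoring both the expander edges on $U'$ and the dummy subdivision nodes. In the binary-forest structure, every leaf $u'\in U'$ has exactly one parent in $U$, so there are exactly $|U'\cap S|$ such parent edges incident to $U'\cap S$. Such an edge is \emph{not} in $E(S,\bar S)$ only when its endpoint in $U$ belongs to $U\cap S$. Since every internal node of a binary tree has at most two children, each node of $U\cap S$ can be the parent of at most two leaves of $U'\cap S$, so at most $2|U\cap S|$ of the parent edges lie entirely inside $S$.

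Combining these two bounds, the number of parent edges crossing the cut is at least
\[
|E(S,\bar S)|\;\geq\; |U'\cap S|-2|U\cap S|.
\]
Plugging in the hypothesis $|U\cap S|\leq c|U'\cap S|$ yields $|E(S,\bar S)|\geq (1-2c)|U'\cap S|$, and since $c<1/2$ we have $1-2c\geq 1/2-c$, proving the observation.

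There is no real obstacle here: the expander edges on $U'$ and the subdivision dummy nodes can only add to $|E(S,\bar S)|$, so dropping them is safe, and the tree structure alone gives the bound. The reason for stating the weaker $(1/2-c)$ constant (rather than the $1-2c$ one actually obtained) is presumably to unify the formulation with \Cref{obs:strein_medium,obs:strein_moreint} when they are combined in the full expansion argument for a reinforced forest.
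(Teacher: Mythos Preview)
Your argument is correct and follows essentially the same idea as the paper: both proofs exploit the tree parent edges of the leaves in $U'\cap S$. The paper counts \emph{distinct parents}---at least $0.5|U'\cap S|$ of them, of which at most $|U\cap S|\le c|U'\cap S|$ lie in $S$---yielding $(0.5-c)|U'\cap S|$ cut edges; you instead count the $|U'\cap S|$ parent \emph{edges} directly and subtract the at most $2|U\cap S|$ that stay inside $S$, which gives the sharper constant $(1-2c)$ before relaxing to $(0.5-c)$.
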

\begin{claimproof}
	If $|U\cap S|\leq c|U'\cap S|$, consider the parent nodes of the leaves in $U'\cap S$: these have at least $0.5|U'\cap S|$ distinct parents, of which at most $c|U'\cap S|$ are in $U\cap S$, and the others are in
	$U\cap \bar S$. Hence
	$|E(U\cap \bar S,U'\cap S)|\geq (0.5-c)|U'\cap S|$.
\end{claimproof}

The proof of the following lemma is similar to that of~\Cref{lem:matchExpandGadget}.

\begin{lemma}
	\label{lem: rein-forest is an expander}
	A reinforced forest is an expander.
\end{lemma}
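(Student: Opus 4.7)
The plan is to fix an arbitrary $S\subseteq V$ with $|S|\le|V|/2$ and show $|E(S,\bar S)|\ge c|S|$ for a universal constant $c>0$, in the spirit of the proof of \Cref{lem:matchExpandGadget}. First I would record the cardinality relation $|U|=x(2^y-1)<|U'|=x 2^y$, which gives $|V|=|U|+|U'|<2|U'|$, hence $|S|\le|V|/2<|U'|$; consequently any lower bound of the form $|E(S,\bar S)|\ge\alpha|U'|$ upgrades to $|E(S,\bar S)|\ge\alpha|S|$. Three tools are already at hand: \Cref{obs:strein_moreint} for sets with many more internal nodes than leaves, \Cref{obs:strein_medium} for balanced leaf intersection, and \Cref{obs:strein_lessint} for few internal nodes but many leaves. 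The proof is a case analysis combining these three.

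The primary split is on the ratio of $|S\cap U|$ to $|S\cap U'|$. If $|S\cap U|>2|S\cap U'|$, then \Cref{obs:strein_moreint} yields $|E(S,\bar S)|\ge |S\cap U|-|S\cap U'|>|S\cap U|/2\ge|S|/3$. Otherwise $|S\cap U|\le 2|S\cap U'|$, so $|S|\le 3|S\cap U'|$ and it suffices to bound $|E(S,\bar S)|$ from below in terms of $|S\cap U'|$ alone.

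In this ``leaf-heavy'' regime I would further sub-split on $|S\cap U'|/|U'|$. If $|S\cap U'|\le|U'|/2$, then $S\cap U'$ is the smaller side of the leaf expander and directly $|E(S,\bar S)|\ge h_0|S\cap U'|\ge (h_0/3)|S|$. If $|U'|/2<|S\cap U'|\le 0.9|U'|$, then \Cref{obs:strein_medium} with parameter $0.1$ gives $|E(S,\bar S)|\ge 0.1\, h_0|U'|\ge 0.1\, h_0|S|$. The remaining case $|S\cap U'|>0.9|U'|$ is the only genuinely delicate one, because then the smaller side $\bar S\cap U'$ of the leaf expander may be negligibly small, and the expander alone cannot yield expansion of order $|S|$.

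The main obstacle is thus exactly this last case, and the trick is to combine the global bound $|S|\le|U'|$ with $|S\cap U'|>0.9|U'|$ to force $|S\cap U|\le |S|-|S\cap U'|<0.1|U'|<(1/9)|S\cap U'|$, so that \Cref{obs:strein_lessint} with parameter $1/9$ activates and gives $|E(S,\bar S)|\ge (1/2-1/9)|S\cap U'|>(7/18)\cdot 0.9\,|U'|=\Omega(|S|)$. Taking the worst constant among the four cases produces a uniform expansion bound $c=\Theta(h_0)$, finishing the proof.
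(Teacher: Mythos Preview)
Your proof is correct and uses the same three observations in essentially the same way as the paper. The only difference is organizational: the paper splits first on the size of $|S\cap U'|$ relative to $|U'|$ (three ranges) and then, in the small range, on the ratio $|S\cap U|/|S\cap U'|$; you split first on the ratio and then, in the leaf-heavy case, on the size of $|S\cap U'|$. Both orderings cover the same cases with the same tools, and your extra sub-split of $|S\cap U'|\le |U'|/2$ versus $|U'|/2<|S\cap U'|\le 0.9|U'|$ is harmless since both are handled by the leaf expander.
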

\begin{proof}
	Fix a set $S$ of nodes in the forest, $|S|\leq(|U\cup U'|)/2$.
	We consider different ranges of $|U'\cap S|$, and show that for each of them, $|E(S,\bar S)|\geq h_1|S|$ for some constant $h_1$.
	\begin{enumerate}
		\item
		If $|U'\cap S|>0.9|U'|$, the inequalities $|U|<|U'|$ and $|S|\leq|U\cup U'|/2$ imply $|U\cap S|<0.1|U'|$, and hence
		$|U'\cap S|>(1/9)|U\cap S|>0.1|U\cap S|$.
		Observation~\ref{obs:strein_lessint} implies $|E(S,\bar S)|\geq 0.4|U'\cap S|\geq 0.36|U'|> 0.18|U\cup U'|\geq 0.36|S|$.

		\item
		If $0.1|U'|\leq |U'\cap S|\leq 0.9|U'|$, then by Observation~\ref{obs:strein_medium} we have
		$|E(S,\bar S)|\geq 0.1h_0|U'|$.
		Since $|U|<|U'|$ and $|S|\leq|U\cup U'|/2$, the above implies
		$|E(S,\bar S)|\geq 0.05h_0|U\cup U'|\geq 0.1 h_0|S|$.

		\item
		If $|U'\cap S|<0.1|U'|$, then
		\begin{enumerate}
			\item
			If $|U\cap S|>2|U'\cap S|$,
			add
			$0.5|U\cap S|-1.5|U'\cap S|$ to both sides of the inequality and multiply by $2/3$ to get
			$|U\cap S|-|U'\cap S|>1/3(|U\cap S|+|U'\cap S|)=|S|/3$.
			By Observation~\ref{obs:strein_moreint},
			$|E(S,\bar S)|\geq |U\cap S|-|U'\cap S|>|S|/3$.

			\item
			If $|U\cap S|\leq 2|U'\cap S|$
			then
			$|S|=|U\cap S|+|U'\cap S|\leq 3|U'\cap S|$.
			The expander edges on $U'$ guarantee that
			$|E(S,\bar S)|\geq h_0|U'\cap S|\geq(h_0/3)|S|$.
			\qedhere
		\end{enumerate}
	\end{enumerate}
\end{proof}

\begin{lemma}
The reduction graph is an expander.
\end{lemma}
\begin{proof}
Consider a set $S\subseteq V$ of $|S|\leq N/2=6n^2-n-1$ nodes.
Note that the node sets in the graph have the following sizes:
$|L_1|=|R_1|=n-1$; $|L_2|=|R_2|=n$; $|L_3|=|R_3|=3n^2-3n$; $|L_4|=|R_4|=3n^2$.
We show that there exists a constant $h_1$ such that $|E(S,\bar S)|\geq h_1|S|$,
by considering different ranges for the set size $|L_4\cap S|$ (which ranges in $0,\ldots, 3n^2$).
In most cases, we show that $|E(S,\bar S)|\geq cn^2$ for some constant $c$, which is enough since $|S|<6n^2$.
\begin{enumerate}
	\item
	If $|L_4\cap S|>2.9n^2$, then consider following sub-cases.
	\begin{enumerate}
		\item
		If $|R_4\cap S|>2.1n^2$, then since $|S|\leq 6n^2-n-1$, we have $|L_3\cap S|< n^2$, which implies
		$|L_4\cap S|>2.9|L_3\cap S|$.
		By Observation~\ref{obs:strein_lessint},
		$|E(S,\bar S)|\geq 0.15|L_4\cap S|>0.4n^2$.

		\item
		If $0.1n^2\leq|R_4\cap S|\leq2.1n^2$, the expander edges on $R_4$ (Observation~\ref{obs:strein_medium} with $c=0.03$) guarantee $E(S,\bar S)|\geq 0.03h_0n^2$.

		\item
		If $|R_4\cap S|<0.1n^2$, note that there are $n^2$ edges in $L_4\times R_4$, of which
		at most $0.1n^2$ are in $(L_4\cap\bar S)\times (R_4\cap\bar S)$ (since $L_4\cap S$ is large),
		at most $0.1n^2$ are in $(L_4\cap S)\times (R_4\cap S)$ (since $R_4\cap S$ is small),
		and the remaining edges, at least $0.8n^2$ of them, are in $E(S,\bar S)$.
	\end{enumerate}

	\item
	If $0.1n^2\leq|L_4\cap S|\leq2.9n^2$, then as in case 1(b),
	the expander edges on $L_4$
	(Observation~\ref{obs:strein_medium} with $c=0.03$)
	guarantee $E(S,\bar S)|\geq 0.03h_0n^2$.

	\item
	If $|L_4\cap S|<0.1n^2$, consider the following sub-cases.
	\begin{enumerate}
		\item
		If $|R_4\cap S|>2.1n^2$, we are in a case symmetric to case 1(c):
		of the $n^2$ edges in $L_4\times R_4$,
		at most $0.1n^2$ are in $(L_4\cap\bar S)\times (R_4\cap\bar S)$ (since $L_4\cap S$ is small),
		at most $0.1n^2$ are in $(L_4\cap S)\times (R_4\cap S)$ (since $R_4\cap S$ is large),
		and the remaining edges, at least $0.8n^2$ edges, are in $E(S,\bar S)$.

		\item
		If $0.1n^2\leq|R_4\cap S|\leq(2.1)n^2$, again the expander edges on $R_4$
		(Observation~\ref{obs:strein_medium} with $c=0.03$) guarantee $E(S,\bar S)|\geq 0.03h_0n^2$.

		\item
		In the last case, namely $|R_4\cap S|<0.1n^2$, we are bound to analyze the subgraphs on $L=L_1\cup L_2\cup L_3\cup L_4$ and on $R=R_1\cup R_2\cup R_3\cup R_4$ separately.
		We will show that there is a constant $h_2$ such that
		$|E(L\cap S,L\cap\bar S)|\geq h_2|L\cap S|$ and
		$|E(R\cap S,R\cap\bar S)|\geq h_2|R\cap S|$,
		which implies
		$|E(S,\bar S)|
		\geq |E(L\cap S,L\cap\bar S)|+|E(R\cap S,R\cap\bar S)|
		\geq h_2|L\cap S|+  h_2|R\cap S|=h_2|S|$, as desired.

		We focus on $L$; the proof for $R$ is analogous.
		\begin{enumerate}
			\item
			If $0.1n\leq|L_4\cap S|<0.1n^2$,
			\begin{enumerate}
				\item
				If $|L_3\cap S|\geq 2|L_4\cap S|$, then by Observation~\ref{obs:strein_moreint}, we have
				$|E(L\cap S,L\cap\bar S)|\geq |L_3\cap S|-|L_4\cap S|$, and we
				bound this difference from below several times.
				By $|L_3\cap S|\geq 2|L_4\cap S|$, we have
				$|L_3\cap S|-|L_4\cap S|\geq
				0.5|L_3\cap S|$.
				By the same inequality
				$|L_3\cap S|-|L_4\cap S|\geq|L_4\cap S|$.
				In addition,
				$|L_4\cap S|\geq 0.1n\geq 0.05|L_1\cup L_2|$.
				Hence, $|E(L\cap S,L\cap\bar S)|\geq
				(0.5|L_3\cap S|+|L_4\cap S|+0.05|L_1\cup L_2|)/3\geq
				0.01|L\cap S|$,
				and we are done.

				\item
				If $|L_3\cap S|\leq 2|L_4\cap S|$, then
				$|(L_3\cup L_4)\cap S|\leq |(L_3\cup L_4)|/2$,
				and by Lemma~\ref{lem: rein-forest is an expander} we have
				$|E(L\cap S,L\cap\bar S)|\geq
				h_1|(L_3\cup L_4)\cap S|$.
				We also have
				$|(L_3\cup L_4)\cap S|
				\geq|(L_4\cap S|\geq0.1n\geq 0.05|L_1\cup L_2|$.
				Hence, $|E(L\cap S,L\cap\bar S)|\geq 0.02h_1|L\cap S|$
				and we are done.
			\end{enumerate}

			\item
			If $|L_4\cap S|<0.1n$,
			\begin{enumerate}
				\item
				If $|L_3\cap S|\geq 0.2n$, then again by Observation~\ref{obs:strein_moreint}, we have
				$|E(L\cap S,L\cap\bar S)|\geq |L_3\cap S|-|L_4\cap S|
				\geq |L_3\cap S|/2$.

				Since $|L_3\cap S|/2 > |L_4\cap S|/4$ and also
				$|L_3\cap S|/2 \geq 0.1n>0.05|L_1\cup L_2|$, we have
				$|E(L\cap S,L\cap\bar S)|\geq0.01|L_3\cap S|$.

				\item
				If $|L_3\cap S|< 0.2n$, then again by
				Lemma~\ref{lem: rein-forest is an expander} we have
				$|E(L\cap S,L\cap\bar S)|\geq
				h_1|(L_3\cup L_4)\cap S|$.
				We treat $(L_1\cap L_2)\cap S$ separately, considering three cases by the sizes of $L_2\cap S$ and $L_1\cap S$.
				\begin{itemize}
					\item
					If $|L_2\cap S|\geq 0.3n$,
					note that each node in $L_2$ is connected by an edge to exactly one node in $L_3$, and these nodes are distinct.
					So there are at least $0.3n$ edges in $(L_2\cap S)\times L_3$, of which at most $0.2n$ have their $L_3$ endpoint in $|L_3\cap S|$, hence $|E(L\cap S,L\cap\bar S)|\geq0.1n\geq 0.05|(L_1\cup L_2)\cap S|$.

					\item
					If $|L_2\cap S|<0.3n$ and $|L_1\cap S|<0.7n$,
					then by Lemma~\ref{lem: rein-forest is an expander} we have
					$|E(L\cap S,L\cap\bar S)|\geq
					h_1|(L_1\cup L_2)\cap S|$.

					\item
					If $|L_2\cap S|<0.3n$ and $|L_1\cap S|\geq0.7n$,
					then by Observation~\ref{obs:strein_moreint},
					$|E(L\cap S,L\cap\bar S)|\geq
					|L_1\cap S|-|L_2\cap S|>0.4n\geq 0.2|L_1\cup L_2|
					\geq 0.4|(L_1\cup L_2)\cap S|$.
				\end{itemize}
			\end{enumerate}
		\end{enumerate}
	\end{enumerate}
\end{enumerate}
\end{proof}

\subsection{Power-law Graph}
\label{ssec:st_powerlaw}

As in the case of maximum matching, we first make our reduction graph robust to degree changes. The following reduction is close in spirit to the expander graph reduction. The graph is as follows:

\begin{itemize}
	\item A $(\log n)$-depth binary forest with a single tree, labelled $L_1 \cup L_2$ as before, with $s$ as the root of the tree.

	\item A $(\log n)$-depth binary forest with $3n$ trees, labelled $L_3 \cup L_4$. We split this binary forest into sets of $n$ trees each, and label them $LU, LM, LL$ (upper, middle, and lower). The roots of each of the $n$ trees are marked as $LX[i]$, for $1 \le i \le n$, $X \in \{U,M,L\}$, and the leaves of the tree with root $LX[i]$ are marked $LX[i,j]$, for $1 \le j \le n$.

	\item A copy of the above structure, with node sets marked $R_i$ instead of $L_i$, respectively. The root of the single tree of $R_1$ is the target node $t$.

    \item For the matrix $M$, add the edges $(LM[i,j],RM[j,i])$ and $(LL[i,j], RL[j,i])$ if $M_{ij}=1$, and add the edges $(LM[i,j], RL[j,i])$ and $(LL[i,j], RM[j,i])$ otherwise.

    \item Given an input vector $u$, for each $i\in[n]$, add the edge $(L_2[i], LM[i])$ if $u_i=1$, and $(L_2[i], LU[i])$ otherwise.

    \item Given an input vector $v$, for each $j\in[n]$, add the edge $(R_2[j], RM[j])$ if $v_j=1$, and $(R_2[j], RU[j])$ otherwise.
\end{itemize}

Note that this fixes the degree distribution present in the graph regardless of the input $(u, M, v)$, unlike in the case of maximum matching where we needed to compensate for the degrees elsewhere in the graph. \Cref{tab:st_powerlaw_degrees} shows the degree distribution on the left side of the reduction graph, regardless of the input bits.

\begin{table}[t]
    \renewcommand{\arraystretch}{1.2}
    \centering
    \begin{tabular}{|c|c|c|c|c|}
    \hline
    Layer & Deg $1$ & Deg $2$ & Deg $3$ \\
    \hline
    $L_1$ & $0$ & $1$ & $n-2$ \\
    \hline
    $L_2$ & $0$ & $n$ & $0$ \\
    \hline
    $L_3$ & $0$ & $2n$ & $3n^2 - 5n$ \\
    \hline
    $L_4$ & $n^2$ & $2n^2$ & $0$ \\
    \hline
    \end{tabular}
    \caption{Degree distribution of the nodes on the left side of the $(s,t)$-distance reduction graph.}
    \label{tab:st_powerlaw_degrees}
\end{table}

Thus the degree distribution in the entire reduction graph is as follows: $(1, 2n^2), (2, 4n^2 + 6n + 2), (3, 6n^2 - 8n - 4)$. As earlier, choose $N$ such that \[
N > \zeta(\beta) \cdot \max \{ (2N_1 + 2n), (2N_2+2n) \cdot 2^{\beta}, (2N_3+2n) \cdot 3^{\beta} \}
,\] and pick any power-law graph $G$ on $N$ nodes. We reduce the count of degree $1, 2$ and $3$ nodes in the graph as earlier, and embed our reduction graph using these nodes. Unlike in the case of matchings, note that the $(s,t)$-distance is not affected by the rest of the graph, and thus $uMv = 1$ if and only if $\dist(s,t) \le 4\log n + 3$.

\section{Lower Bounds for Dynamic Densest Subgraph}
\label{sec:full_Densest_Subgraph_LBs}

For a constant $d\geq 3$, we work with $(2d)$-regular graphs of two different sizes:
$N$-node gadget for each vector entry, and $N^2$-node gadget for each matrix entry.
\begin{definition}[Vector and Matrix Gadgets]
A \emph{vector gadget} is a $6$-edge connected  $2d$-regular graph on $n$ nodes.
A \emph{matrix gadget} is a $6$-edge connected $2d$-regular graph on $n^2$ nodes with one edge removed
\end{definition}
A $6$-edge connected graph is a graph that does not disconnect after the removal of any $5$ edges.
Such a $(2d)$-regular graph is, e.g., a $d$-dimensional torus, or a union of $d$ edge-disjoint cycles, each going through all the nodes.
Note that the density of a vector gadget is $d$, and the density of a matrix gadget is $d - \frac{1}{n^2}$.
We prove the following theorems for maintaining densest subgraphs.
\begin{restatable}{theorem}{denseConst}
	\label{thm:denseConst}
	For any constant $\epsilon>0$,
	there is no dynamic algorithm maintaining an exact densest subgraph,
	on all $N$-node graphs with maximum degree $\Delta \le 7$,
	with amortized \ub{1/4} update time and \ub{1/2} query time,
	unless the \omv{} conjecture is false.
\end{restatable}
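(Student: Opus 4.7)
The plan is a reduction from \oumv{} to dynamic densest subgraph, in the spirit of the previous sections but with a much more delicate density analysis. I would build a graph on $N=\Theta(n^4)$ nodes consisting of $n$ u-vector gadgets $U_1,\ldots,U_n$, $n$ v-vector gadgets $V_1,\ldots,V_n$ on $n$ nodes each, together with $n^2$ matrix gadgets $G_{ij}$ on $n^2$ nodes each. Taking $d=3$ makes every internal degree equal to $2d=6$, and I use at most one external edge per node, guaranteeing $\Delta \leq 7$. The static skeleton permanently attaches each $G_{ij}$ by a single edge to a distinguished node of $U_i$ and by another single edge to a distinguished node of $V_j$, spreading the external edges so that each node of $U_i$ and $V_j$ has at most one external neighbour.

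The dynamics encode the inputs with $O(1)$ edge changes per bit. During preprocessing (allowed by \oumv{}) I insert the missing edge inside $G_{ij}$ for every $(i,j)$ with $M_{ij}=1$, raising its density from $d-1/n^2$ to $d$. For each input pair $(u,v)$ I then insert or delete the two fixed edges encoding each bit $u_i$ and $v_j$, for a total of $O(n)$ updates, followed by one densest-subgraph query. These two edges per bit must be placed so that their contribution to the density of any single gadget is only $\Theta(1/n^2)$---in particular they cannot live strictly inside the small vector gadget, which would raise its density by $2/n$ and swamp the $\Theta(1/n^2)$ signal we care about. Placing them inside (or on the boundary of) the associated matrix gadgets gives the correct scale, so that each such bit adds exactly two to the edge count of $U_i\cup V_j\cup G_{ij}$ when it equals $1$ and nothing otherwise.

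The analysis must then show that the densest subgraph has density above some threshold $\tau$ iff $uMv=1$. The key tool is the $6$-edge-connectivity of the base gadgets: in a $2d$-regular $6$-edge-connected graph any proper $k$-node induced subgraph has at most $dk-3$ edges, hence density at most $d-3/k<d$, which rules out partial gadgets as dense candidates. The densest candidates are therefore unions of whole gadgets together with the $O(1)$ boundary edges between them. Direct arithmetic shows that $U_i\cup V_j\cup G_{ij}$ has density $d+c(u_i,M_{ij},v_j)/(n^2+2n)$ where $c$ is maximised exactly when $u_i=M_{ij}=v_j=1$, yielding a $\Theta(1/n^2)$ density gap between $uMv=1$ and $uMv=0$. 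The main obstacle is ruling out \emph{every} other union---e.g.\ $U_i\cup G_{ij}\cup G_{ij'}$, $U_i\cup U_{i'}\cup V_j\cup G_{ij}$, or combinations across multiple rows and columns---as still below $\tau$ even when many input bits are $1$; this exploits that any two matrix gadgets share only $O(1)$ boundary edges while together contributing $2n^2$ nodes, and that a mixed vector--matrix union concentrates the small signal in the smaller gadget, which makes combining never profitable.

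Once the density gap is established, the complexity is routine. Over $n$ input pairs the reduction performs $O(n^2)$ updates and $O(n)$ queries. An algorithm with update time $O(N^{1/4-\epsilon})=O(n^{1-4\epsilon})$ and query time $O(N^{1/2-\epsilon})=O(n^{2-4\epsilon})$ would solve the whole \oumv{} instance in $O(n^{3-4\epsilon})$ time, contradicting Conjecture~\ref{con:omv}.
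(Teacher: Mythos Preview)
Your overall architecture matches the paper's---$2n$ vector gadgets on $n$ nodes, $n^2$ matrix gadgets on $n^2$ nodes, $d=3$, threshold $d+\Theta(1/n^2)$, and the final complexity arithmetic is correct. The gap is in where you place the edges encoding the vector bits. You reject putting the two edges for $u_i$ inside $U_i$ on the grounds that this would ``raise its density by $2/n$ and swamp the $\Theta(1/n^2)$ signal''. That objection only applies if you encode by \emph{adding} edges when $u_i=1$. The paper does the opposite: it \emph{removes} two edges from $U_i$ when $u_i=0$ (and one edge from the matrix gadget when $M_{ij}=0$). Then $U_i$ has density exactly $d$ when $u_i=1$ and $d-2/n$ when $u_i=0$; in neither case does $U_i$ alone exceed the threshold $d+1/(n^2+2n)$, so nothing is swamped. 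Once $U_i$ is absorbed into the large union $U_i\cup G_{ij}\cup V_j$ of size $n^2+2n$, those two missing edges lower the union's density by exactly $2/(n^2+2n)$, which is the right scale.

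Your proposed alternative---placing the $u_i$-edges ``inside (or on the boundary of) the associated matrix gadgets''---cannot work with $O(1)$ updates per bit. The bit $u_i$ must simultaneously influence all $n$ candidate unions $U_i\cup G_{ij}\cup V_j$ for $j=1,\ldots,n$, so the edges encoding it must lie in their common intersection, which is $U_i$. Any $O(1)$ edges placed inside a specific $G_{ij^\ast}$ affect only that one union; when $u_i=M_{ij}=v_j=1$ for some $j\neq j^\ast$ the corresponding union would fail to reach the threshold and the reduction breaks. Touching all $n$ matrix gadgets per bit costs $\Theta(n^2)$ updates per vector pair, which kills the lower bound. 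So the fix is simply to encode vector bits by deletions inside the vector gadgets, exactly the option you discarded.
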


\begin{restatable}{theorem}{denseExpand}
	\label{thm:denseExpand}
	For any constant $\epsilon>0$,
	there is no dynamic algorithm maintaining an exact densest subgraph,
	on all $N$-node graphs with constant degree and constant expansion,
	with amortized \ub{1/4} update time and \ub{1/2} query time,
	unless the \omv{} conjecture is false.
\end{restatable}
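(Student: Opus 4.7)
The plan is to start from the reduction graph $G$ of \Cref{thm:denseConst}, which has constant degree, costs $O(n)$ updates plus one query per input vector pair, and whose densest subgraph has density $d^* = d$ exactly when $uMv = 1$ and $d^* = d - 1/n^2$ otherwise. I would augment $G$ into an expander graph $G^+$ via a copy-plus-matching technique that provably does not affect the densest subgraph. Concretely, I build $G^+$ by taking a disjoint copy $V'$ of the node set $V(G)$, placing a $d'$-regular expander with constant expansion $h_0>0$ on $V'$ for a small constant $d'$ (e.g., $d'=3$), and adding a perfect matching between $V(G)$ and $V'$. The maximum degree is at most $\Delta(G) + 1 + d'$, still constant, and the total node count $2|V(G)|$ is of the same order as in \Cref{thm:denseConst}. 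All auxiliary edges are static and are inserted during $\poly(n)$ preprocessing, so the update and query accounting per input pair is identical to that of \Cref{thm:denseConst}.

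The first key step is to show that the densest subgraph of $G^+$ equals the densest subgraph of $G$. For any induced subgraph on $S = S_G \cup S_{V'}$ with $a = |S_G|$ and $b = |S_{V'}|$, the edge count is at most $|E_G(S_G)| + (d'/2)\,b + b$, where the last two terms account for expander edges within $V'$ and matching edges to $V(G)$. Since $|E_G(S_G)| \le d^* \cdot a$ by definition of $d^*$, and $d \ge 3$ ensures $d^* \ge d - 1/n^2 > d'/2 + 1$, any subgraph with $b>0$ has density strictly less than $d^*$. Therefore the densest subgraph remains inside $V(G)$, and the distinguishing threshold separating $uMv=0$ from $uMv=1$ is preserved.

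The second key step is to verify that $G^+$ has constant expansion. For any $S \subseteq V(G^+)$ with $|S| \le N_0 = |V(G)|$, the matching contributes at least $|a - b|$ cut edges, and the expander on $V'$ contributes at least $h_0 \cdot \min(b, N_0 - b)$ cut edges. A short case analysis handles the regimes: (i) $b \le N_0/2$ and $a \ge b$, where matching plus expander give $\Omega(a)$ cut edges; (ii) $b \le N_0/2$ and $a < b$, where the expander alone gives $\Omega(b) = \Omega(|S|)$; (iii) $N_0/2 < b \le 3N_0/4$, where the expander on $V'$ still yields $\Omega(N_0 - b) = \Omega(b)$; and (iv) $b > 3N_0/4$, where $a \le N_0 - b \le b/3$ forces the matching cut $b - a = \Omega(|S|)$. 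In every case, the cut is $\Omega(|S|)$, so $G^+$ is a constant-expansion graph.

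The main obstacle is the very tight density calibration: the gap between the two possible densest-subgraph densities of the base reduction is only $1/n^2$, so any additional edges near the densest subgraph risk destroying the reduction. The virtue of the copy-plus-matching-plus-expander augmentation is that it places all new edges entirely on fresh copy nodes, and so contributes to a subgraph's density only if those copies are included; the parameter condition $d'/2 + 1 < d - 1/n^2$ then forces the inclusion of any copy node to strictly decrease density, thereby preserving the $1/n^2$ gap exactly. The hardest part to get right is thus not the expansion argument, which is a routine case analysis, but rather the simultaneous requirements that (a) the auxiliary structure be dense enough as a graph to guarantee expansion of $G^+$, yet (b) sparse enough locally that no subgraph touching $V'$ ever beats the original densest subgraph; the parameters $d = 3$, $d' = 3$ satisfy both.
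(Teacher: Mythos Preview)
Your approach is essentially the same as the paper's: both take the constant-degree reduction graph, adjoin a disjoint copy of the vertex set carrying a constant-degree expander, and connect the two by a perfect matching; both then argue that (a) copy nodes can never participate in a densest subgraph because their local contribution is too sparse, and (b) the resulting graph has constant expansion via a short case analysis on $|S\cap V'|$.

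Two minor points. First, your description of the base reduction's densities is off: by \Cref{lem:densityLem} the threshold is $d+\frac{1}{n^2+2n}$, and the densest subgraph is \emph{strictly greater} than $d$ when $uMv=1$ (not equal to $d$), while for $uMv=0$ it is not pinned down as $d-1/n^2$ but merely below the threshold. This does not affect your augmentation argument, which only needs the inequality $d'/2+1<d^*$, and that holds since $d^*\ge d-2/n$. Second, your parameter choice $d=3$, $d'=3$ differs from the paper's constraint $d'\le d-2$ (which forces $d\ge 5$ to allow a $3$-regular expander on the copy). Your averaging argument ($\rho(S)\le\frac{d^*a+(d'/2+1)b}{a+b}<d^*$ for $b>0$) is actually slightly cleaner than the paper's node-by-node removal, and lets you keep $d=3$; the paper's version bounds each copy node's degree by $d'+1\le d-1$ and removes them one at a time. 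Both work.
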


\subsection{Constant-Degree Graph}
\label{ssec:dense_const_degree}

\subsubsection{The static graph}
\label{sssec:dense_static}

The reduction graph is composed of $2n$ vector gadgets, and $n^2$ matrix gadgets as follows.
\begin{itemize}
	\item $2n$ vector gadgets labelled $U_i$ resp.~$V_i$, for $1 \le i \le n$. The nodes in each gadget are labelled $U_i[j]$ resp.~$V_i[j]$, for $1 \le j \le n$.

    \item $n^2$ matrix gadgets labelled $M_{ij}$, for $1 \le i, j \le n$. The missing edge in the gadget $M_{ij}$ is between the two nodes labelled $M_{ij}[0]$ and $M_{ij}[1]$.

    \item An edge from $U_i[j]$ to $M_{ij}[0]$, and an edge from $V_j[i]$ to $M_{ij}[1]$ for all $1 \le i,j \le n$. Note that this implies that every node of $M_{ij}$ has degree $2d$ and that the total number of edges incident to at least one node of $M_{ij}$ is $d n^2 + 1$.
\end{itemize}
The total number of nodes in the reduction graph is $N = n^4+2n^2 = \Theta(n^4)$.

\subsubsection{Input-dependent edges}
\label{sssec:dense_input}
The above graph is adapted to the specific input instance as follows.
\begin{itemize}
    \item For the matrix $M$, remove one arbitrary edge from the matrix gadget $M_{ij}$ if $M_{ij}=0$.
    \item Given an input vector $u$, for each $i\in[n]$, remove two arbitrary edges from the vector gadget $U_i$ if $u_i=0$.
    \item Given an input vector $v$, for each $j\in[n]$, remove two arbitrary edges from the vector gadget $V_j$ if $v_j=0$.
\end{itemize}

\subsubsection{Densest subgraphs in the graph}
\label{sssec:dense_density}

We use the following simple lemma in our density proof.
\begin{restatable}{lemma}{ratioed}\cite{HenzingerKNS15}
\label{lem:ratioed}
For all numbers $a, b, c, d,$ and $r$, we have:
\begin{enumerate}
    \item If $\frac{a}{b} \ge r$ and $\frac{c}{d} \ge r$, then $\frac{a+c}{b+d} \ge r$.
    \item If $\frac{a}{b} \ge r$ and $\frac{c}{d} \le r$, then $\frac{a-c}{b-d} \ge r$.
\end{enumerate}
\end{restatable}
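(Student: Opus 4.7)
The lemma is a purely arithmetic statement about ratios, so the plan is simply to clear denominators and recombine. Throughout I assume the denominators $b, d, b+d, b-d$ are positive (which is the regime in which the lemma is applied in the paper, since all quantities represent edge/node counts of nonempty subgraphs); without such a sign assumption multiplying across an inequality is not valid, and any proof must invoke it.

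For part (1), I would rewrite the two hypotheses $a/b \ge r$ and $c/d \ge r$ in the cross-multiplied form $a \ge rb$ and $c \ge rd$, add these two inequalities to obtain $a + c \ge r(b + d)$, and then divide by $b+d > 0$ to conclude $\tfrac{a+c}{b+d} \ge r$. For part (2), I would start from $a \ge rb$ (from the first hypothesis) and $c \le rd$ (from the second, cross-multiplied), subtract the second from the first to get $a - c \ge rb - rd = r(b-d)$, and divide by $b-d > 0$ to obtain $\tfrac{a-c}{b-d} \ge r$.

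There is essentially no obstacle; the only thing worth stating explicitly is the sign convention on the denominators, since without it the claim as stated would be false (e.g., one could flip an inequality by multiplying by a negative number). In the paper's use case the ratios are densities $|E(H)|/|V(H)|$, so positivity of $b$ and $d$ is automatic, and in part (2) the relevant application always has $b \ge d$ with strict inequality whenever the conclusion is non-vacuous. I would therefore add a single sentence noting that the arithmetic manipulations above are valid precisely when $b+d > 0$ in case (1) and $b - d > 0$ in case (2), which covers every invocation of the lemma later in the reduction.
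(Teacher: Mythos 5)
Your proof is correct and is the standard argument (the paper itself does not reprove this lemma but cites it from~\cite{HenzingerKNS15}, where the same cross-multiplication argument is used): clear denominators, add or subtract, and divide back. Your explicit remark that positivity of $b$, $d$, $b+d$, and $b-d$ is needed—and holds in every application, since the quantities are node counts of nonempty subgraphs with $b>d$ in part (2)—is a sensible clarification of the informal ``for all numbers'' phrasing.
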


Let $\rho$ be the density of the densest subgraph in the current graph.
\begin{restatable}{lemma}{densityLem}
\label{lem:densityLem}
If $uMv = 1$, then $\rho \ge d + \frac{1}{n^2+2n}$, and otherwise $\rho < d + \frac{1}{n^2+2n}$.
\end{restatable}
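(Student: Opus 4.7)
For the forward direction ($uMv = 1$), I exhibit a witness: pick indices $i,j$ with $u_i = M_{ij} = v_j = 1$ and set $S = U_i \cup V_j \cup M_{ij}$. Then $|S| = n^2 + 2n$ and $|E(S)| = (dn^2 - 1) + 2dn + 2 = dn^2 + 2dn + 1$, since the matrix gadget contributes $dn^2 - 1$ internal edges, the two vector gadgets contribute $2dn$, and both cross edges $(U_i[j], M_{ij}[0])$ and $(V_j[i], M_{ij}[1])$ lie in $S$. This yields density exactly $d + 1/(n^2 + 2n)$, so $\rho \ge d + 1/(n^2+2n)$.

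For the reverse direction ($uMv = 0$), let $S^*$ be a densest subgraph of density $\rho$; I will show $\rho < d + 1/(n^2 + 2n)$. First, reduce via \cref{lem:ratioed} to the case where $S^*$ is a disjoint union of full gadgets $S^* = \bigcup_{i \in I} U_i \cup \bigcup_{j \in J} V_j \cup \bigcup_{(i,j) \in K} M_{ij}$ with $K \subseteq I \times J$. The $6$-edge connectedness of the gadgets is essential here. A partial matrix gadget $T \subsetneq M_{ij}$ satisfies $e(T) \le d|T| - 3$ by the cut bound and contributes at most two incident cross edges, giving local density below $d$, so removing it cannot decrease $\rho$. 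A partial vector gadget $T \subsetneq U_i$ is completed by adding the missing $n - |T|$ nodes, which brings in at least $d(n-|T|) + 1$ more edges of $U_i$; this has local density at least $d + 1/n$, which exceeds $\rho$ in the regime of interest. Finally, a matrix gadget whose associated vector gadgets are both absent has local density at most $d$ and can be removed.

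A direct edge count then yields
\[
\rho - d \;=\; \frac{|K| - 2a - 2b - c}{n(|I|+|J|) + n^2|K|},
\]
where $a = |\{i \in I : u_i = 0\}|$, $b = |\{j \in J : v_j = 0\}|$, and $c = |\{(i,j) \in K : M_{ij} = 0\}|$; the positive $|K|$ in the numerator reflects that each matrix gadget brings in two cross edges but intrinsically misses one edge. The hypothesis $uMv = 0$ forces, for each $(i,j) \in K$, at least one of $u_i, v_j, M_{ij}$ to vanish, yielding the key covering inequality $|K| \le a|J| + b|I| + c$. The target inequality $(|K| - 2a - 2b - c)(n^2 + 2n) < n(|I|+|J|) + n^2|K|$ is equivalent, after dividing by $n$, to $2|K| - (2a+2b+c)(n+2) < |I| + |J|$. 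Substituting the covering inequality and using $|I|, |J| \le n$, the left-hand side is bounded by $2a(|J| - n - 2) + 2b(|I| - n - 2) - cn \le -4a - 4b - cn \le 0$, which is strictly less than $|I| + |J|$ for any non-empty $S^*$, yielding $\rho - d < 1/(n^2 + 2n)$.

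The main obstacle is the reduction to full gadgets; while partial matrix gadgets are straightforward to remove, handling partial vector gadgets with many active cross edges requires carefully combining removal and completion moves, where the $6$-edge connectedness of the gadgets is used essentially.
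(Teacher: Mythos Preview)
Your approach is essentially the paper's: exhibit the witness $U_i \cup M_{ij} \cup V_j$ for the forward direction, and for the reverse, reduce $S^*$ to a union of full gadgets via \cref{lem:ratioed} and then derive a counting contradiction from $uMv=0$. The paper's bookkeeping differs only in that it removes all matrix gadgets with $M_{ij}=0$ up front (so your parameter $c$ disappears) and uses the simpler bound $x \le nz$ in place of your covering inequality $|K|\le a|J|+b|I|+c$; the skeleton is identical, and your edge-count and final arithmetic are correct.

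There is, however, one genuine slip in the reduction. You remove a full matrix gadget only when \emph{both} associated vector gadgets are absent, yet both your density formula (which charges $2|K|$ cross edges) and, more critically, your covering inequality rely on $K\subseteq I\times J$. If, say, $i\in I$ but $j\notin J$, the pair $(i,j)$ contributes only one cross edge, and if it is covered by $u_i=0$ there is no reason the number of such pairs is bounded by $a|J|$ (indeed, take $I=\{1\}$, $J=\emptyset$, $K=\{(1,1),\ldots,(1,n)\}$, $u_1=0$, all $M_{1j}=1$: then $|K|=n$ but $a|J|+b|I|+c=0$). The fix is immediate: a full matrix gadget with \emph{at most one} adjacent vector gadget present has at most $(dn^2-1)+1=dn^2$ incident edges in $G[S^*]$ over $n^2$ nodes, hence local density at most $d$, and can be removed by part~2 of \cref{lem:ratioed}. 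Extending your third reduction step to this case secures $K\subseteq I\times J$, after which the rest of your argument goes through verbatim.
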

\begin{proof}
($\implies$) First assume that $uMv = 1$. Then there are indices $i, j$ such that $u_i = M_{ij} = v_j = 1$. Consider the subgraph $S = U_i \cup M_{ij} \cup V_j$. It consists of $n^2 + 2n$ nodes and $d(n^2+2n) + 1$ edges. Thus $\rho(S) = d + \frac{1}{n^2+2n}$.

($\impliedby$) Now assume that $uMv = 0$, and that there exists some subset $S \subset V$ with $\rho(S) \ge d + \frac{1}{n^2+2n}$. We first claim that we can modify $S$ in a particular manner without loss of generality, and then derive a contradiction. Specifically, first we  remove from $S$ all subgraphs $M_{ij}$ that are not completely contained in $S$.

\begin{claim}
Let $T$ be a subgraph of a matrix gadget $M_{ij}$ that is not the whole gadget.
Then, after removing $T$ from $S$, it still holds that
$\rho(S) \ge d + \frac{1}{n^2+2n}$.
\end{claim}
\begin{claimproof}
Recall that every node in $M_{ij}$ has degree at most $2d$. Let $q = |T| < n^2$. It follows that there are at most $q d$ edges incident to a node
of $T$ in $G[S]$: If neither $M_{ij}[0]$ nor $M_{ij}[1]$ belong to $T$ then there are most $q d -1$ edges incident to at least on node
of $T$ in $G[S]$ (all being edges between nodes of $T$), as at least one
node of $T$ must have an edge to a node to the rest of $M_{ij}$ that does not belong to $S$. If either $M_{ij}[0]$ or $M_{ij}[1]$, but not both belong to $T$ then there are at most $q d-1$ edges incident to two nodes of $T$ and there is one edge between $T$ and the rest of $S$.
If both $M_{ij}[0]$ and $M_{ij}[1]$ belong to $T$ then there are at most $q d -2$ edges incident to two nodes of $T$ (as there is no edge between
$M_{ij}[0]$ and $M_{ij}[1]$) and there are two edges between $T$ and the rest of $S$.

Thus the removal of $T$ from $S$ removes at most $qd$ edges and $q$ nodes and, hence by Part~2 of~\Cref{lem:ratioed}, $S\setminus T$ has density $\ge d + \frac{1}{n^2+2n}$.
\end{claimproof}

Next we remove all subgraphs $M_{ij}$ such that $M_{ij} = 0$.

\begin{claim}
For all $i, j$, if the bit $M_{ij} = 0$ and $M_{ij}$ is fully contained in $S$, then after removing the corresponding subgraph
from $S$, we still have
$\rho(S)\geq d + \frac{1}{n^2+2n}$.
\end{claim}
\begin{claimproof}
As $M_{ij} = 0$, there are $d n^2$ edges incident to at least one node of $M_{ij}$.
Thus,
 removing $M_{ij}$ from $S$ removes $n^2$ nodes and at most $dn^2$ edges from $G[S]$,
i.e.,~a subgraph of density at most $d$. By Part 2 of~\Cref{lem:ratioed}, $S\setminus M_{ij}$ has density $\ge d + \frac{1}{n^2+2n}$.
\end{claimproof}

Finally we add to $S$ every vector subgraph that is partially contained in $S$.
\begin{claim}
For all $i, j$, after we add any partially contained subgraph $U_i$ or $V_j$
to  $S$ it still holds that
$\rho(S) \ge d + \frac{1}{n^2+2n}$.
\end{claim}
\begin{claimproof}
We only prove our claim for the gadget $U_i$, since the proof for the gadget $V_j$ is analogous. Suppose some subset $U \subset U_i$ is contained in $S$ with $|U| = q<n$.
We will show that adding $A := U_i \setminus U$ to $S$ adds at least $d(n-q)+1$ edges. As $\frac{d(n-q) + 1}{n-q} >  d + \frac{1}{n^2+2n}$, the claim follows from Part 1 of~\Cref{lem:ratioed}.

We are left with showing that adding $A$ to $S$ adds at least $d(n-q)+1$ edges.
Recall that $U_i$ is a $2d$-regular 6-edge connected
graph with $n$ nodes from which 2 edges where removed if $u_i = 0$ and no edges where removed if $u_i = 1$.
We consider the following cases:

\emph{Case 1:} $u_i = 1$ or no removed edge is incident to a node in $A$. In this case every node in $A$ has degree  $2d$ in $G[U_i]$.
Let $a$ be the number of edges between $U$ and $A$, and note that $a\geq 6$ since $U_i$ is 6-edge connected.
Thus, the sum of the degrees of the nodes in $A$ is $G[A] = 2d(n-q) - a$. It follows that there are $d(n-q) - a/2$ edges between  nodes of $A$.
Thus, the total number of edges incident to nodes of $A$ in $G[U_i]$ is $d(n-q) + a/2 \ge d(n-q)+3$ and this is a lower bound of the number of edges that are added to $S$ when $A$ is added.

\emph{Case 2:} $u_i = 0$ and $b$ removed edges belong to
$(A,U)$ and $c$ removed edges have both endpoints in $A$ with $b + c \ge 1$. As only 2 edges are removed, $b + c \le 2$. As $U_i$ is 6-edge connected, let $a \ge 6 - b$ be the number of edges between $U$ and $A$.
In this case the sum of the degrees of the nodes of $A$ in $G[A]$
is $2d(n-q) - (a + b) - 2c$. Thus, the total number of
edges incident to nodes of $A$ in $G[A]$ is $d(n-q) - (a+b)/2 - c$. Hence, the total number of
edges incident to nodes of $A$ in $G[U_i]$ is $d(n-q) - (a+b)/2 - c + a = d (n-q) + (a-b)/2 - c.$ As $a \ge 6 - b$
this is at least $d (n-q) + 3 - b -c$. Since $b + c \le 2$,
it follows that at least $d(n-q) +1$ edges are added to $S$ when $A$ is added.
\end{claimproof}

Thus $S$ has the following structure: It has some matrix gadgets of set bits $M_{ij}$ with all the nodes and both outgoing edges present, and it has some vector gadgets of set or unset bits with all nodes present as well. Let $x$ denote the number of matrix gadgets of set bits contained in $S$, $y$ denote the number of vector gadgets of set bits in $S$, and $z$ denote the number of vector gadgets of unset bits in $S$.
We now consider two cases:

\emph{Case 1:} $y + z = 0$. Then $S$ consists only of $x$ matrix gadgets and no vector gadgets. Thus $\rho(S) = \frac{x \cdot (d n^2 -1)}{n^2x} < d$, which is a contradiction

\emph{Case 2:}  $y + z > 0$.
Then the density of $S$ is given by \[
\rho(S) = \frac{x\cdot (dn^2+1) + y\cdot dn + z \cdot (dn - 2)}{n^2x+ny+nz}
\] We claimed that $\rho(S) \ge d + \frac{1}{n^2+2n}$, which is the same \[
2nx \ge ny + (2n^2+5n) \cdot z
\] Since $uMv = 0$, for every matrix gadget $M_{ij}$ of a set bit, either the corresponding $U_i$ or $V_j$ must be unset. Thus we assign at most $n$ matrix gadgets to each unset vector gadget, giving us that $x \le nz$, giving \[
2n^2 z \ge 2nx \ge ny + (2n^2 + 5n) \cdot z
,\] which is a contradiction as desired since $y + z > 0$.
\end{proof}

\subsubsection{Complexity of the reduction}
\label{sssec:dense_complexity}
We are now ready to prove the theorem.

\denseConst*
\begin{proof}
Consider the reduction graph above with $d=3$. It consists of $N = n^4 + 2n^2 = \Theta(n^4)$ nodes. Every time we get a new $(u,v)$ input vector pair, we reinsert all the removed edges in the vector gadgets $U_i, V_i$, for all $1 \le i \le n$, and then delete edges according to the new input vectors. This takes $O(n)$ updates in total. After that, we query once for the density of the densest subgraph in this new graph, and return $1$ if and only if $\rho \ge 3 + \frac{1}{n^2+2n}$.

Thus for each pair of input vectors, we perform $O(n)$ updates and $O(1)$ query. In total, checking $n$ pairs takes us $O(n^2)$ updates and $O(n)$ query. If there were an algorithm for maintaining the density of the densest subgraph on constant-degree graphs with update time \ub{1/4} (i.e., $O(n^{1-4\epsilon})$) and query time \ub{1/2} (i.e., $O(n^{2-4\epsilon})$), then we can decide if $uMv=1$ for all $n$ pairs in $O(n^{3-4\epsilon})$ time, contradicting the \omv{} conjecture.
\end{proof}

\subsection{Expander Graph}
\label{ssec:dense_expand}

The extension required to make the reduction hold even for expanders for the densest subgraph problem is simpler than the corresponding extensions for the previous problems.
Pick $d$ such that there is a $d'$-regular expander for some $d'\leq d-2$.
\label{sssec:dense_expand_static}
The static graph is constructed as follows.
\begin{itemize}
	\item The reduction graph $G_0$ from \Cref{sssec:dense_static} as a subgraph together with the input-dependent edges from \Cref{sssec:dense_input}.

	\item A $(d')$-regular expander graph $G_1$ on $n^4+2n^2$ nodes,
	for some $d'\leq d-2$,
	with constant expansion $h_0$, together with a bijection $\pi$ from the nodes of $G_0$ to the $G_1$.

    \item An edge between every node $v$ of $G_0$ and the node $\pi(v)$ of $G_1$, giving a perfect matching  connecting the nodes of $G_0$ with these of $G_1$.
\end{itemize}

The total number of nodes in the reduction graph is $N = 2n^4+4n^2 = \Theta(n^4)$.

The density arguments in this reduction graph are similar to the one in the constant-degree case, as in \Cref{lem:densityLem}. The proof follows from the following fact: for any subset $T$ with density $> d$, removing all of the nodes of $G_1$ from $T$ cannot decrease the density of $T$, since each node in $G_1$ has degree at most $d-1$ in $G[T]$. Thus, in the proof of \Cref{lem:densityLem} in the setting where $uMv = 0$,  we add a first step removing all the nodes of $G_1$ from the subset $S$ and then proceed as before. This leads to the same lower bounds as in the constant-degree case. We have the following theorem for expander graphs:

\denseExpand*

To verify that the graph is indeed an $h_1$-expander graph for some constant $h_1>0$, we follow a similar pattern as in the proof of expansion for maximum matching in~\Cref{ssec:match_expand_proof}.
Consider a reduction graph with $ V =U \cup U'$, where $U'$ is the node set of the expander $G_1$. Let $S$ be some set of nodes (we do not bound $|S|$). We have the following observations.
\begin{observation}
	\label{obs:drein_medium}
	Let $0<c<1/2$, that may depend on $n$.
	If $c|U'|\leq|S\cap U'|\leq (1-c)|U'|$, then
	$E(S,\bar S)|\geq ch_0|U'|$.
\end{observation}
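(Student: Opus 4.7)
The plan is to prove this observation in essentially the same way as \Cref{obs:strein_medium} in \Cref{sssec:st_expand_proof}, since the statement and the underlying object (a constant-degree expander on $U'$ with expansion parameter $h_0$) are identical. The key point is that the edges of $G_1$ are all contained in the reduction graph, so any cut within $U'$ contributes to the global cut $E(S,\bar S)$.

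First I would split into two cases depending on whether $|S\cap U'|$ is at most or at least $|U'|/2$. In the first case, set $T = S\cap U'$; since $|T|\leq|U'|/2$, the expansion property of $G_1$ applied to $T$ inside $U'$ gives $|E(T,U'\setminus T)|\geq h_0|T|$. Because edges from $T\subseteq S$ to $U'\setminus T\subseteq\bar S$ are edges from $S$ to $\bar S$ in the full reduction graph, we get
\[
|E(S,\bar S)|\;\geq\;|E(T,U'\setminus T)|\;\geq\;h_0|S\cap U'|\;\geq\;ch_0|U'|,
\]
where the last inequality uses the hypothesis $|S\cap U'|\geq c|U'|$.

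In the second case, namely $|S\cap U'|>|U'|/2$, apply the symmetric argument with $T=U'\setminus S$. Then $|T|\leq|U'|/2$, and by the hypothesis $|S\cap U'|\leq (1-c)|U'|$ we have $|T|\geq c|U'|$. Expansion of $G_1$ gives $|E(T,U'\setminus T)|\geq h_0|T|\geq ch_0|U'|$, and as before these edges lie in $E(S,\bar S)$.

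There is no real obstacle here: the proof is just a direct unpacking of the expander definition together with the observation that $G_1$ is a subgraph of the reduction graph, so cuts inside $U'$ lower-bound cuts in the full graph. The only minor care needed is choosing which side of $U'$ to apply the expansion bound to, which is dictated by the $|U'|/2$ threshold in the definition of $h_0$-expansion.
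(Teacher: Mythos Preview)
Your proposal is correct and follows exactly the same approach as the paper: split on whether $|S\cap U'|\le|U'|/2$, apply the $h_0$-expansion of $G_1$ to the smaller side, and use the hypothesis $c|U'|\le|S\cap U'|\le(1-c)|U'|$ to get the $ch_0|U'|$ bound. The paper's proof is just a slightly terser version of what you wrote.
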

\begin{claimproof}
	Consider first the case $|S\cap U'|\leq |U'|/2$, in which the expander graph edges on $U'$ guarantee $|E(S,\bar S)\cap U'|\geq h_0|S|$, implying
	$|E(S,\bar S)|\geq|E(S,\bar S)\cap U'|\geq h_0|S|\geq ch_0|U'|$.
	Otherwise, apply the analogous argument on the set $U'\setminus S$.
\end{claimproof}

\begin{observation}
	\label{obs:drein_moreint}
	$|E(S,\bar S)|\geq |U\cap S|-|U'\cap S|$.
\end{observation}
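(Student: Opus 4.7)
The plan is to exploit the perfect matching between $U$ and $U'$ provided by the bijection $\pi$, which is the only structural feature of the construction that directly relates nodes in $U$ to nodes in $U'$. Every node $v\in U$ has exactly one matching edge, going to $\pi(v)\in U'$, and these matching edges are pairwise disjoint because $\pi$ is a bijection.

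First I would partition the set $U\cap S$ into two parts according to whether the matching partner lies inside or outside $S$: let $A=\{v\in U\cap S:\pi(v)\in S\}$ and $B=\{v\in U\cap S:\pi(v)\in \bar S\}$, so $|U\cap S|=|A|+|B|$. Each $v\in B$ contributes the edge $(v,\pi(v))$ to $E(S,\bar S)$, and these edges are distinct since $\pi$ is injective, hence $|E(S,\bar S)|\geq |B|$. Next I would bound $|A|$: since $\pi$ restricted to $A$ is an injection into $U'\cap S$, we have $|A|\leq |U'\cap S|$.

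Putting the two bounds together yields
\[
|E(S,\bar S)|\;\geq\; |B|\;=\;|U\cap S|-|A|\;\geq\;|U\cap S|-|U'\cap S|,
\]
which is exactly the claimed inequality. Note that if the right-hand side is negative, the statement is vacuous since $|E(S,\bar S)|\geq 0$, so no case distinction is strictly necessary.

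There is essentially no obstacle here: the only subtlety is recognising that one must use the matching edges (rather than the expander edges of $G_1$ or the internal structure of $G_0$), and that the bijectivity of $\pi$ is precisely what bounds the ``wasted'' matching endpoints by $|U'\cap S|$. This observation plays the same role in the densest-subgraph expansion analysis as Observation~\ref{obs:strein_moreint} did for the reinforced forest, but with a cleaner proof because the matching gives a direct one-to-one correspondence rather than the two-children relation used there.
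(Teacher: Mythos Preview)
Your proof is correct and takes essentially the same approach as the paper: both use the perfect matching given by $\pi$ and observe that of the $|U\cap S|$ matching edges leaving $U\cap S$, at most $|U'\cap S|$ can land inside $S$, so at least $|U\cap S|-|U'\cap S|$ cross the cut. Your partition into $A$ and $B$ just makes this counting more explicit than the paper's one-line version.
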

\begin{claimproof}
	Every node in $U\cap S$ has a matching partner (by the bijection used to map the nodes of $U$ to $U'$) in $U'$.
	In total, each of these nodes has $|U\cap S|$ edges going to their matching partners, of which at most $|U'\cap S|$ partners are in $S$.
	Hence, $|E(S,\bar S)|\geq |U\cap S|-|U'\cap S|$.
\end{claimproof}

\begin{observation}
	\label{obs:drein_lessint}
	Let $0<c<1/2$ constant.
	If $|U\cap S|\leq c|U'\cap S|$
	then $|E(S,\bar S)|\geq (1-c)|U'\cap S|$.
\end{observation}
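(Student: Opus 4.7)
The plan is to mimic the argument used for Observation~\ref{obs:drein_moreint}, but now counting from the $U'$ side of the matching rather than the $U$ side. Concretely, I will use the fact that the bijection $\pi$ induces a perfect matching between $U$ and $U'$, so that every node $w \in U' \cap S$ has exactly one matching partner $\pi^{-1}(w) \in U$.

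The key step is to partition the matching edges incident to $U' \cap S$ according to whether their $U$-endpoint lies in $S$ or in $\bar S$. Each node of $U' \cap S$ contributes exactly one such matching edge. The number of matching edges with both endpoints in $S$ is at most $|U \cap S|$, because each matching edge that stays inside $S$ is charged to a distinct node of $U \cap S$. Hence the number of matching edges crossing from $U' \cap S$ to $\bar S$ is at least
\[
|U' \cap S| - |U \cap S| \;\ge\; |U' \cap S| - c\,|U' \cap S| \;=\; (1-c)\,|U' \cap S|,
\]
where the inequality uses the hypothesis $|U \cap S| \le c\,|U' \cap S|$. Since all of these edges contribute to $E(S,\bar S)$, this gives the claimed bound.

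There is no real obstacle here: once the matching is identified as the right structure to exploit, the counting is one line. I would simply write the two-line proof as a \texttt{claimproof} block, parallel in style to the proofs of Observations~\ref{obs:drein_medium} and~\ref{obs:drein_moreint}, emphasizing that we are merely taking the contrapositive viewpoint of Observation~\ref{obs:drein_moreint} (which bounds crossing edges by $|U\cap S|-|U'\cap S|$ when $U$ is the larger side) adapted to the regime where $U'\cap S$ is the larger side.
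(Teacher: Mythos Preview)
Your proposal is correct and takes essentially the same approach as the paper: both count the matching edges from $U'\cap S$ to their partners in $U$, observe that at most $|U\cap S|\le c|U'\cap S|$ of these partners lie in $S$, and conclude that at least $(1-c)|U'\cap S|$ matching edges cross into $U\cap\bar S\subseteq\bar S$.
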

\begin{claimproof}
	If $|U\cap S|\leq c|U'\cap S|$, consider the matching partners of the nodes in $U'\cap S$: there are at least $|U'\cap S|$ such partners in $U$, of which at most $c|U\cap S|$ are in $S$, and the others are in
	$U\cap \bar S$. Hence
	$|E(U\cap \bar S,U'\cap S)|\geq (1-c)|U'\cap S|$.
\end{claimproof}

\begin{lemma}
The reduction graph is an expander.
\end{lemma}
\begin{proof}
Now fix a set $S \subseteq V$ of nodes, with $|S|\leq n^4 + 2n^2 = (|U\cup U'|)/2$.
We consider different ranges of $|U'\cap S|$, and show that for each of them, $|E(S,\bar S)|\geq h_1|S|$ for some constant $h_1$.
\begin{enumerate}
    \item
    If $|U'\cap S|>0.9|U'|$, the inequalities $|U|\le |U'|$ and $|S|\leq|U\cup U'|/2$ imply $|U\cap S|\le 0.1|U'|$, and hence
    $|U'\cap S|>(1/9)|U\cap S|>0.1|U\cap S|$.
    Observation~\ref{obs:drein_lessint} implies $|E(S,\bar S)|\geq 0.9|U'\cap S|\geq 0.81|U'|> 0.4|U\cup U'|\geq 0.8|S|$.

    \item
    If $0.1|U'|\leq |U'\cap S|\leq 0.9|U'|$, then by Observation~\ref{obs:drein_medium} we have
    $|E(S,\bar S)|\geq 0.1h_0|U'|$.
    Since $|U|\le |U'|$ and $|S|\leq|U\cup U'|/2$, the above implies
    $|E(S,\bar S)|\geq 0.05h_0|U\cup U'|\geq 0.1 h_0|S|$.

    \item
    If $|U'\cap S|<0.1|U'|$, then
    \begin{enumerate}
        \item
        If $|U\cap S|>2|U'\cap S|$,
        add
        $0.5|U\cap S|-1.5|U'\cap S|$ to both sides of the inequality and multiply by $2/3$ to get
        $|U\cap S|-|U'\cap S|>1/3(|U\cap S|+|U'\cap S|)=|S|/3$.
        By Observation~\ref{obs:drein_moreint},
        $|E(S,\bar S)|\geq |U\cap S|-|U'\cap S|>|S|/3$.

        \item
        If $|U\cap S|\leq 2|U'\cap S|$
        then
        $|S|=|U\cap S|+|U'\cap S|\leq 3|U'\cap S|$.
        The expander edges on $U'$ guarantee that
        $|E(S,\bar S)|\geq h_0|U'\cap S|\geq(h_0/3)|S|$.
        \qedhere
    \end{enumerate}
\end{enumerate}
\end{proof}

\subsection{Power-law Graph}
\label{ssec:dense_powerlaw}

Let $d = 3$. We show our densest subgraph lower bounds for $\beta > 2.74$. We use a reduction graph similar to the one in the constant-degree case for our reduction. Our reduction graph consists of the vector and matrix gadgets as before, along with new nodes we introduce to moderate the degree. We add a cycle $C_i^u$ on four new nodes for each vector gadget, and a cycle $C_{ij}$ on four new nodes for each matrix gadget.

\begin{itemize}
	\item $2n$ vector gadgets labelled $U_i$ resp.~$V_i$, for $1 \le i \le n$. The nodes in each gadget are labelled $U_i[j]$ resp.~$V_i[j]$, for $1 \le j \le n$.

    \item $2n$ cycles $C_i^u$ and $C_j^v$ on $4$ nodes each, with the nodes labelled $C_i^w[x]$ for $w \in \{u, v\}$, $x \in \{a, b, c, d\}$

    \item $n^2$ matrix gadgets labelled $M_{ij}$, for $1 \le i, j \le n$. The missing edge in the gadget $M_{ij}$ is between the two nodes labelled $M_{ij}[0]$ and $M_{ij}[1]$.

    \item $n^2$ cycles $C_{ij}$ on four nodes $C_{ij}[x]$ for $x \in \{a,b,c,d\}$ each.

    \item An edge from $U_i[j]$ to $M_{ij}[0]$, and an edge from $V_j[i]$ to $M_{ij}[1]$ for all $1 \le i,j \le n$.

    \item If $M_{ij} = 0$, remove two arbitrary edges, say, $(M_{ij}[a], M_{ij}[b]), (M_{ij}[c], M_{ij}[d])$, and the edges $(C_{ij}[a], C_{ij}[b]), (C_{ij}[c], C_{ij}[d])$, and add the edges $(M_{ij}[x], C_{ij}[x])$ for $x \in \{a, b, c, d\}$ to the graph.

    \item Given an input vector $u$, for each $i\in[n]$, do the following if $u_i = 0$. Remove two arbitrary edges from the vector gadget $U_i$, say $(U_i[a], U_i[b]), (U_i[c], U_i[d])$, and also remove the edges $(C_i^u[a], C_i^u[b]), (C_i^u[c], C_i^u[d])$. Add the edges $(U_i[x], C_i^u[x])$ for $x \in \{ a, b, c, d\}$. Similarly for an input vector $v$.
\end{itemize}

Note that each node in a vector gadget always has degree $2d+1$, and each node in a matrix gadget has degree $2d$, regardless of input. Further, the degrees of the nodes in all the cycles are exactly $2$ for all inputs. The degree distribution of the reduction graph is $(2, 4n^2+8n), (2d, n^4), (2d+1, 2n^2) $. Choose $N$ such that \[
N > \zeta(\beta) \cdot \max \{ (4n^2 + 8n) \cdot 2^{\beta}, (n^4) \cdot (2d)^{\beta}, (2n^2) \cdot (2d+1)^{\beta} \}
\]
In a power-law graph with $N$ nodes, the sum of degrees of all nodes is given by $\zeta(\beta - 1) \cdot N / \zeta(\beta)$, while the number of nodes of degree $1$ is $N/\zeta(\beta)$. Thus for all $\beta$ such that $\zeta(\beta-1) < 2$, more than half the total degree comes from nodes of degree $1$, since
\begin{equation}
\label{eqn:dense_pl}
N'_1 = \frac{N}{\zeta(\beta)} > \frac{\zeta(\beta-1)}{2} \cdot \frac{N}{\zeta(\beta)} = \frac{\sum_{v \in V} \deg(v)}{2}
\end{equation}
This is true in particular for all $\beta > 2.74$. Thus now we first construct the reduction graph as above on $n^4 + 6n^2 + 8n$ nodes, and this does not exceed $N'_d$ for any $d$ because of the definition of $N$. Now for every other degree that needs to be satisfied of degree $d'> 1$, we simply add a new node and create a star with $d'$ nodes of degree $1$ attached to it. Note that we can do this for all the remaining nodes because of \Cref{eqn:dense_pl}. If there are any remaining degree $1$ requirements to be satisfied, we simply add a perfect matching on that many nodes.

None of the stars or the perfect matchings can be part of a subgraph of density $>d$, since they all have density $< 1$. Thus any subgraph of density $> d$ must be from our reduction graph. Further, note that the cycle gadgets can be removed from any subgraph of density $> d$, since they have degree exactly $2$. Thus $\rho \ge d + \frac{1}{n^2+2n}$ if and only if $uMv = 1$, which proves our claim.

\bibliographystyle{plainurl}
\bibliography{References}

\begin{thebibliography}{10}

\bibitem{AbboudD16}
Amir Abboud and S{\o}ren Dahlgaard.
\newblock Popular conjectures as a barrier for dynamic planar graph algorithms.
\newblock In {\em {FOCS}}, pages 477--486. {IEEE} Computer Society, 2016.

\bibitem{Abraham2016}
Ittai Abraham, Shiri Chechik, Daniel Delling, Andrew~V. Goldberg, and Renato~F.
  Werneck.
\newblock On dynamic approximate shortest paths for planar graphs with
  worst-case costs.
\newblock In Robert Krauthgamer, editor, {\em Proceedings of the Twenty-Seventh
  Annual ACM-SIAM Symposium on Discrete Algorithms, SODA 2016, Arlington, VA,
  USA, January 10-12, 2016}, pages 740--753. SIAM, 2016.
\newblock \href {https://doi.org/10.1137/1.9781611974331.ch53}
  {\path{doi:10.1137/1.9781611974331.ch53}}.

\bibitem{abraham12fully}
Ittai Abraham, Shiri Chechik, and Cyril Gavoille.
\newblock Fully dynamic approximate distance oracles for planar graphs via
  forbidden-set distance labels.
\newblock In {\em Proceedings of the Forty-Fourth Annual ACM Symposium on
  Theory of Computing}, STOC '12, page 1199–1218, New York, NY, USA, 2012.
  Association for Computing Machinery.
\newblock \href {https://doi.org/10.1145/2213977.2214084}
  {\path{doi:10.1145/2213977.2214084}}.

\bibitem{Aiello2001}
William Aiello, Fan~Chung Graham, and Linyuan Lu.
\newblock A random graph model for power law graphs.
\newblock {\em Exp. Math.}, 10(1):53--66, 2001.
\newblock \href {https://doi.org/10.1080/10586458.2001.10504428}
  {\path{doi:10.1080/10586458.2001.10504428}}.

\bibitem{Ancona2019}
Bertie Ancona.
\newblock {Conditional Lower Bounds for Graph Sensitivity Problems}.
\newblock Master's thesis, Massachusetts Institute of Technology, Cambridge,
  MA, 2019.

\bibitem{DBLP:conf/icalp/AnconaHRWW19}
Bertie Ancona, Monika Henzinger, Liam Roditty, Virginia~Vassilevska Williams,
  and Nicole Wein.
\newblock Algorithms and hardness for diameter in dynamic graphs.
\newblock In Christel Baier, Ioannis Chatzigiannakis, Paola Flocchini, and
  Stefano Leonardi, editors, {\em 46th International Colloquium on Automata,
  Languages, and Programming, {ICALP} 2019, July 9-12, 2019, Patras, Greece},
  volume 132 of {\em LIPIcs}, pages 13:1--13:14. Schloss Dagstuhl -
  Leibniz-Zentrum f{\"{u}}r Informatik, 2019.
\newblock \href {https://doi.org/10.4230/LIPIcs.ICALP.2019.13}
  {\path{doi:10.4230/LIPIcs.ICALP.2019.13}}.

\bibitem{DBLP:conf/soda/BergamaschiHGWW21}
Thiago Bergamaschi, Monika Henzinger, Maximilian~Probst Gutenberg,
  Virginia~Vassilevska Williams, and Nicole Wein.
\newblock New techniques and fine-grained hardness for dynamic near-additive
  spanners.
\newblock In D{\'{a}}niel Marx, editor, {\em Proceedings of the 2021 {ACM-SIAM}
  Symposium on Discrete Algorithms, {SODA} 2021, Virtual Conference, January 10
  - 13, 2021}, pages 1836--1855. {SIAM}, 2021.
\newblock \href {https://doi.org/10.1137/1.9781611976465.110}
  {\path{doi:10.1137/1.9781611976465.110}}.

\bibitem{CharalampopoulosK20}
Panagiotis Charalampopoulos and Adam Karczmarz.
\newblock Single-source shortest paths and strong connectivity in dynamic
  planar graphs.
\newblock In {\em {ESA}}, volume 173 of {\em LIPIcs}, pages 31:1--31:23.
  Schloss Dagstuhl - Leibniz-Zentrum f{\"{u}}r Informatik, 2020.

\bibitem{DBLP:conf/stoc/Chuzhoy21}
Julia Chuzhoy.
\newblock Decremental all-pairs shortest paths in deterministic near-linear
  time.
\newblock In Samir Khuller and Virginia~Vassilevska Williams, editors, {\em
  {STOC} '21: 53rd Annual {ACM} {SIGACT} Symposium on Theory of Computing,
  Virtual Event, Italy, June 21-25, 2021}, pages 626--639. {ACM}, 2021.
\newblock \href {https://doi.org/10.1145/3406325.3451025}
  {\path{doi:10.1145/3406325.3451025}}.

\bibitem{DBLP:conf/soda/ChuzhoyS21}
Julia Chuzhoy and Thatchaphol Saranurak.
\newblock Deterministic algorithms for decremental shortest paths via layered
  core decomposition.
\newblock In D{\'{a}}niel Marx, editor, {\em Proceedings of the 2021 {ACM-SIAM}
  Symposium on Discrete Algorithms, {SODA} 2021, Virtual Conference, January 10
  - 13, 2021}, pages 2478--2496. {SIAM}, 2021.
\newblock \href {https://doi.org/10.1137/1.9781611976465.147}
  {\path{doi:10.1137/1.9781611976465.147}}.

\bibitem{dahlgaardhardness2016}
S{\o}ren Dahlgaard.
\newblock On the hardness of partially dynamic graph problems and connections
  to diameter.
\newblock {\em CoRR}, abs/1602.06705, 2016.
\newblock URL: \url{http://arxiv.org/abs/1602.06705}, \href
  {http://arxiv.org/abs/1602.06705} {\path{arXiv:1602.06705}}.

\bibitem{DBLP:conf/esa/GoranciHP18}
Gramoz Goranci, Monika Henzinger, and Pan Peng.
\newblock Dynamic effective resistances and approximate schur complement on
  separable graphs.
\newblock In Yossi Azar, Hannah Bast, and Grzegorz Herman, editors, {\em 26th
  Annual European Symposium on Algorithms, {ESA} 2018, August 20-22, 2018,
  Helsinki, Finland}, volume 112 of {\em LIPIcs}, pages 40:1--40:15. Schloss
  Dagstuhl - Leibniz-Zentrum f{\"{u}}r Informatik, 2018.
\newblock \href {https://doi.org/10.4230/LIPIcs.ESA.2018.40}
  {\path{doi:10.4230/LIPIcs.ESA.2018.40}}.

\bibitem{DBLP:conf/soda/GoranciRST21}
Gramoz Goranci, Harald R{\"{a}}cke, Thatchaphol Saranurak, and Zihan Tan.
\newblock The expander hierarchy and its applications to dynamic graph
  algorithms.
\newblock In D{\'{a}}niel Marx, editor, {\em Proceedings of the 2021 {ACM-SIAM}
  Symposium on Discrete Algorithms, {SODA} 2021, Virtual Conference, January 10
  - 13, 2021}, pages 2212--2228. {SIAM}, 2021.
\newblock \href {https://doi.org/10.1137/1.9781611976465.132}
  {\path{doi:10.1137/1.9781611976465.132}}.

\bibitem{graham2004large}
Fan~Chung Graham.
\newblock Large dynamic graphs: What can researchers learn from them?
\newblock {\em SIAM News}, 37(3), 2004.

\bibitem{GuptaPeng13}
Manoj Gupta and Richard Peng.
\newblock Fully dynamic (1+ $\epsilon$)-approximate matchings.
\newblock In {\em 2013 IEEE 54th Annual Symposium on Foundations of Computer
  Science}, pages 548--557, 2013.
\newblock \href {https://doi.org/10.1109/FOCS.2013.65}
  {\path{doi:10.1109/FOCS.2013.65}}.

\bibitem{HenzingerKNS15}
Monika Henzinger, Sebastian Krinninger, Danupon Nanongkai, and Thatchaphol
  Saranurak.
\newblock Unifying and strengthening hardness for dynamic problems via the
  online matrix-vector multiplication conjecture.
\newblock In {\em {STOC}}, pages 21--30. {ACM}, 2015.

\bibitem{henzinger2022complexity}
Monika Henzinger, Andrea Lincoln, and Barna Saha.
\newblock The complexity of average-case dynamic subgraph counting.
\newblock In {\em Proceedings of the 2022 Annual ACM-SIAM Symposium on Discrete
  Algorithms (SODA)}, pages 459--498. SIAM, 2022.

\bibitem{DBLP:journals/jcss/HenzingerKRS97}
Monika~Rauch Henzinger, Philip~N. Klein, Satish Rao, and Sairam Subramanian.
\newblock Faster shortest-path algorithms for planar graphs.
\newblock {\em J. Comput. Syst. Sci.}, 55(1):3--23, 1997.
\newblock \href {https://doi.org/10.1006/jcss.1997.1493}
  {\path{doi:10.1006/jcss.1997.1493}}.

\bibitem{DBLP:conf/stoc/ItalianoKLS17}
Giuseppe~F. Italiano, Adam Karczmarz, Jakub Lacki, and Piotr Sankowski.
\newblock Decremental single-source reachability in planar digraphs.
\newblock In Hamed Hatami, Pierre McKenzie, and Valerie King, editors, {\em
  Proceedings of the 49th Annual {ACM} {SIGACT} Symposium on Theory of
  Computing, {STOC} 2017, Montreal, QC, Canada, June 19-23, 2017}, pages
  1108--1121. {ACM}, 2017.
\newblock \href {https://doi.org/10.1145/3055399.3055480}
  {\path{doi:10.1145/3055399.3055480}}.

\bibitem{DBLP:conf/stoc/ItalianoNSW11}
Giuseppe~F. Italiano, Yahav Nussbaum, Piotr Sankowski, and Christian
  Wulff{-}Nilsen.
\newblock Improved algorithms for min cut and max flow in undirected planar
  graphs.
\newblock In Lance Fortnow and Salil~P. Vadhan, editors, {\em Proceedings of
  the 43rd {ACM} Symposium on Theory of Computing, {STOC} 2011, San Jose, CA,
  USA, 6-8 June 2011}, pages 313--322. {ACM}, 2011.
\newblock \href {https://doi.org/10.1145/1993636.1993679}
  {\path{doi:10.1145/1993636.1993679}}.

\bibitem{DBLP:journals/corr/abs-2004-07650}
Wenyu Jin and Xiaorui Sun.
\newblock Fully dynamic c-edge connectivity in subpolynomial time.
\newblock {\em CoRR}, abs/2004.07650, 2020.
\newblock URL: \url{https://arxiv.org/abs/2004.07650}, \href
  {http://arxiv.org/abs/2004.07650} {\path{arXiv:2004.07650}}.

\bibitem{klein1998fully}
P.~N. Klein and S.~Subramanian.
\newblock A fully dynamic approximation scheme for shortest paths in planar
  graphs.
\newblock {\em Algorithmica}, 22(3):235--249, Nov 1998.
\newblock \href {https://doi.org/10.1007/PL00009223}
  {\path{doi:10.1007/PL00009223}}.

\bibitem{Peleg2016}
David Peleg and Shay Solomon.
\newblock Dynamic (1+$\epsilon$)-approximate matchings: A density-sensitive
  approach.
\newblock In {\em Proceedings of the twenty-seventh annual ACM-SIAM symposium
  on Discrete algorithms}, pages 712--729. SIAM, 2016.

\bibitem{sawlani20near}
Saurabh Sawlani and Junxing Wang.
\newblock Near-optimal fully dynamic densest subgraph.
\newblock In {\em Proceedings of the 52nd Annual ACM SIGACT Symposium on Theory
  of Computing}, STOC 2020, page 181–193, New York, NY, USA, 2020.
  Association for Computing Machinery.
\newblock \href {https://doi.org/10.1145/3357713.3384327}
  {\path{doi:10.1145/3357713.3384327}}.

\bibitem{DBLP:conf/esa/Subramanian93}
Sairam Subramanian.
\newblock A fully dynamic data structure for reachability in planar digraphs.
\newblock In Thomas Lengauer, editor, {\em Algorithms - {ESA} '93, First Annual
  European Symposium, Bad Honnef, Germany, September 30 - October 2, 1993,
  Proceedings}, volume 726 of {\em Lecture Notes in Computer Science}, pages
  372--383. Springer, 1993.
\newblock \href {https://doi.org/10.1007/3-540-57273-2\_72}
  {\path{doi:10.1007/3-540-57273-2\_72}}.

\end{thebibliography}

\appendix

\newpage

\section{Related Work}
\label{sec:Related_Work}
We describe further related work in this section.

\emph{Planar graphs.}
Prior work  for fully dynamic shortest paths in planar graphs is summarized in
 Table~\ref{tab:planar_lbs} and Table~\ref{tab:planar_ubs}. There is one further lower bound result in dynamic planar graphs, namely for bipartite maximum weigthed matching, showing a tradeoff of
 $\max \{u, q \} = \Omega(N^{1/2-\epsilon})$, where $u$ denotes the update time and $q$ the query time~\cite{AbboudD16}. The  planar graphs used in these lower bound constructions all have constant degree.

 There exists also further work on upper bounds in planar graphs.
 Italiano et al.~~\cite{DBLP:conf/stoc/ItalianoNSW11} designed a fully dynamic algorithm for maximum flow and minimum cut with $\tilde O(N^{2/3})$ update time in planar graphs. In the deletions-only setting in directed graphs Italiano et al.~\cite{DBLP:conf/stoc/ItalianoKLS17} gave an algorithm with $\tilde O(1)$ time per operation.

 \emph{Other graph classes.}
 In $\sqrt{N}$-separable
graphs Goranci et al.~\cite{DBLP:conf/esa/GoranciHP18} give almost tight upper and lower bounds for maintaining $(1+\epsilon)$-approximations of the all-pairs effective resistances.

In graphs with constant arboricity Peleg and Salomon~\cite{Peleg2016} gave $(1+\epsilon)$-approximate matching algorithm in constant time.

\emph{Insertions-only and deletions-only lower bounds.}
In general graphs Dahlgaard~\cite{dahlgaardhardness2016} presented lower bounds of $\Omega(N^{1-o(1)})$ for incremental or decremental maximum cardinality bipartite matching,
of $\Omega(m^{1-o(1)})$
for incremental or decremental maximum flow in directed and weighted sparse graphs,
and $\Omega(N^{1/2-o(1)})$
for incremental or decremental $(4/3 - \delta)$-approximating the diameter of an unweighted graph for any small constant $\delta > 0$.
These lower bounds for diameter were later improved
in~\cite{DBLP:conf/icalp/AnconaHRWW19}. Results for dynamic near-additive spanners were given
in~\cite{DBLP:conf/soda/BergamaschiHGWW21}.

\emph{Sensitivity model.}
Ancona~\cite{Ancona2019} studied diameter approximation and related problems in dynamic constant-degree graphs.
She focused on a different model than ours, called the sensitivity model, and similarly to us, proved conditional lower bounds using reductions to the \omv{} conjecture.

\begin{table}
    \renewcommand{\arraystretch}{1.2}
    \centering
    \begin{tabular}{|c|c|c|c|}
    \hline
    \textbf{Problem} & \textbf{Ref.} & \textbf{Assuming} & \textbf{LBs} \\ \hline
    APSP weighted & \cite{AbboudD16} & APSP conjecture & $u \cdot q = \Omega(N^{1-o(1)})$ \\ \hline
    APSP unit weight & \cite{AbboudD16} & \omv{} conjecture & $\max \{u^2 \cdot q, q^2 \cdot u \} = \Omega(N^{1-o(1)})$ \\ \hline
    $(s,t)$-distance, girth, diameter & \cite{AbboudD16} & \omv{} conjecture & $\max \{u, q \} = \Omega(N^{1/2-\epsilon})$ \\ \hline
    \end{tabular}
    \caption{Lower bounds for fully dynamic shortest-paths algorithms in planar graphs, where $u$ denotes the time per update and $q$ the time per query.}
    \label{tab:planar_lbs}
\end{table}

\begin{table}
    \renewcommand{\arraystretch}{1.2}
    \centering
    \begin{tabular}{|c|c|c|c|}
    \hline
    \textbf{Problem} & \textbf{Ref.} & \textbf{Update} & \textbf{Query} \\ \hline
    undirected $1+\epsilon$-approx. $(s,t)$-distance  & \cite{klein1998fully} & $\tilde O (n^{2/3})$ & $\tilde O (n^{2/3})$ \\ \hline
    undirected $1+\epsilon$-approx. $(s,t)$-distance & \cite{abraham12fully} & $\tilde O (n^{1/2})$ & $\tilde O (n^{1/2})$ \\ \hline
    undirected $(s,t)$-distance with treewidth $k$ & \cite{Abraham2016} & $O (k^3\log n)$ & $O (k^2\log n \log (k \log n))$ \\ \hline
    SSSP on weighted digraphs & \cite{CharalampopoulosK20} & $\tilde O (n^{4/5})$ & $O(\log^2 n)$ \\ \hline
    \end{tabular}
    \caption{Upper bounds for fully dynamic shortest-path algorithms in planar graphs}
    \label{tab:planar_ubs}
\end{table}

\section{Partially Dynamic Lower Bounds}
\label{sec:partial_lbs}

\subsection{\texorpdfstring{Dynamic $(s,t)$-Distance}{Dynamic (s,t)-Distance}}
\label{ssec:st_partial}

We show that our dynamic $(s,t)$-distance lower bound for constant-degree graphs also holds for partially dynamic algorithms using the following reduction graph.

\begin{itemize}
	\item A $(\log n)$-depth binary forest with a single tree. The internal nodes are marked $L_1$ and the leaves $L_2$; the root of the tree is the source node $s$, and the nodes of $L_2$ are marked as $L_2[i]$ for $1\le i \le n$.

	\item A $(\log n)$-depth binary forest with $n$ trees. The internal nodes are marked $L_3$ and the leaves $L_4$. The roots of each of the $n$ trees are marked as $L_3[i]$, for $1 \le i \le n$, and the leaves of the tree with root $L_3[i]$ are marked $L_4[i,j]$, for $1 \le j \le n$.

    \item An edge from $L_2[i]$ to $L_3[i]$ for $1 \le i \le n$.

    \item $n$ paths $P[i]$ on $n$ nodes. The nodes of the path $P[i]$ are marked $P[i,n+1-j]$, $1 \le j \le n$. Edges from $L_4[i,j]$ to $P[i,j]$ for all $1 \le i,j \le n$.

	\item A $(\log n)$-depth binary forest with $n$ trees. The internal nodes are marked $L_5$ and the leaves $L_6$. The roots of each of the $n$ trees are marked as $L_5[i]$, for $1 \le i \le n$, and the leaves of the tree with root $L_5[i]$ are marked $L_6[i,j]$, for $1 \le j \le n$.

    \item An edge from $P[i,1]$ to $L_5[i]$ for all $1 \le i \le n$.

	\item A copy of the above structure, with node sets marked $R_i$ and $Q[i]$ instead of $L_i$ and $P[i]$. The root of the single tree of $R_1$ is the target node $t$.

    \item For the matrix $M$, add the edge $(L_6[i,j],R_6[j,i])$ if $M_{ij}=1$.

	\item Deletion of edges for each input pair $(u^j, v^j)$ as detailed next.
\end{itemize}

We perform the following deletions upon the arrival of the $j^{th}$ input vectors $(u^j, v^j)$
\begin{itemize}
    \item Given the $j^{th}$ input vector $u^j$, for each $i\in[n]$, delete the edge $(L_4[i,j], P[i,j])$ if $u_i^j=0$.
    \item Given the $j^{th}$ input vector $v^j$, for each $i\in[n]$, delete the edge $(R_4[i,j], Q[i,j])$ if $v_i^j=0$.
\end{itemize}

Before the $(j+1)^{th}$ input vector arrives, delete all $(L_4[i,j], P[i,j])$ and $(R_4[i,j], Q[i,j])$ edges for all $1 \le i \le n$. It is easy to see that there is a path of length $6\log n + 5 + 2j$ if and only if $u^jMv^j = 1$.

Since the reduction graph consists of $\Theta(n^2)$ nodes and we make $O(n)$ updates and $1$ query for each input pair, we get the same lower bounds as in the fully-dynamic setting. Note that this lower bound can be made to work for the insertions only setting as well by reversing the path $P$ and $Q$.

\subsection{Dynamic Maximum Matching}
\label{ssec:match_partial}

We use the following reduction graph to make our fully dynamic maximum matching lower bounds hold for the partially dynamic setting as well.

\begin{itemize}
	\item A reduction gadget with $n$ subgadgets of size $2n+2$, on a set $L_1 \cup L_2$. The subgadgets are labelled $LE[j]$ for $1 \le j \le n$, and the nodes of subgadget $LE[j]$ are labelled $L_1[j,i]$ or $L_2[j,i]$ for $0 \le i \le n$. The path in each subgadget goes from $L_1[j,0]$ to $L_2[j,n]$.

	\item A reduction gadget with $n$ subgadgets of size $2n+2$, on a set $L_3 \cup L_4$. The subgadgets are labelled $LF[i]$ for $1 \le i \le n$, and the nodes of subgadget $LF[i]$ are labelled $L_3[i,j]$ or $L_4[i,j]$ for $0 \le j \le n$. The path in each subgadget goes from $L_3[i,0]$ to $L_3[i,n]$.

    \item Edges from $L_2[j, i]$ to $L_3[i,j]$ for all $1 \le i, j \le n$.

	\item A reduction gadget with $n$ subgadgets of size $2n+2$, on a set $L_5 \cup L_6$. The subgadgets are labelled $LG[i]$ for $1 \le i \le n$, and the nodes of subgadget $LG[i]$ are labelled $L_5[i,j]$ or $L_6[i,j]$ for $0 \le j \le n$. The path in each subgadget goes from $L_5[i,0]$ to $L_6[i,n]$.

    \item An edge from $L_4[i,n]$ to $L_5[i,0]$ for all $1 \le i \le n$.

    \item A copy of the above structure, with node sets marked $R_i$ instead of $L_i$.

    \item For the matrix $M$, add the edge $(L_6[i,j],R_6[j,i])$ if $M_{ij}=1$.

	\item Deletion of edges for each input pair $(u^j, v^j)$ as detailed next.
\end{itemize}

We perform the following deletions upon the arrival of the $j^{th}$ input vectors $(u^j, v^j)$:

\begin{itemize}
    \item Delete the edge $(L_1[j,0], L_2[j,0])$.
    \item Given the $j^{th}$ input vector $u^j$, for each $i\in[n]$, delete the edge $(L_2[j,i], L_3[i,j])$ if $u_i^j=0$.
    \item Given the $j^{th}$ input vector $v^j$, for each $i\in[n]$, delete the edge $(R_2[j,i], R_3[i,j])$ if $v_i^j=0$.
\end{itemize}

Before the $(j+1)^{th}$ input vector arrives, delete the edges $(L_2[j,0], L_1[j,1])$, $(R_2[j,0], R_1[j,1])$, $(L_2[j,i], L_3[i,j])$, and $(R_2[j,i], L_3[i,j])$ for all $1 \le i \le n$. It is easy to see that there is a matching with only $4j-2$ nodes unmatched if and only if $u^j M v^j = 1$.

Since the reduction graph consists of $\Theta(n^2)$ nodes and we make $O(n)$ updates and $1$ query for each input pair, we get the same lower bounds as in the fully-dynamic setting. This reduction can be made to work for the insertions only setting by reversing the above construction.
\end{document}